\documentclass[12pt, reqno]{amsart}  

\usepackage{main}

\usepackage{setspace}       
\usepackage{lipsum}         
\usepackage{pdflscape}      
\usepackage{longtable}      
\usepackage{booktabs}       
\usepackage{array}          
\usepackage{makecell}       
\usepackage{tabularx}       
\usepackage{multirow}       
\usepackage{adjustbox}      

\graphicspath{{Figures/}}

\usepackage[style=apa, backend=biber, natbib=true, backref=true]{biblatex}
\DefineBibliographyStrings{english}{%
  backrefpage = {},
  backrefpages = {}
}
\renewbibmacro*{pageref}{%
  \iflistundef{pageref}
    {}
    {\printtext[brackets]{\printlist[pageref][-\value{listtotal}]{pageref}}}}

\usepackage{titletoc}

\newgeometry{margin=1.25in}

\title{\Large Beyond the Null Effect: Unmasking the True Impact of Teacher--Child Interaction Quality on Child Outcomes in Early Head Start}

\author{JoonHo Lee and Alison Hooper}

\date{\footnotesize 
January 30, 2026; Lee \& Hooper: College of Education, The University of Alabama; Correspondence: jlee296@ua.edu; Acknowledgments: The authors are grateful for the support of the Administration for Children and Families (ACF) of the United States (U.S.) Department of Health and Human Services (HHS), Grant 90YR0160.
}

\numberwithin{equation}{section}  

\begin{document}


\begin{abstract}
In Early Head Start (EHS), teacher--child interactions are widely believed to shape infant--toddler outcomes, yet large-scale studies often find only modest or null associations. This study addresses four methodological sources of attenuation---item-level measurement error, center-level confounding, teacher- and classroom-level covariate imbalance, and overlooked nonlinearities---to clarify classroom process quality's true influence on child development. Using data from the 2018 wave of the Early Head Start Family and Child Experiences Survey (Baby FACES), we applied a three-level generalized additive latent and mixed model (GALAMM) to distinguish genuine classroom-level variability in process quality, as measured by the Classroom Assessment Scoring System (CLASS) and Quality of Caregiver--Child Interactions for Infants and Toddlers (QCIT), from item-level noise and center-level effects. We then estimated dose--response relationships with children's language and socioemotional outcomes, employing covariate balancing weights and generalized additive models. Results show that nearly half of each item's variance reflects classroom-level processes, with the remainder tied to measurement error or center-wide influences, masking true classroom effects. After correcting for these biases, domain-focused dose-response analyses reveal robust linear associations between cognitive/language supports and children's English communicative skills, while emotional-behavioral supports better predict social-emotional competence. Some domains display plateaus when pushed to extremes, underscoring potential nonlinearities. These findings challenge the ``null effect'' narrative, demonstrating that rigorous methodology can uncover the critical, domain-specific impacts of teacher--child interaction quality, offering clearer guidance for targeted professional development and policy in EHS.
\end{abstract}

\maketitle  

\noindent\textbf{Keywords:} Teacher--Child Interaction Quality; Early Head Start; Measurement Error; Center-Level Confounding; Dose--Response Function; Infants and Toddlers

\pagestyle{plain}  
\newpage



\section{Introduction}
\label{sec:introduction}

High-quality teacher--child interactions are widely viewed as a powerful catalyst for children's early development in cognition, language, and social-emotional skills \citep{Burchinal2018, PiantaHamre2009}. Across various theoretical frameworks, researchers emphasize \emph{proximal processes} \citep{Bronfenbrenner1998}: the repeated, reciprocal exchanges between children and caregivers that drive learning and growth in the earliest years. These interactions are a primary component of classroom process quality and shape policies for Head Start and Early Head Start (EHS), which aim to boost outcomes for low-income children by fostering warm, cognitively rich adult--child interactions \citep{Chaudry2017}.

Despite this strong theoretical and policy rationale, many large-scale studies report small or even null correlations between measured classroom process quality and child outcomes \citep[e.g.,][]{GuerreroRosada2021, Perlman2016, Weiland2013, Xue2022}. These weak associations raise a key puzzle: \emph{If theory so strongly affirms high-quality interactions, why do empirical studies often find only modest links?} We argue that this discrepancy likely stems from methodological artifacts---particularly measurement error in composite scores, unaddressed confounding at higher ecological levels, and the use of purely linear models---rather than a genuine absence of meaningful effects.

To assess these interactions, researchers rely on standardized observational instruments. The most widely used is the Classroom Assessment Scoring System \citep[CLASS;][]{Pianta2008}, utilized in policy initiatives like the Head Start Designation Renewal System (HSDRS). For infant-toddler settings, the CLASS-Toddler \citep{LaParo2012} assesses Emotional and Behavioral Support (EBS) and Engaged Support for Learning (ESL), while the CLASS-Infant \citep{Jamison2014} measures Responsive Caregiving. The Quality of Caregiver-Child Interactions for Infants and Toddlers \citep[QCIT;][]{AtkinsBurnett2015}, designed specifically for birth-to-3 settings, captures support for social-emotional, cognitive, and language/literacy development. Both instruments involve trained observers rating items on multi-point scales, aggregated into domain scores.

Scholars propose that methodological limitations may be obscuring a stronger relationship between process quality and child outcomes \citep{Burchinal2018}. Four key issues have been identified: (a) measurement error from averaging items with varying reliability \citep{Gilbert2024}, (b) center-level confounding in national datasets \citep{RabeHeskethSkrondal2022}, (c) confounding with teacher and structural features \citep{Manning2019}, and (d) overlooked nonlinear patterns including thresholds and plateaus \citep{Hatfield2016}.

These complexities are especially pronounced in EHS settings that serve infants and toddlers. For example, observing interactions with nonverbal or minimally verbal children may pose unique reliability challenges \citep{AtkinsBurnett2015}, and large variations across EHS centers can overshadow classroom-level effects: centers differ in resources, staff stability, and community contexts, making it difficult to isolate the role of day-to-day interactions \citep{Li2013, Vogel2015}.

Furthermore, while both CLASS and QCIT aim to capture high-quality interactions, they may differ in focus, operationalization, or sensitivity to specific behaviors in infant-toddler environments. The CLASS emphasizes broad domains of emotional and instructional support using age-specific versions, whereas the QCIT covers the full birth-to-three spectrum and provides a more granular look at domain-specific supports (e.g., separating cognitive from language/literacy support). Understanding how these widely used tools converge or diverge when observing the same environment is crucial for accurate assessment and interpretation of classroom quality, yet few studies have directly compared them using rigorous psychometric approaches free from measurement error.

To address these methodological challenges, this study analyzes data from the 2018 Baby FACES using a three-pronged approach. First, we implement a three-level generalized additive latent and mixed model \citep[GALAMM;][]{Sorensen2023} to separate genuine classroom-level quality from item-level noise and center-wide influences \citep{RabeHesketh2004}. Second, we apply covariate balancing techniques to ensure fair comparisons across quality levels. Third, we assess dose-response relationships using generalized additive models \citep{Wood2017} to detect potential nonlinear patterns. By systematically addressing measurement error, confounding, and nonlinearity, we aim to clarify whether the modest effect sizes commonly found in EHS stem from genuine weakness in the process quality-outcome link or from methodological limitations.

\section{Literature Review}
\label{sec:literature}

\subsection{Theoretical Foundations Linking Process Quality to Child Development}
\label{subsec:theoretical-foundations}

ECE classroom quality is typically conceptualized as \emph{structural} (regulable aspects like ratios and teacher qualifications) and \emph{process} (dynamic elements like teacher-child interactions and engagement; \citealp{Pianta2016}). Although both matter, process quality is considered more directly related to children's learning and development \citep{Burchinal2016}.

High-quality teacher--child interactions form the backbone of several influential developmental theories, all converging on the idea that infants and toddlers thrive when caregivers provide warm, responsive, and cognitively rich experiences \citep{Burchinal2018, PiantaHamre2009}. \Cref{tab:theoretical-framework} presents an integrative conceptual framework linking these theories to observable classroom practices and their operationalization in CLASS and QCIT.

Attachment theory \citep{Bowlby1969} emphasizes how infants develop security through consistent, sensitive caregiving---captured in Emotional and Behavioral Support (CLASS-Toddler), Responsive Caregiving (CLASS-Infant), and Support for Social-Emotional Development (QCIT). Sociocultural theory \citep{Vygotsky1978} highlights adult scaffolding within the zone of proximal development, central to Engaged Support for Learning (CLASS-Toddler) and Support for Cognitive and Language/Literacy Development (QCIT). Bioecological theory \citep{Bronfenbrenner1998} conceptualizes daily back-and-forth interactions as proximal processes---the primary engines of development that both instruments are designed to capture.

This framework guides our study in two ways. First, it justifies our focus on domain-specific relationships by illustrating how different theoretical traditions emphasize distinct pathways, underpinning our expectation of ``domain-matching'' effects. Second, it informs our comparison of CLASS and QCIT; while both operationalize similar theoretical constructs, CLASS organizes domains more broadly, whereas QCIT separates cognitive from language support, potentially offering different insights into these domain-matching effects.

\subsection{Prior Empirical Findings: Weak or Null Effects}
\label{subsec:prior-findings}

Despite robust theoretical frameworks, empirical studies using standardized observation measures have consistently documented weak or null quality--outcome relationships \citep{GuerreroRosada2021, Harding2025, Howard2024, McDoniel2022, VonSuchodoletz2023, Wang2023, Xue2022}. We next review these findings, first examining the extensive preschool literature and then focusing on infants and toddlers (I/T).

\begin{landscape}
\thispagestyle{empty}
\begin{table}[htbp]
\centering
\caption{Theoretical framework linking theory, practice, measurement, and hypothesized outcomes in infant-toddler classrooms}
\label{tab:theoretical-framework}
\scriptsize
\setlength{\tabcolsep}{5pt}
\renewcommand{\arraystretch}{1.15}
\newcolumntype{S}{>{\hsize=0.6\hsize\raggedright\arraybackslash}X}
\newcolumntype{M}{>{\hsize=1.2\hsize\raggedright\arraybackslash}X}
\begin{tabularx}{\linewidth}{@{} >{\raggedright\arraybackslash}p{2.2cm} S M M >{\raggedright\arraybackslash}p{2.8cm} @{}}
\toprule
\textbf{Theoretical Framework} & \textbf{Core Concepts} & \textbf{Examples of Classroom Practices (Infant/Toddler Context)} & \textbf{Operationalization in Measurement (Most Relevant CLASS \& QCIT Domains)} & \textbf{Primary Child Outcomes Hypothesized} \\
\midrule

\textbf{Attachment Theory} \newline \citep{Ainsworth2015, Bowlby1969} & 
$\bullet$ Secure base \newline
$\bullet$ Sensitive and consistent caregiving \newline
$\bullet$ Emotional security and co-regulation & 
$\bullet$ \textbf{Primary Caregiving \& Continuity of Care:} Maintaining long-term teacher-child relationships \newline
$\bullet$ \textbf{Responsive Routines:} Executing daily routines with sensitivity to cues and distress \newline
$\bullet$ \textbf{Emotional Availability:} Providing comfort, warm affect, and helping children manage emotions & 
\textbf{CLASS-Toddler:} Emotional and Behavioral Support (EBS) \newline
\textbf{CLASS-Infant:} Responsive Caregiving (RC) \newline
\textbf{QCIT:} Support for Social-Emotional Development (SE) \newline
\emph{(Indicators: Sensitivity, Positive Regard, Emotional Responsiveness)} & 
$\bullet$ Increased Social-Emotional Competence (BITSEA) \newline
$\bullet$ Decreased Problem Behaviors (BITSEA) \\
\midrule

\textbf{Sociocultural Theory} \newline \citep{Vygotsky1978} & 
$\bullet$ Scaffolding (ZPD) \newline
$\bullet$ Language as a cognitive tool \newline
$\bullet$ Social instruction & 
$\bullet$ \textbf{Language Modeling:} Frequent verbal interactions, including narrating actions and extending utterances \newline
$\bullet$ \textbf{Scaffolding Strategies:} Using open-ended questions, feedback loops, and gestures to extend learning \newline
$\bullet$ \textbf{Collaborative Learning:} Encouraging peer interaction & 
\textbf{CLASS-Toddler:} Engaged Support for Learning (ESL) \newline
\textbf{QCIT:} Support for Cognitive Development (Cog); Support for Language and Literacy Development (LL) \newline
\emph{(Indicators: Language Modeling, Concept Development, Quality of Feedback)} & 
$\bullet$ Increased Language and Communication Skills (CDI IRT) \\
\midrule

\textbf{Bioecological Theory (Proximal Processes)} \newline \citep{Bronfenbrenner1998} & 
$\bullet$ Proximal processes \newline
$\bullet$ Reciprocal interactions \newline
$\bullet$ Microsystem dynamics & 
$\bullet$ \textbf{Serve-and-Return:} Daily back-and-forth reciprocal exchanges \newline
$\bullet$ \textbf{Joint Attention:} Shared focus during play, routines, or exploration \newline
$\bullet$ \textbf{Adaptive Interactions:} Adjusting interactions to children's evolving interests and developmental levels & 
\textbf{All domains across CLASS and QCIT} \newline
\emph{(Classroom-level proximal processes embedded in 3-level structure: items $\to$ classrooms $\to$ centers)} & 
$\bullet$ Cumulative effects across domains \newline
$\bullet$ Moderation of quality-outcome associations \\
\bottomrule
\end{tabularx}

\vspace{0.3em}
\begin{minipage}{\linewidth}
\scriptsize
\emph{Note.} CLASS = Classroom Assessment Scoring System; QCIT = Quality of Caregiver-Child Interactions for Infants and Toddlers; ZPD = Zone of Proximal Development; BITSEA = Brief Infant Toddler Social Emotional Assessment; CDI IRT = MacArthur-Bates Communicative Development Inventories (Item Response Theory scaled).
\end{minipage}
\end{table}
\end{landscape}

\subsubsection{Findings for Preschool-Aged Children}
\label{subsubsec:findings-preschool}

We begin with preschool classrooms given the substantially larger evidence base \citep{Banghart2020, Burchinal2016, Mashburn2008}. However, I/T environments differ structurally---smaller groups, lower ratios, and predominantly routine-embedded, dyadic, often nonverbal interactions \citep{Barros2016, Cadima2022}---compared to preschool's more structured, group-based instruction \citep{Phillipsen1997}. These differences affect how process quality is operationalized and measured \citep{LaParo2014}.

A persistent pattern is that higher process quality---often assessed via CLASS Pre-K---shows only modest or null associations with developmental gains \citep{Burchinal2018}. Meta-analyses consistently report small magnitudes, with correlations generally below $r = .15$ and many failing to reach significance \citep{Egert2018, Perlman2016, Ulferts2019}. Large-scale investigations document similarly weak associations \citep{GuerreroRosada2021, Mashburn2008}. Although nonlinear models sometimes indicate robust gains in extremely high-quality classrooms \citep{Burchinal2010, Harding2025, Hatfield2016}, the overall portrait is that observed interactions show limited predictiveness, much smaller than theory suggests.

\subsubsection{Findings for Infant and Toddler Populations}
\label{subsubsec:findings-infants}

Evidence linking classroom process quality to outcomes among infants and toddlers is more limited and often similarly weak. Compared to the research on preschool-aged children, substantially fewer large-scale studies focus specifically on infants in center-based care \citep{Banghart2020}. The availability of robust measurement tools for assessing teacher-child interactions in this age group remains scarce, and existing instruments may be limited in their focus, applicability, reliability, or validity \citep{Nguyen2023}.

Studies employing observational measures commonly indicate limited, inconsistent, and often weak associations between observed process quality and child developmental outcomes \citep{Bandel2014, Xue2022}. This pattern is particularly evident within Early Head Start (EHS) contexts. Analysis of the Baby FACES 2018 study revealed minimal associations between observed quality measures and child outcomes in EHS classrooms \citep{Xue2022}. In infant classrooms, neither threshold analyses nor linear associations revealed significant links between quality measures and child outcomes. In toddler classrooms, only two associations emerged: higher scores in CLASS-Toddler's Emotional and Behavioral Support domain or QCIT's Support for Language and Literacy Development domain were associated with lower behavior problems. These findings align with previous research showing that CLASS-Toddler associations varied by child age, with minimal significant associations for most domains and, in some cases, paradoxical negative relationships \citep{Bandel2014}.

Although targeted interventions in infant-toddler classrooms (e.g., intensive coaching on responsive caregiving) have occasionally produced small-to-moderate gains in socioemotional or language development \citep{Chen2017, Moreno2015}, such results are not consistently replicated, and broader program evaluations often fail to detect more than trivial effects \citep{McKelvey2014}. Scholars attribute these null or weak findings to both practical and methodological factors, including the difficulty of capturing nuanced infant-caregiver interactions in standard observation protocols and restricted quality ranges in many samples \citep{Bandel2014, Banghart2020}. Nonetheless, the overarching conclusion resonates with the preschool literature: while theory posits a crucial role for day-to-day responsive interactions, empirical linkages between classroom quality ratings and developmental outcomes remain modest at best \citep{Aikens2015, Banghart2020, Xue2022}.

\subsection{Explaining the Discrepancy: Four Methodological Limitations}
\label{subsec:methodological-limitations}

Researchers have proposed several explanations for these weak or null associations: range restriction limiting variability at quality extremes \citep{Burchinal2018, Chaudry2017, Gordon2020, Nguyen2023, Zaslow2016}, single brief observations failing to capture day-to-day practices \citep{BratschHines2020, Hatfield2016, WeilandRosada2022}, and global measures obscuring domain-specific interactions more directly linked to outcomes \citep{Brunsek2017, Justice2018}. While these factors undoubtedly contribute, we focus on four methodological artifacts addressable through advanced modeling: item-level measurement error, center-level confounding, covariate imbalance, and overlooked nonlinear patterns.

\subsubsection{Measurement Error at the Item Level}
\label{subsubsec:measurement-error}

Classroom observation tools, such as the widely used CLASS and the more recently developed QCIT, typically involve summing or averaging multiple item ratings to produce domain scores, which are then regressed on child outcomes \citep[e.g.,][]{Aikens2015, Downer2012, Mashburn2008, McDoniel2022, Stephens2023, Xue2022}. While these domain composites are convenient and intuitively straightforward, they can embed substantial measurement error from item-level variation. In particular, each item may differ in how reliably or strongly it loads onto the latent process quality factor, yet standard practice weights all items equally. This mismatch inflates the error variance of the summed (or averaged) domain score, and when that noisy domain score is used as a predictor in regression, the resulting coefficients can be systematically attenuated, biased toward zero \citep{Pohl2016, Sengewald2019}. This means that the observed variation in the predictor partly reflects random item noise rather than true latent quality \citep{Frost2000, Gilbert2024, Hutcheon2010}. Even when an observation tool has acceptable overall reliability, differential item performance can reduce effective reliability, leading to 10--30\% smaller effect sizes or even null results in regression analyses \citep{Mashburn2008, Perlman2016}.

Simple composites remain valuable for professional development or program monitoring given their transparency and ease of interpretation \citep{Andersen2022}. However, when estimating the true magnitude of quality-outcome associations, latent variable models that account for differential item reliability can reveal associations 15--30\% stronger than simple averages \citep{Bihler2018, Levickis2024}. Our analysis demonstrates how this approach yields substantively different conclusions about teacher-child interactions in infant-toddler classrooms.

\subsubsection{Center- and State-Level Confounding}
\label{subsubsec:center-confounding}

Another explanation for weak findings lies in center-level confounding, particularly salient in nationally representative datasets. ECE centers vary dramatically in structural features, organizational resources, and community contexts \citep{Karoly2013}, differences that correlate with both process quality and child outcomes, thereby confounding the classroom-level relationship of interest.

Centers in higher-resourced communities often benefit from multiple advantages simultaneously---more qualified teachers, better facilities, and families with fewer economic stressors \citep{BurchinalNelson2000, Coley2014, Harding2025}. These center-level characteristics create clustering where both quality and outcomes tend to be higher due to shared center-wide advantages rather than causal classroom-level connections. Conversely, under-resourced centers face compounding challenges that suppress both quality ratings and outcomes \citep{Cloney2016, Levickis2024}.

The statistical solution involves within-center comparisons through center-level fixed effects or correlated random-effects models using the Mundlak specification \citep{Mundlak1978, RabeHeskethSkrondal2022}. Without this adjustment, a regression might effectively compare a high-quality classroom in suburban California with one in rural Alabama, attributing to classroom interactions what may actually reflect broader ecological differences.

This confounding becomes particularly problematic in national datasets lacking state identifiers. In Baby FACES, center-level confounding cannot be separated from state-level heterogeneity in licensing requirements, quality standards, and workforce policies \citep{Xue2022}. Without disentangling these nested sources of variation, studies risk misattributing ecological factors to classroom interactions.

\subsubsection{Confounding by Teacher Qualifications and Classroom Structural Features}
\label{subsubsec:teacher-confounding}

A third challenge arises when teacher characteristics and structural features confound the process quality--outcome linkage. Teachers with higher education or specialized training may display more sensitive, stimulating interactions, which---rather than interactions alone---could drive developmental gains \citep{Manning2019, Schaack2017}. Similarly, lower child-teacher ratios can support more individualized attention, elevating both quality metrics and outcomes \citep{Phillipsen1997, Xue2022}. Failing to adjust for these factors might produce spurious correlations, yet over-controlling can understate interaction effects if structural factors and process quality are deeply intertwined \citep{SolidayHong2019}. Researchers advocate applying robust causal inference techniques---inverse probability weighting or propensity score matching---to isolate interaction quality's unique impact \citep{Ruzek2014, Weiland2018}, though large-scale datasets often lack the granular covariate information needed for such adjustments.

\subsubsection{Missing Nonlinear, Threshold, or Plateau Effects}
\label{subsubsec:nonlinearity}

Finally, if the quality--outcome relationship follows a nonlinear pattern, conventional linear models will obscure meaningful effects \citep{Burchinal2010, Harding2025, Hatfield2016}. Several considerations suggest such nonlinearity in infant-toddler settings. Developmentally, infants may require a minimum threshold of emotional security before cognitive stimulation becomes meaningful \citep{Bernier2012, Landry2006}, while at very high quality levels, ceiling effects may limit detectable gains \citep{Watts2021}.

The nature of nonlinearity may also differ across domains. Social-emotional competence might respond linearly to emotional support, while language development could show threshold effects---minimal gains below a certain level followed by significant improvements once crossed \citep{Burchinal2016}. Behavioral outcomes may require surpassing high quality thresholds; \citet{Watts2021} found that for children in poverty, only those in classrooms with the highest organization levels showed substantial behavioral improvements---a pattern also documented by \citet{Burchinal2010} and \citet{Harding2025}. Traditional linear regressions cannot capture these varied functional forms, potentially missing strong effects in certain quality ranges. Although piecewise or spline models can approximate these patterns \citep{GuerreroRosada2021, Hatfield2016}, such techniques remain underutilized.

\subsubsection{Compounding Effects and Integrated Solutions}
\label{subsubsec:compounding}

Critically, these four limitations rarely operate in isolation. Measurement error reduces power to detect nonlinear patterns; center-level confounding can mask domain-specific relationships; teacher qualifications may moderate quality-outcome associations. When these limitations compound---as they typically do in large-scale observational studies---resulting attenuation can reduce observed associations to near zero.

Our integrated approach addresses these challenges simultaneously. Multilevel latent variable modeling extracts true quality signals from noisy item-level data while accounting for center-level confounding \citep{SkrondalRabeHesketh2004}. Covariate balancing creates fair comparisons across quality levels. Testing both linear and nonlinear specifications allows the data to reveal the true functional form. This comprehensive strategy offers the best opportunity to unmask genuine impacts hidden by methodological artifacts in previous research.

\subsection{Research Questions}
\label{subsec:research-questions}

Building on insights about how methodological limitations may conceal the true influence of classroom quality, we apply a three-level GALAMM measurement model and weighted dose--response analyses to address these challenges, yielding two primary research questions:

\begin{enumerate}
    \item How can advanced measurement modeling improve the precision of EHS classroom process quality measurement?
    \item What is the nature of the dose-response relationship between classroom process quality and child developmental outcomes when accounting for potential confounding and nonlinearity?
\end{enumerate}

\paragraph{RQ1: Improving the Measurement of EHS Classroom Process Quality.}

\begin{itemize}
    \item RQ1-A. How much true classroom-level heterogeneity in process quality exists once item-level error and center-level variation are partitioned out?
    \item RQ1-B. Which individual items show the strongest versus weakest loadings on their respective latent constructs?
    \item RQ1-C. How do the three QCIT-based factors correlate with the three CLASS-based factors when measuring the same classrooms?
\end{itemize}

\paragraph{RQ2: Examining Dose--Response Relationships with Child Outcomes.}

\begin{itemize}
    \item RQ2-A. Which teacher- or classroom-level characteristics confound the quality--outcome relationship, and how can covariate balancing isolate process quality's unique effect?
    \item RQ2-B. Do we observe linear or nonlinear dose--response patterns, including threshold or plateau effects?
\end{itemize}

\section{Data and Methods}
\label{sec:methods}

\subsection{Sample}
\label{subsec:sample}

We draw on Baby FACES 2018, a nationally representative dataset of Early Head Start programs \citep{Vogel2015}. This wave focused extensively on classroom context, offering robust observational measures of teacher--child interactions. We excluded home-based EHS components, which utilize distinct service models and observation protocols \citep{Raikes2002}, as well as classrooms lacking sufficient item-level data on QCIT or CLASS. Our final sample for RQ1 consists of 855 classrooms nested within 468 EHS centers. For RQ2, we analyzed 2,301 children matched to classrooms with available observational data; robustness checks using parent-reported outcomes yielded samples of 1,874--1,986 children (see \cref{app:sample} in the Online Supplemental Materials [OSM] for detailed sample derivation).

\subsection{Measures}
\label{subsec:measures}

\subsubsection{Classroom Process Quality}
\label{subsubsec:process-quality}

Baby FACES 2018 measured teacher--child interaction quality using two observation tools: the Quality of Caregiver--Child Interactions for Infants and Toddlers \citep[QCIT;][]{AtkinsBurnett2015} and the Classroom Assessment Scoring System (CLASS), comprising CLASS--Infant \citep{Jamison2014} or CLASS--Toddler \citep{LaParo2012}. During the same observation window, one trained observer administered the age-appropriate CLASS version while another conducted the QCIT, enabling direct examination of instrument convergence (RQ1-C) while controlling for contextual factors.

QCIT captures three domains: (1) Support for social-emotional development, (2) Support for cognitive development, and (3) Support for language and literacy development. \citet{Nguyen2023} demonstrated strong psychometric properties and measurement invariance across infant, toddler, and mixed-age settings. CLASS--Toddler assesses (4) Emotional and behavioral support and (5) Engaged support for learning; CLASS--Infant offers a single domain, (6) Responsive caregiving. These six latent constructs are each assessed via multiple observed items (\Cref{fig:variance-decomposition}). Of the 855 classrooms, 707 serve primarily toddlers (CLASS--Toddler) and 148 serve primarily infants (CLASS--Infant). We address these age-specific missing item blocks using a long-format generalized latent variable modeling approach assuming missing at random (see \cref{app:galamm}).

\subsubsection{Child Developmental Outcomes}
\label{subsubsec:outcomes}

We computed three teacher-reported outcome measures aggregated at the classroom level. Language and communication skills were assessed using IRT-scaled $T$-scores from the MacArthur--Bates Communicative Development Inventories \citep[CDI;][]{Fenson2000}. Social-emotional outcomes were measured via the Brief Infant Toddler Social Emotional Assessment \citep[BITSEA;][]{BriggsGowan2006}: the Competence subscale captures socioemotional strengths (higher scores indicate greater competence), while the Problem subscale assesses behavioral difficulties (higher scores indicate more challenges). All outcomes were group-mean-centered at the center level to isolate within-center variation (see \cref{app:dose-response} in the OSM).

To examine robustness to potential teacher-reporting bias, we analyzed parallel parent-reported versions of all three measures \citep{Magro2024, Perlman2016}, applying the same centering procedure (see \cref{app:robustness} in the OSM).

\subsubsection{Teacher- or Classroom-Level Covariates}
\label{subsubsec:covariates}

Finally, we compiled a range of teacher- and classroom-level characteristics to account for potential confounding in the link between classroom process quality and child outcomes. These include teacher demographics (e.g., race/ethnicity), qualifications (e.g., a bachelor's degree in ECE), and classroom structural features (e.g., child--adult ratio). We also considered teachers' psychological factors, such as depressive symptoms \citep[CESD-R;][]{Eaton2004}, caregiving beliefs, and job satisfaction. In total, we constructed 26 classroom-level covariates and used them for balancing (see \cref{app:dose-response} in the OSM).

\subsection{Analytic Strategy}
\label{subsec:analytic-strategy}

We fit a three-level GALAMM measurement model separating true classroom-level process quality (Level 2) from item-level measurement error (Level 1) and center-level characteristics (Level 3), yielding empirical Bayes (EB) predictions for six latent factors. We then apply covariate balancing to ensure EB-based quality measures are uncorrelated with teacher- and classroom-level covariates, and estimate dose--response relationships using both linear and generalized additive models (GAMs). Appendices B--E provide technical details.

\subsubsection{RQ1: Three-Level GALAMM Measurement Model}
\label{subsubsec:galamm}

Let $y_{i,j,k}$ denote the observed QCIT or CLASS item $i$ for classroom $j$ in center $k$. We assume:
\begin{equation}
y_{i,j,k} \mid \eta_{1,j}, \ldots, \eta_{6,j}, \alpha_k \sim \mathcal{N}\left(\beta_i + \sum_{f=1}^{6} \lambda_{i,f} \eta_{f,j} + \alpha_k, \sigma_\varepsilon^2\right),
\label{eq:galamm}
\end{equation}
where $\eta_{f,j}$ are six classroom-level latent factors, $\lambda_{i,f}$ indicates the factor loading for item $i$ on factor $f$. The random intercept $\alpha_k$ captures a center-wide ``shift'' in observed responses, $\beta_i$ is an item-specific intercept, and $\sigma_\varepsilon^2$ is the residual variance at the item level.

This specification addresses all three RQ1 sub-questions. RQ1-A: Variance components decompose total variation into Level 1 (item error), Level 2 (classroom factors), and Level 3 (center effects), revealing genuine between-classroom heterogeneity after removing measurement noise and center-level confounding (see \cref{app:icc-decomposition}). RQ1-B: Factor loadings $\lambda_{i,f}$ indicate which items are reliable indicators of each domain. RQ1-C: The six-factor covariance matrix $\boldsymbol{\Psi}^{(2)}$ reveals whether QCIT and CLASS measure overlapping or distinct dimensions.

\subsubsection{RQ2: Estimating the Weighted Dose--Response Curves}
\label{subsubsec:dose-response}

Let $Z_j$ denote the \emph{center-mean-centered} outcome for classroom $j$:
\begin{equation}
Z_j = \gamma_0 + \gamma_1 \widehat{\eta}_j^{EB} + e_j,
\label{eq:dose-response}
\end{equation}
where $\widehat{\eta}_j^{EB}$ is the EB posterior mean of the latent factors from the GALAMM. Using EB shrinkage estimates reduces attenuation bias from measurement error \citep[see \cref{app:eb-predictions} in the OSM]{Walters2024}.

For RQ2-A, we apply entropy balancing for continuous treatments \citep[EBCT;][]{Tubbicke2022} to produce weights ensuring the dose variable is uncorrelated with 26 classroom-level covariates (see \cref{app:dose-response} in the OSM). For RQ2-B, we fit both weighted linear models and GAMs with smooth functions of $\widehat{\eta}_j^{EB}$ \citep{Wood2017}. Effective degrees of freedom (edf) near 1 indicate linearity; edf substantially greater than 1 indicates nonlinearity.

\section{Results}
\label{sec:results}

\subsection{RQ1: Measurement of Classroom Process Quality}
\label{subsec:rq1-results}

\subsubsection{ICC-Style Item-Level Variance Decomposition (RQ1-A)}
\label{subsubsec:variance-decomposition}

\Cref{fig:variance-decomposition} displays results from our ICC-style variance decomposition, which partitions each item's total variance into three hierarchical levels---Level 1 (measurement error/residual), Level 2 (classroom latent factors), and Level 3 (center random effects). Overall, approximately 51.2\% of item-level variance can be attributed to classroom-level latent factors (Level 2). In contrast, 34.3\% reflects Level-1 measurement error, and 14.5\% stems from center-level unobserved heterogeneity. Thus, only about half of the total variance captures \emph{true} differences in teacher--child interaction quality across classrooms; the other half is linked to measurement noise or center-level effects. Relying on simple sum or average scores without separating these components risks attenuation bias from item-level error and bias from center-level confounding.

This decomposition also highlights items with especially large measurement error, making them less reliable indicators of classroom quality. For instance, several QCIT items (e.g., ``Supporting peer interaction'') show higher proportions of Level-1 error, suggesting limited utility for building composite classroom-quality indices.

\begin{figure}[htbp]
\centering
\includegraphics[width=0.85\textwidth]{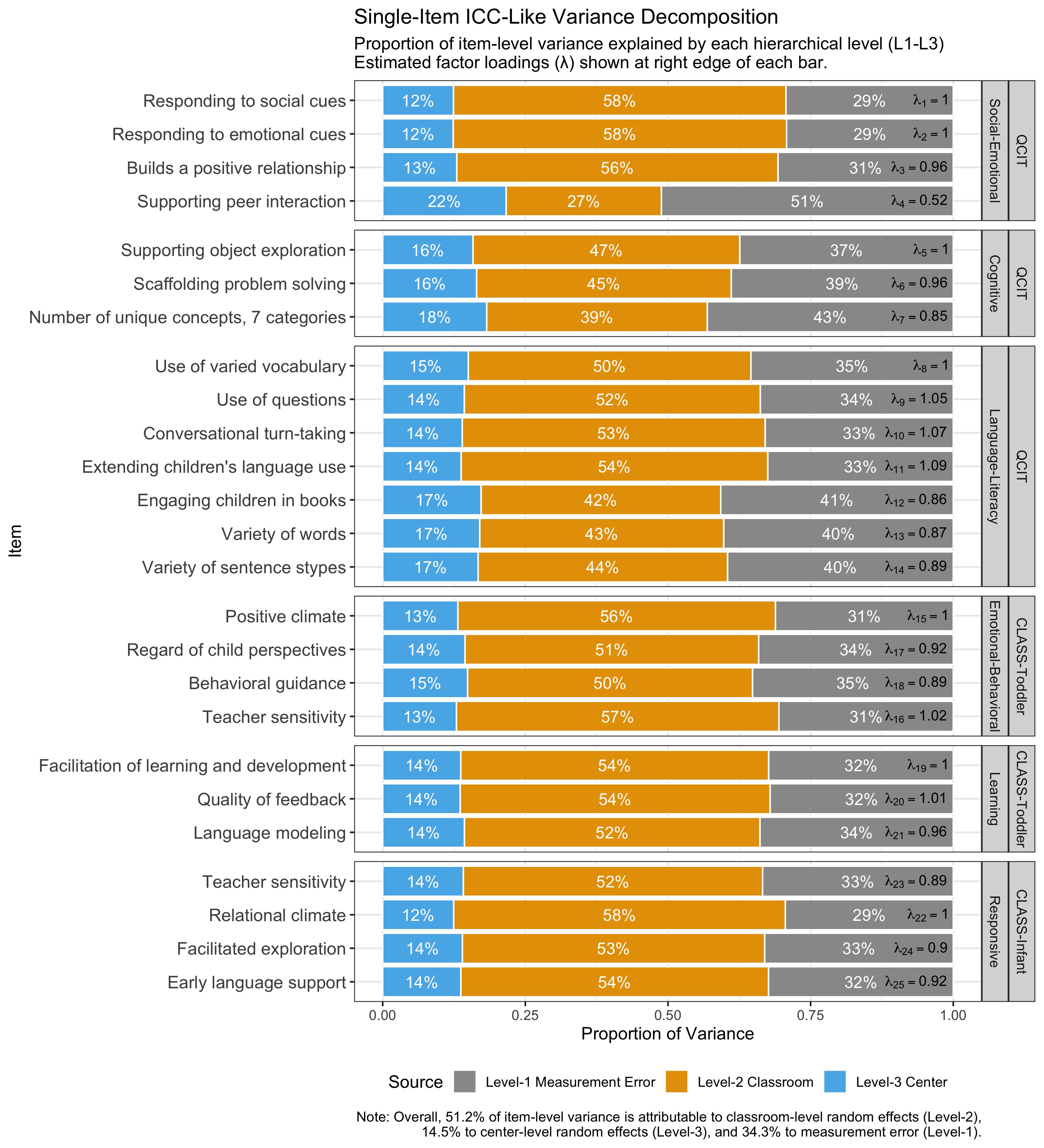}
\caption{Item-level ICC-style variance decomposition and factor loadings}
\label{fig:variance-decomposition}
\begin{minipage}{0.95\textwidth}
\small
\emph{Note.} Each bar depicts how an individual item's total variance is distributed among Level 1 (measurement error/residual), Level 2 (classroom latent factors), and Level 3 (center random effects). See \cref{app:icc-decomposition} in the OSM for technical details.
\end{minipage}
\end{figure}

\subsubsection{Factor Loadings and Implications for Reliability (RQ1-B)}
\label{subsubsec:factor-loadings}

\Cref{fig:variance-decomposition} also displays factor loadings ($\lambda_i$) indicating how strongly each item contributes to its respective latent factor. Items with smaller loadings exhibit proportionally larger Level-1 error and center-level variance, weakening their ability to discriminate classroom-level differences. Several QCIT items show relatively small loadings: ``Supporting peer interaction'' (QCIT Social-Emotional, $\lambda = 0.52$, with 51\% of variance at Level 1 and 22\% at Level 3), ``Number of unique concepts'' (QCIT Cognitive), and three Language/Literacy items (``Engaging children in books,'' ``Variety of words,'' ``Variety of sentence types''). These Cognitive and Language/Literacy items display modest loadings ($\lambda \approx 0.85$--$0.89$). By contrast, CLASS-based factors have more uniformly moderate-to-high loadings, with fewer items prone to attenuate the signal. These findings suggest QCIT items vary more widely in reliability. Future measurement enhancements might focus on revising low-loading items or providing additional rater training to reduce measurement noise.

\subsubsection{Correlations Among Classroom-Level Latent Factors (RQ1-C)}
\label{subsubsec:factor-correlations}

\Cref{tab:factor-correlations} presents estimated classroom-level correlations among the latent factors. After partitioning out Level-1 measurement error and Level-3 center effects, these represent ``true'' classroom-level relationships. Within QCIT, Language-Literacy shows strong associations with both Social-Emotional ($r = 0.76$) and Cognitive ($r = 0.74$), while Social-Emotional and Cognitive correlate more moderately ($r = 0.56$). Within CLASS-Toddler, Emotional-Behavioral and Learning correlate at $r = 0.65$. However, cross-instrument correlations between thematically similar domains are notably lower: CLASS-T Emotional-Behavioral and QCIT Social-Emotional share only $r = 0.38$, while CLASS-T Learning correlates just 0.04 with QCIT Cognitive and 0.22 with QCIT Language-Literacy.

For CLASS-Infant, the single Responsive Caregiving factor correlates most strongly with QCIT Social-Emotional ($r = 0.45$) and QCIT Language-Literacy ($r = 0.33$), but weakly with QCIT Cognitive ($r = 0.15$). Because infant and toddler classrooms do not overlap, cross-instrument correlations between CLASS-Infant and CLASS-Toddler require careful interpretation (\cref{app:galamm} in the OSM).

These patterns indicate that within-instrument correlations exceed those between thematically similar domains across instruments, suggesting each tool captures somewhat different facets of teacher--child interaction.

\begin{table}[htbp]
\centering
\caption{Estimated correlations among classroom-level latent factors (Level-2)}
\label{tab:factor-correlations}
\small
\begin{tabular}{@{} l c c c c c c @{}}
\toprule
Classroom Latent Factor & 1 & 2 & 3 & 4 & 5 & 6 \\
\midrule
1. QCIT Social-Emotional & --- & & & & & \\
2. QCIT Cognitive & .56 & --- & & & & \\
3. QCIT Language-Literacy & .76 & .74 & --- & & & \\
4. CLASS-T Emotional-Behavioral & .38 & .08 & .23 & --- & & \\
5. CLASS-T Learning & .29 & .04 & .22 & .65 & --- & \\
6. CLASS-I Responsive & .45 & .15 & .33 & .19 & .16 & --- \\
\bottomrule
\end{tabular}

\vspace{0.5em}
\begin{minipage}{0.95\linewidth}
\footnotesize
\emph{Note.} $N = 855$ classrooms in 468 EHS centers. Correlations represent classroom-level associations after partitioning out item-level error (Level-1) and center effects (Level-3). QCIT was measured in all classrooms; CLASS-T in 707 toddler classrooms; CLASS-I in 148 infant classrooms. Due to the age-specific design of CLASS instruments, correlations between CLASS-T and CLASS-I factors are based on cross-level covariance estimates rather than within-classroom observations.
\end{minipage}
\end{table}

\subsection{RQ2: Examining Dose--Response Relationships with Child Outcomes}
\label{subsec:rq2-results}

\subsubsection{Covariate Balancing for the ``Dose'' of Classroom Process Quality (RQ2-A)}
\label{subsubsec:covariate-balance}

\Cref{fig:covariate-balance} displays correlations between our designated ``dose''---the QCIT Cognitive latent factor---and teacher- and classroom-level covariates, before and after weighting. Panel (A) employs generalized propensity scores via generalized boosted models \citep[GBM;][]{McCaffrey2004}; Panel (B) uses entropy balancing, which achieves exact moment balance for continuous treatments \citep{Tubbicke2022}.

Most covariates have small initial correlations (under $|0.10|$) with the QCIT Cognitive factor, indicating modest potential for Level-2 confounding once item-level and center-level influences are accounted for. Nevertheless, a few characteristics, including teacher race and job satisfaction, display small but non-trivial associations. While GPS-based weighting reduces some correlations, entropy balancing more effectively pushes correlations toward zero, creating a pseudo-sample in which measured covariates no longer confound the latent ``dose.'' We therefore adopted entropy balancing for all subsequent dose--response analyses (see \cref{app:dose-response} in the OSM for methodological details).

\begin{figure}[htbp]
\centering
\includegraphics[width=\textwidth]{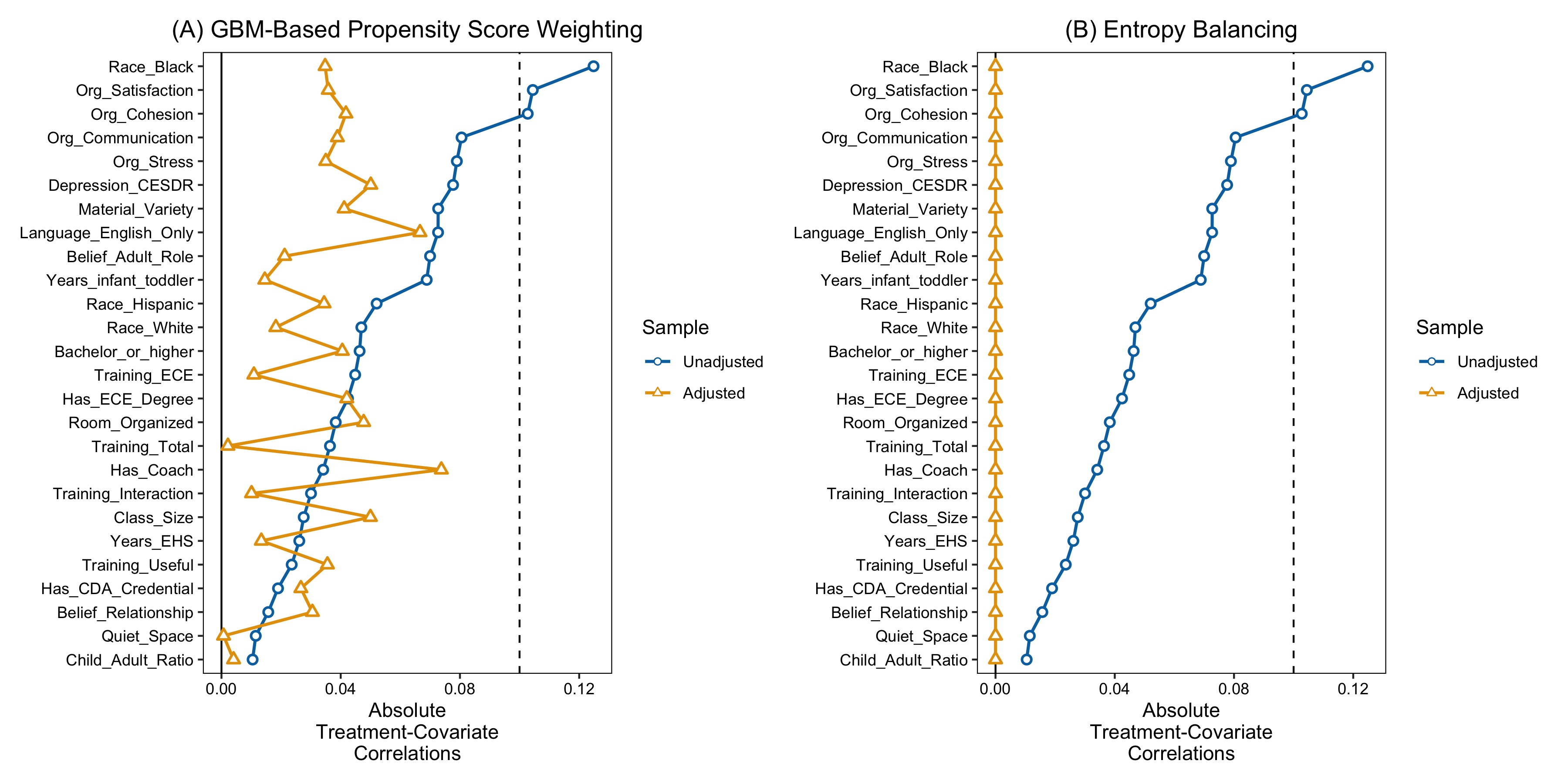}
\caption{Estimated absolute correlations between QCIT Cognitive ``dose'' and covariates before and after weighting}
\label{fig:covariate-balance}
\begin{minipage}{0.95\textwidth}
\small
\emph{Note.} Panel (A): generalized propensity score weighting via boosted models \citep{McCaffrey2004}. Panel (B): entropy balancing \citep{Tubbicke2022}. Blue = unweighted; yellow = weighted.
\end{minipage}
\end{figure}

\subsubsection{Linear vs.\ Nonlinear Dose--Response Patterns (RQ2-B)}
\label{subsubsec:dose-response-patterns}

\Cref{tab:dose-response} presents weighted linear regression and GAM estimates examining how classroom process quality relates to three child outcomes: CDI IRT scores (English communicative skills), BITSEA Competence (social-emotional functioning), and BITSEA Problem (behavioral difficulties). \Cref{fig:dose-response-cdi,fig:dose-response-bitsea} depict dose--response curves for CDI IRT and BITSEA Competence, where the latent ``dose'' ($x$-axis) is plotted against child outcomes ($y$-axis) in the covariate-balanced sample. Blue solid lines represent weighted linear fits; red dashed lines represent GAM fits; black dots mark decile means.

\paragraph{Effects on English Communicative Skills.} For CDI IRT, both QCIT Language-Literacy and QCIT Cognitive significantly predict higher language scores ($p < 0.05$), as does CLASS-T Learning. By contrast, QCIT Social-Emotional and CLASS-T Emotional-Behavioral show no significant associations, suggesting a domain-matching principle whereby language or cognitive scaffolding exerts stronger effects on communicative growth than socioemotional support alone. The QCIT Language-Literacy factor yields the highest coefficient (1.28, $p < 0.01$), implying robust gains in English vocabulary for classrooms supporting language-based interactions.

\paragraph{Effects on Social-Emotional Abilities.} For BITSEA Competence, the most notable effect emerges from CLASS-T Emotional-Behavioral (coef = 0.21, $p = 0.04$), indicating that classrooms with warm interactions and proactive behavior guidance promote stronger social-emotional skills. Domain-specific language or cognitive supports show no significant linear associations, though some nonlinear effects emerge: CLASS-T Learning becomes significant under the GAM (edf = 4.51, $p = 0.02$). As \Cref{fig:dose-response-bitsea} illustrates, BITSEA Competence peaks around moderate levels of learning support, with a plateau at higher extremes.

\begin{table}[htbp]
\centering
\caption{Weighted linear and GAM dose-response estimates for teacher-reported child outcomes across QCIT and CLASS domains}
\label{tab:dose-response}
\small
\setlength{\tabcolsep}{4pt}
\begin{tabular}{@{} l l r@{\hspace{3pt}}c@{\hspace{3pt}}l c c c @{}}
\toprule
& & \multicolumn{3}{c}{Linear Model} & \multicolumn{2}{c}{GAM} \\
\cmidrule(lr){3-5} \cmidrule(lr){6-7}
Response & Dose & Est. & & $p$ & edf & $p$ \\
\midrule
\multirow{6}{*}{\shortstack[l]{English CDI IRT\\(Teacher-reported)}}
 & QCIT Social-Emotional & 0.12 & (0.29) & 0.68 & 1.00 & 0.69 \\
 & \textbf{QCIT Cognitive} & \textbf{0.90} & \textbf{(0.39)} & \textbf{0.02} & \textbf{1.00} & \textbf{0.02} \\
 & \textbf{QCIT Language-Literacy} & \textbf{1.28} & \textbf{(0.34)} & \textbf{$<$0.01} & \textbf{1.14} & \textbf{$<$0.01} \\
 & CLASS-T Emotional-Behavioral & 0.21 & (0.30) & 0.49 & 1.44 & 0.58 \\
 & \textbf{CLASS-T Learning} & \textbf{0.72} & \textbf{(0.32)} & \textbf{0.02} & \textbf{1.00} & \textbf{0.03} \\
 & CLASS-I Responsive & 0.69 & (0.68) & 0.31 & 1.23 & 0.42 \\
\midrule
\multirow{6}{*}{\shortstack[l]{BITSEA Competence\\(Teacher-reported)}}
 & QCIT Social-Emotional & 0.09 & (0.09) & 0.31 & 1.54 & 0.40 \\
 & QCIT Cognitive & 0.08 & (0.13) & 0.54 & 2.16 & 0.28 \\
 & QCIT Language-Literacy & 0.21 & (0.11) & 0.06 & 1.93 & 0.10 \\
 & \textbf{CLASS-T Emotional-Behavioral} & \textbf{0.21} & \textbf{(0.10)} & \textbf{0.04} & \textbf{1.00} & \textbf{0.04} \\
 & CLASS-T Learning & 0.20 & (0.11) & 0.07 & 4.51 & 0.02 \\
 & CLASS-I Responsive & 0.16 & (0.25) & 0.52 & 1.00 & 0.52 \\
\midrule
\multirow{6}{*}{\shortstack[l]{BITSEA Problem\\(Teacher-reported)}}
 & QCIT Social-Emotional & $-$0.04 & (0.15) & 0.77 & 1.00 & 0.77 \\
 & QCIT Cognitive & 0.05 & (0.21) & 0.80 & 1.33 & 0.82 \\
 & QCIT Language-Literacy & $-$0.22 & (0.18) & 0.22 & 1.71 & 0.33 \\
 & CLASS-T Emotional-Behavioral & $-$0.24 & (0.17) & 0.17 & 1.00 & 0.18 \\
 & CLASS-T Learning & 0.00 & (0.19) & 0.99 & 1.00 & 1.00 \\
 & CLASS-I Responsive & $-$0.16 & (0.40) & 0.70 & 1.00 & 0.70 \\
\bottomrule
\end{tabular}

\vspace{0.5em}
\begin{minipage}{0.95\linewidth}
\footnotesize
\emph{Note.} Estimates from entropy-balanced weighted regressions adjusting for 26 covariates. ``Dose'' = empirical Bayes estimates of latent process quality from the three-level GALAMM. Outcomes are center-mean-centered. edf = effective degrees of freedom (edf $\approx 1$ indicates linearity; edf $> 1$ indicates nonlinearity). Significant effects ($p < 0.05$) in bold.
\end{minipage}
\end{table}

\begin{figure}[htbp]
\centering
\includegraphics[width=0.75\textwidth]{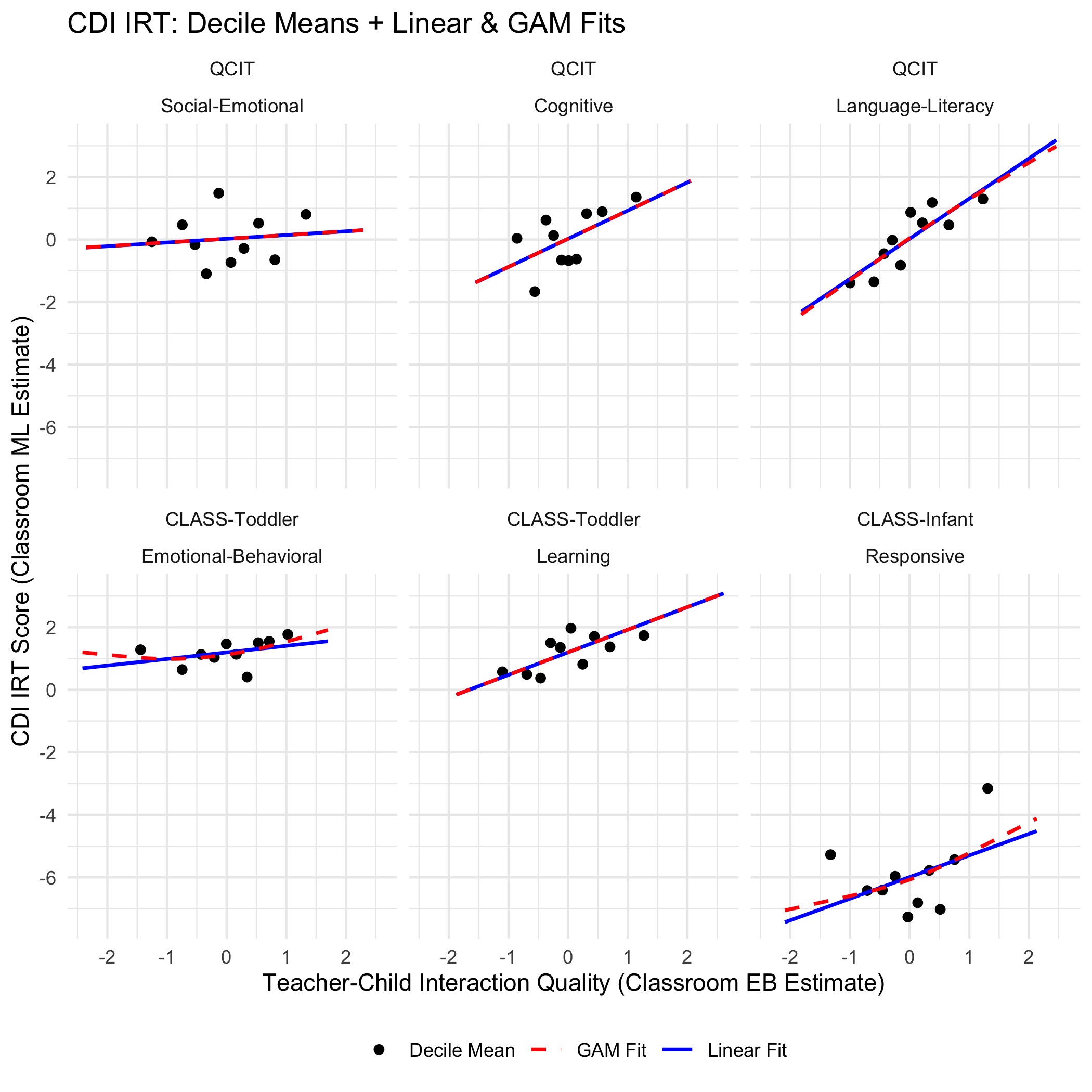}
\caption{Dose--response curves for teacher-reported CDI IRT scores across QCIT and CLASS domains}
\label{fig:dose-response-cdi}
\begin{minipage}{0.95\textwidth}
\small
\emph{Note.} $X$-axis: empirical Bayes estimates of process quality from the GALAMM. $Y$-axis: center-mean-centered outcomes. Black dots = decile means; blue lines = weighted linear fits; red dashed lines = weighted GAM fits. All estimates use entropy balancing weights.
\end{minipage}
\end{figure}

\paragraph{Effects on Behavioral Problems.} BITSEA Problem scores show no significant associations with any domain in either linear or GAM models (all $p > 0.10$; see \cref{app:robustness} in the OSM for dose--response curves). Negative coefficients for CLASS-T Emotional-Behavioral and QCIT Language-Literacy suggest potential protective effects, but estimates remain nonsignificant.

Collectively, CDI IRT outcomes link strongly to language and cognitive supports, BITSEA Competence responds to emotional-behavioral support while exhibiting nonlinearities for other domains, and BITSEA Problem shows weak effects overall. The presence of nonlinearity (edf $> 1$) underscores the value of GAM models in revealing threshold dynamics. Consistent with prior work \citep{Burchinal2010, Hatfield2016}, these findings emphasize that domain-focused interactions benefit matching developmental outcomes.

\begin{figure}[htbp]
\centering
\includegraphics[width=0.75\textwidth]{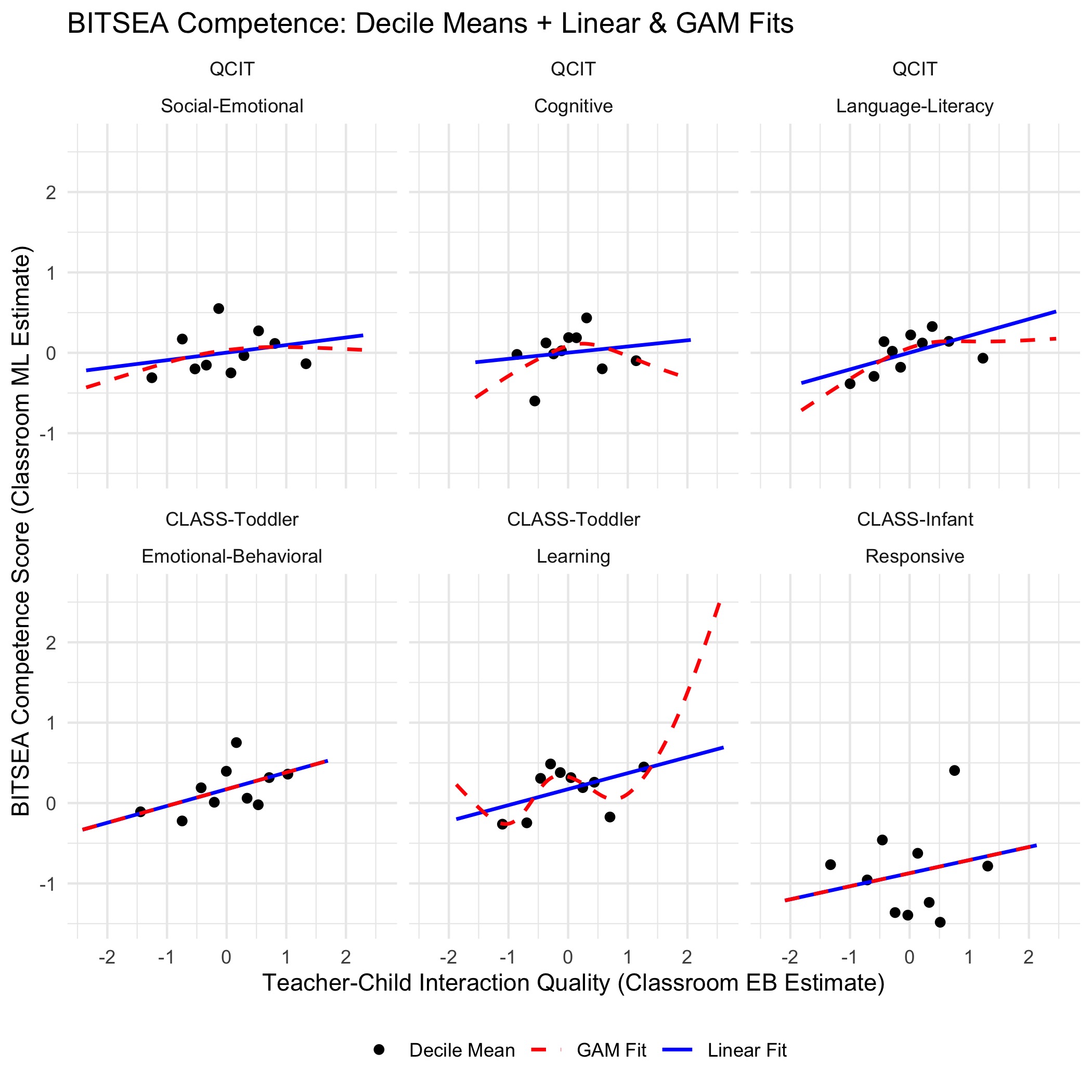}
\caption{Dose--response curves for teacher-reported BITSEA Competence scores across QCIT and CLASS domains}
\label{fig:dose-response-bitsea}
\begin{minipage}{0.95\textwidth}
\small
\emph{Note.} See \Cref{fig:dose-response-cdi} note for plot details.
\end{minipage}
\end{figure}

\subsubsection{Robustness Check: Addressing Teacher-Reporting Bias}
\label{subsubsec:robustness}

A potential concern is that teachers both provide the classroom interactions and assess child outcomes, potentially inflating associations through reporting bias. To address this, we replicated dose--response analyses using parent-reported outcomes (see \cref{app:robustness} in the OSM for full results). Parent-reported results largely corroborate the main findings. For CDI IRT, QCIT Cognitive (coef = 1.11, $p < 0.01$), QCIT Language-Literacy (coef = 1.27, $p < 0.01$), and CLASS-T Learning (coef = 1.04, $p < 0.01$) remain significant predictors with comparable effect sizes. For BITSEA Competence, CLASS-T Emotional-Behavioral maintains its significant positive association (coef = 0.16, $p = 0.04$). The domain-matching pattern persists across reporters: language/cognitive supports predict communicative outcomes, while emotional-behavioral support predicts social-emotional competence.

This consistency across independent reporters who observe children in different contexts and have different relationships with them provides robust evidence that findings reflect genuine associations rather than teacher-reporting artifacts. Given that parent-teacher concordance on child assessments is typically low \citep{Lim2021}, the replication of domain-specific patterns is particularly compelling.

\section{Discussion}
\label{sec:discussion}

This study set out to resolve the puzzling discrepancy between theoretical claims that high-quality teacher--child interactions are pivotal for infants' and toddlers' development and the modest or null effects often observed in empirical research. Drawing on the Baby FACES 2018 dataset, we identified four methodological limitations---item-level measurement error, center-level confounding, teacher-level covariate imbalance, and overlooked nonlinearities---that could systematically attenuate estimates of classroom process quality's true impact. Through a three-level GALAMM measurement model and weighted dose--response analyses, we found that these limitations---particularly item-level measurement error and center-level confounding---can mask genuine linkages between classroom process quality and child outcomes.

Our GALAMM approach illustrates how treating each observation item's contribution separately can mitigate the risk that item-specific measurement error obscures the true effects of classroom process quality. By incorporating item-level reliabilities directly into the model, rather than relying on one-size-fits-all ``domain'' composites, researchers can more accurately detect how teacher--child interactions support infants' and toddlers' development.

This approach represents a departure from standard observational measure scoring guidelines, highlighting an important distinction between different uses of quality measures. When instruments are employed for professional development or continuous quality improvement, the typical scoring structure may suffice due to its transparency and ease of interpretation. However, for accurately quantifying effects in research or program evaluation, multilevel latent variable modeling is indicated. To facilitate adoption, we provide complete R code on GitHub (\url{https://github.com/joonho112/baby-faces-2018-galamm}).

\subsection{Item-Level Error and the Need for Advanced Measurement Modeling}
\label{subsec:discuss-measurement}

A central insight from this study is the decomposition of item-level variance into Level 1 (measurement error), Level 2 (classroom latent factors), and Level 3 (center effects). We found that only about half of each item's total variance reflected genuine classroom-level differences, while the remaining half stemmed from item-level noise or center-level unobserved heterogeneity. In practice, summing or averaging all items equally can dilute or bias regression coefficients by overlooking how some items measure the underlying construct more reliably than others \citep{Gilbert2024, Howard2024, Levickis2024}.

\subsection{Center- and Classroom-Level Confounding: The Priority of Within-Center Comparisons}
\label{subsec:discuss-confounding}

After accounting for item-level noise and center-level heterogeneity, typical sources of within-center confounding, such as teacher qualifications, showed surprisingly small correlations with the latent process quality ``dose''. This suggests that once measurement error and cluster-level confounding are addressed, typical sources of within-center confounding (e.g., teacher qualifications) may pose less threat than previously assumed.

By contrast, center-level variance, approximately 15\% in our ICC decomposition, emerged as a nontrivial source of bias. In a nationally representative dataset like Baby FACES 2018, each ECE center encapsulates broader ecological conditions---ranging from regional resource disparities to state policy environments---thus combining multilayered potential confounders. The implications of this confounding are profound. When comparing two EHS classrooms---one in California and another in Alabama---without accounting for center- and state-level clustering, the observed differences may reflect community demographics, state funding streams, or local regulatory environments rather than genuine classroom-level interaction quality. Statistically, estimates that ignore state- or center-level clustering represent an ambiguous mixture: neither a pure between-center comparison nor a pure within-center comparison, but rather a weighted average of both that obscures the true relationship between classroom quality and child outcomes \citep{RabeHeskethSkrondal2022}. This methodological limitation extends to other national datasets such as the Head Start Family and Child Experiences Survey (FACES) and the Early Childhood Longitudinal Study, Birth Cohort (ECLS-B), suggesting that previous null findings regarding classroom process quality may stem from a failure to conduct appropriately localized comparisons.

From an ecological systems perspective \citep{Bronfenbrenner1998}, distinguishing microsystem influences (classroom interactions) from exosystem and macrosystem factors (center resources, community contexts, state policies) becomes essential for accurate estimation. Our study addresses this challenge through a two-pronged approach: using GALAMM to extract classroom-level predictors purged of center-level heterogeneity, and employing center-mean centering for child outcomes to isolate within-center variation. This strategy ensures that our estimates represent ``within-center'' comparisons---the difference in mean child outcomes between two classrooms within the same center that differ by one unit in process quality.

\subsection{Domain Matching and Linearities}
\label{subsec:discuss-domain-matching}

Our findings provide empirical support for the domain-specific developmental pathways hypothesized in our theoretical framework (\Cref{tab:theoretical-framework}). The observed domain-matching pattern---where language/cognitive supports predicted communicative outcomes while emotional-behavioral supports predicted social-emotional competence---aligns with theoretical predictions from both attachment \citep{Bowlby1969} and sociocultural \citep{Vygotsky1978} perspectives.

Linear patterns dominate many of these domain--outcome matches, indicating that no high ``quality threshold'' is strictly necessary before children reap benefits. Even modest incremental improvements in domain-relevant practices, such as additional language modeling, can translate into proportional gains for infants and toddlers. Nonetheless, nonlinearities emerge when a domain is overemphasized or mismatched with the child outcome of interest. Our results show that pushing CLASS--T Learning to extreme levels can yield diminishing returns for social-emotional competence, highlighting a \emph{plateau effect}. Similarly, classrooms heavily oriented toward cognitive or language activities do not automatically reduce behavior problems, reflecting minimal cross-domain spillover. These patterns illustrate the complexity of early learning processes: while domain-focused strategies can be powerful, they may not always benefit other developmental areas and, in certain cases, may lose effectiveness beyond a moderate ``dose.''

\subsection{Limitations}
\label{subsec:limitations}

Several limitations warrant consideration. First, while parent reports corroborated our findings, child outcomes were largely staff-reported, which may carry unobserved shared variance. Second, while the GALAMM partitions item-level noise, we lacked specific data on individual assessors, and residual error may reflect systematic assessor-level differences that could not be further decomposed. Third, the Baby FACES 2018 data lacks state identifiers, which means that center-level variance conflates with state-level heterogeneity, such as differing licensing regulations and quality standards. Finally, these data represent a single observational window, which may not capture the full temporal stability of interactions over a school year.

\subsection{Implications for Research, Policy, and Practice}
\label{subsec:implications}

These findings have significant implications for researchers, policymakers, and practitioners. While our analytic approach represents one solution to addressing measurement noise, the broader methodological principle applies regardless of the specific technique employed, establishing parameters that capture within-center effects while controlling for unobserved heterogeneity at higher ecological levels. By targeting these localized comparisons, researchers can disentangle genuine classroom-level associations from the complex web of center, community, and state influences.

The practical utility of a latent variable modeling approach can inform rater training protocols within research, evaluation, and practice. By identifying items with weaker factor loadings or higher error variance, rater training can strategically allocate resources where they are most needed. This targeted approach is particularly critical when using local raters rather than research team members, as previous studies have documented systematic rating differences between these groups \citep{Vitiello2018}. Programs could conduct pilot observations, apply GALAMM to identify problematic items, and then focus rater training accordingly.

For policymakers, the high level of measurement error in standard composites raises concerns regarding their use in high-stakes accountability systems, such as the HSDRS and Quality Rating and Improvement Systems. Classrooms or programs near quality thresholds may be misclassified due to factors other than actual classroom practices. Funding decisions and program sanctions based on these scores should incorporate uncertainty intervals or use multiple assessment points to reduce the impact of measurement error. At a minimum, policymakers should recognize that observed scores represent imperfect indicators of true quality and adjust accountability frameworks accordingly to avoid penalizing programs for measurement noise.

For practitioners, these results clarify how specific interactions drive distinct developmental outcomes. Teachers can apply the principle of domain-matching by tailoring interactions to children's needs. For example, a teacher noticing a plateau in children's social-emotional growth might look beyond general warmth to specific emotional-behavioral supports, such as proactive behavioral guidance. Child care programs aiming to improve children's communicative skills may choose to prioritize explicit language and cognitive scaffolding in teacher professional development. Moreover, the linear dose--response observed in certain domains suggests that incremental gains achieved through tailored coaching or modest resource shifts can steadily improve children's outcomes.

\section{Conclusion}
\label{sec:conclusion}

In conclusion, our findings challenge the ``null effect'' narrative and reinforce that robust measurement and analytic rigor can bring empirical evidence more in line with longstanding theoretical expectations \citep{PiantaHamre2009}. High-quality teacher--child interactions, especially those aligned to specific developmental domains, appear critical in supporting language and social-emotional outcomes in EHS. As EHS policymakers, practitioners, and researchers work to enhance infant--toddler care, rigorous methodological frameworks such as ours provide clearer guidance on what works, for whom, and under what conditions, helping ensure that young children receive the responsive, high-quality interactions they need to thrive.


\newpage
\printbibliography[title={References}]


\newpage
\appendix
\setcounter{section}{0}
\renewcommand{\thesection}{\Alph{section}}
\numberwithin{equation}{section}
\numberwithin{figure}{section}
\numberwithin{table}{section}

\renewcommand{\theHequation}{App.\thesection.\arabic{equation}}
\renewcommand{\theHfigure}{App.\thesection.\arabic{figure}}
\renewcommand{\theHtable}{App.\thesection.\arabic{table}}

\begin{center}
{\Large\bfseries Online Supplemental Materials}\\[1em]
{\large Beyond the Null Effect: Unmasking the True Impact of\\
Teacher--Child Interaction Quality on Child Outcomes in Early Head Start}\\[1.5em]
{\normalsize JoonHo Lee and Alison Hooper}\\[0.5em]
{\small The University of Alabama}
\end{center}

\vspace{2em}

\startcontents[appendices]
\printcontents[appendices]{}{1}{\textbf{Contents}\vskip1em\hrule\vskip1em}
\vskip1em\hrule\vskip2em

\addcontentsline{toc}{section}{Online Supplemental Materials}

\section{Analytic Sample Derivation and Attrition Analysis}
\label{app:sample}

This appendix documents how analytic samples were derived from Baby FACES 2018 and reports the exact analysis $N$s used in each model. It is designed to satisfy the main-text callout (``see Appendix A'') by providing transparent exclusion criteria, outcome- and reporter-specific $N$s for reproducibility, and a basic attrition check indicating whether the final analytic sample differs meaningfully from excluded cases on observed classroom characteristics.

\begin{figure}[H]
\centering
\includegraphics[width=0.95\textwidth]{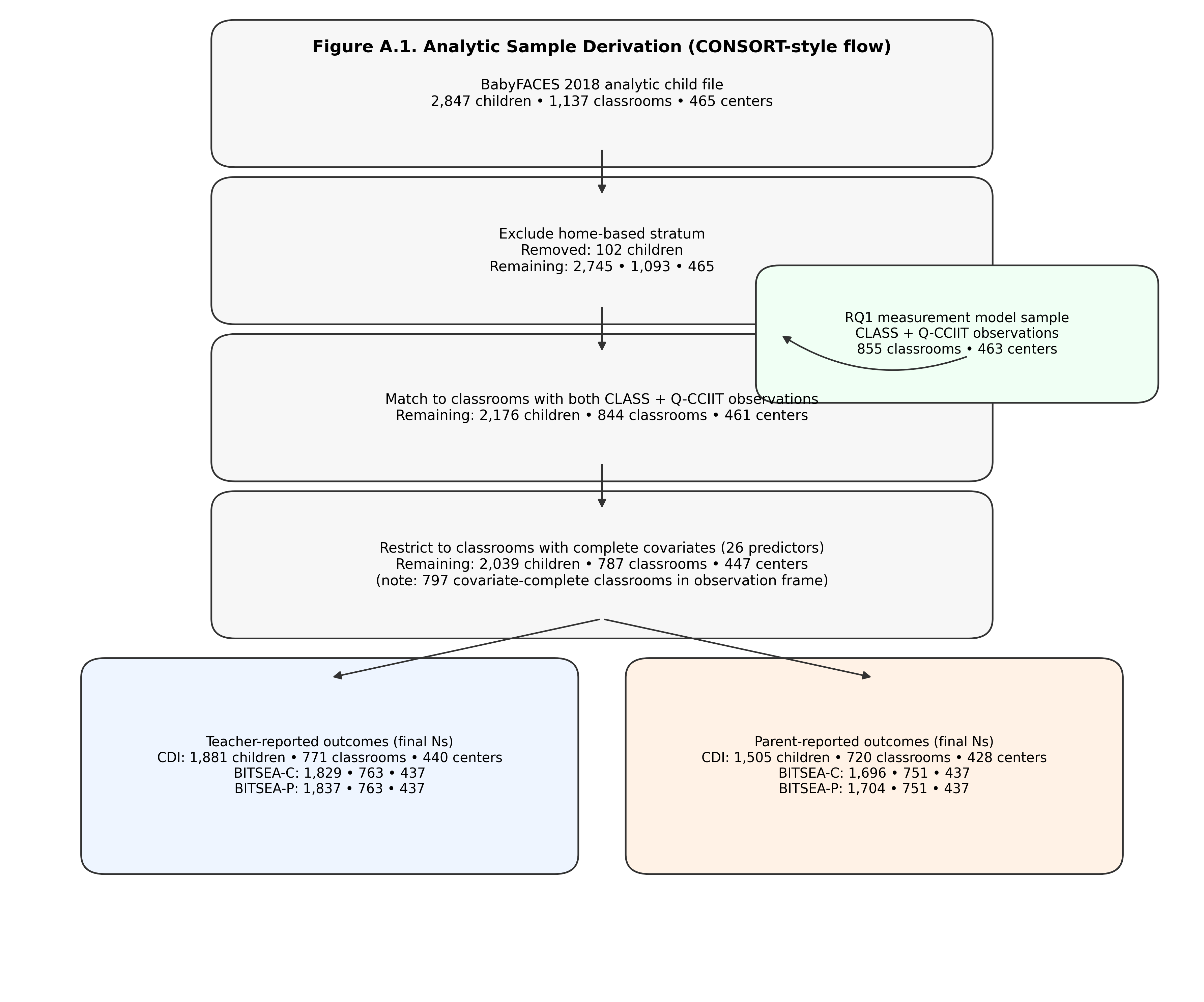}
\caption{CONSORT-style flow diagram depicting sequential exclusions from the processed Baby FACES 2018 analytic child file to the final analytic samples for (i) the RQ1 measurement model (classroom observation frame) and (ii) the RQ2 dose--response analyses (final $N$s by outcome and reporter).}
\label{fig:A1}
\end{figure}

\subsection{Units of Analysis and Counting Conventions}
\label{subsec:units}

RQ1 focuses on measurement. Accordingly, the unit of analysis is the item-by-classroom-by-center response used in the three-level generalized latent variable model (GALAMM). The RQ1 observation frame consists of classrooms with available observational data on both the Classroom Assessment Scoring System (CLASS) and the Quality of Caregiver--Child Interactions for Infants and Toddlers (QCIT; labeled as Q-CCIIT in the Baby FACES files).

RQ2 focuses on dose--response estimation. The ``dose'' is defined at the classroom level using empirical Bayes predictions of the latent process-quality factors from the RQ1 model. Child outcomes are therefore aggregated to the classroom level (classroom means) and centered within centers to isolate within-center variation, consistent with the main manuscript. Because the weighting and primary dose--response models rely on 26 classroom-level covariates, the RQ2 analytic sample additionally requires complete covariate information for these predictors.

Throughout this appendix, we report $N$s at three levels---children, classrooms, and centers---because these counts need not coincide at each stage. For example, some observed classrooms have no sampled children in the processed child file during the outcome window, and outcome availability differs by reporter and instrument.

\subsection{Classroom Observation Sample Derivation (RQ1)}
\label{subsec:rq1-sample}

The processed teacher/classroom file contains 859 observed classrooms across 464 centers. After requiring available observation data on both CLASS and QCIT (Q-CCIIT), the final RQ1 measurement-model frame includes 855 classrooms nested within 463 centers (\cref{tab:A1}). Within this frame, 148 classrooms were assessed using CLASS-Infant and 707 classrooms using CLASS-Toddler; the measurement model accommodates the resulting disjoint item sets via a long-format specification under a missing-at-random assumption (see \cref{app:galamm}).

\subsection{Teacher-Reported Outcome Sample Derivation (RQ2 Primary Analyses)}
\label{subsec:rq2-teacher}

The teacher-reported outcome pipeline begins with the processed analytic child file (2,847 children; 1,137 classrooms; 465 centers). We first excluded children in the home-based stratum (102 children removed). This restriction is by design: home-based EHS services employ a distinct service model (home visitors rather than classroom teachers), and classroom observation protocols (CLASS and QCIT) are not administered in home-visiting contexts. Our analyses therefore target center-based and mixed-service settings where classroom-level process quality can be observed (\cref{tab:A1}; \cref{fig:A1}).

\begin{table}[H]
\centering
\caption{Counts at Each Step (Children, Classrooms, Centers)}
\label{tab:A1}
\small
\begin{adjustbox}{max width=\textwidth}
\begin{tabular}{@{}llrrr@{}}
\toprule
Analysis Stream & Step & Children & Classrooms & Centers \\
\midrule
RQ1 classroom observations & Teacher/classroom records in processed file & --- & 859 & 464 \\
RQ1 classroom observations & CLASS observation data available & --- & 857 & 464 \\
RQ1 classroom observations & Q-CCIIT observation data available & --- & 855 & 463 \\
RQ1 classroom observations & Both CLASS + Q-CCIIT available (final RQ1) & --- & 855 & 463 \\
\addlinespace
RQ2 teacher-reported & Child sample in processed child file & 2,847 & 1,137 & 465 \\
RQ2 teacher-reported & Exclude home-based stratum & 2,745 & 1,093 & 465 \\
RQ2 teacher-reported & Match to classrooms with CLASS + Q-CCIIT & 2,176 & 844 & 461 \\
RQ2 teacher-reported & Restrict to complete covariates (26 predictors) & 2,039 & 787 & 447 \\
RQ2 teacher-reported & Teacher-reported CDI (IRT) observed (final) & 1,881 & 771 & 440 \\
\addlinespace
RQ2 parent-reported & Child sample in processed child file & 2,847 & 1,137 & 465 \\
RQ2 parent-reported & Exclude home-based stratum & 2,745 & 1,093 & 465 \\
RQ2 parent-reported & Match to classrooms with CLASS + Q-CCIIT & 2,176 & 844 & 461 \\
RQ2 parent-reported & Restrict to complete covariates (26 predictors) & 2,039 & 787 & 447 \\
RQ2 parent-reported & Parent-reported CDI (IRT) observed (final) & 1,505 & 720 & 428 \\
\bottomrule
\end{tabular}
\end{adjustbox}
\end{table}

Next, we linked children to the classroom observation frame by retaining children assigned to classrooms that have \emph{both} CLASS and QCIT observations. This matching step yielded 2,176 children in 844 classrooms across 461 centers. The reduction from 855 observed classrooms (the RQ1 frame) to 844 matched classrooms reflects that some observed classrooms did not have sampled children with child-level records in the processed child file during the assessment period.

Because covariate balancing requires complete values on the 26 classroom-level covariates used for weighting, we then restricted the analytic sample to covariate-complete classrooms. This step yielded 2,039 children in 787 classrooms across 447 centers. Finally, analytic $N$s vary by outcome depending on whether each teacher-reported measure is observed. \Cref{tab:A2} summarizes the final $N$s for teacher-reported CDI IRT language scores (1,881 children; 771 classrooms; 440 centers), BITSEA Competence (1,829; 763; 437), and BITSEA Problem (1,837; 763; 437).

\begin{table}[H]
\centering
\caption{Final Analytic $N$ by Outcome and Reporter}
\label{tab:A2}
\begin{tabular}{@{}llrrr@{}}
\toprule
Reporter & Outcome & Children & Classrooms & Centers \\
\midrule
Teacher-reported & CDI (IRT score) & 1,881 & 771 & 440 \\
Teacher-reported & BITSEA Competence & 1,829 & 763 & 437 \\
Teacher-reported & BITSEA Problem & 1,837 & 763 & 437 \\
\addlinespace
Parent-reported & CDI (IRT score) & 1,505 & 720 & 428 \\
Parent-reported & BITSEA Competence & 1,696 & 751 & 437 \\
Parent-reported & BITSEA Problem & 1,704 & 751 & 437 \\
\bottomrule
\end{tabular}
\end{table}

\subsection{Parent-Reported Outcome Sample Derivation (Robustness Analyses)}
\label{subsec:rq2-parent}

Parent-reported outcomes are analyzed as robustness checks to address potential same-reporter bias. The parent-report pipeline mirrors the teacher-report pipeline through observation matching and covariate completeness, using the same classroom-level dose definitions and the same set of covariates. Outcome-specific $N$s are generally smaller than the corresponding teacher-reported $N$s because parent-reported measures require completion of the parent interview and parent child-report modules, which exhibit higher nonresponse and partial-interview rates than staff reports. Final $N$s are reported in \cref{tab:A2} for parent-reported CDI IRT (1,505 children; 720 classrooms; 428 centers), BITSEA Competence (1,696; 751; 437), and BITSEA Problem (1,704; 751; 437).

\subsection{Missingness and Exclusion Summary}
\label{subsec:missingness}

Two mechanisms account for the largest reductions in sample size. First, children cannot contribute to RQ2 analyses unless they can be linked to a classroom with both CLASS and QCIT observations, because the exposure (``dose'') is defined at that classroom level. Second, the weighting strategy requires complete classroom-level covariates, and outcome-specific $N$s additionally depend on reporter- and measure-specific response patterns, which are more restrictive for parent reports. \Cref{app:galamm} provides additional detail on variable-level missingness and the handling of structurally missing items due to the two CLASS versions.

\subsection{Attrition/Selection Check}
\label{subsec:attrition}

To assess whether the final analytic sample differs systematically from excluded observed classrooms on measured characteristics, \cref{tab:A3} compares classrooms included in the teacher-reported CDI analytic sample with excluded observed classrooms (excluded due to missing covariates and/or missing linked CDI outcomes after observation matching). Differences are generally small to moderate in magnitude, with most standardized mean differences (SMDs) below 0.20 in absolute value and the largest approaching approximately 0.34. This pattern suggests that exclusion is not strongly patterned by the observed teacher/classroom characteristics shown in \cref{tab:A3}. While such comparisons cannot address unobserved selection, they provide a basic defense that the analytic sample is not markedly atypical on key measured covariates used in weighting.

\begin{table}[H]
\centering
\caption{Attrition/Exclusion Check: Included vs.\ Excluded Classrooms}
\label{tab:A3}
\begin{tabular}{@{}lccr@{}}
\toprule
Characteristic & Included & Excluded & SMD \\
\midrule
Years in Early Head Start & 6.52 (6.95) & 4.51 (4.71) & 0.30 \\
Years teaching infants/toddlers & 9.03 (7.67) & 9.21 (8.39) & $-0.02$ \\
Toddler classroom (vs.\ infant) & 83.1\% & 78.6\% & 0.12 \\
Teacher race: White, non-Hispanic & 35.3\% & 42.9\% & $-0.16$ \\
Teacher race: Black, non-Hispanic & 30.0\% & 23.8\% & 0.14 \\
Teacher ethnicity: Hispanic & 28.9\% & 23.8\% & 0.12 \\
Bachelor's degree or higher & 29.1\% & 40.5\% & $-0.24$ \\
Has ECE degree & 60.4\% & 61.9\% & $-0.03$ \\
Has CDA credential & 59.4\% & 54.8\% & 0.09 \\
English-only language profile & 57.3\% & 51.2\% & 0.12 \\
Has coach & 67.7\% & 56.0\% & 0.24 \\
Number of training topics & 7.73 (1.94) & 7.40 (1.98) & 0.17 \\
Child--adult ratio (observed) & 2.78 (0.78) & 2.84 (0.79) & $-0.07$ \\
Children enrolled in classroom & 7.68 (1.39) & 7.58 (1.59) & 0.07 \\
Quiet space available & 36.7\% & 28.6\% & 0.17 \\
Observed classrooms per center (in RQ1 frame) & 1.93 (0.26) & 1.83 (0.37) & 0.34 \\
\bottomrule
\end{tabular}

\smallskip
\footnotesize\textit{Note.} Values are means (standard deviations) for continuous variables and percentages for categorical variables. SMD = standardized mean difference. Included classrooms are those in the teacher-reported CDI analytic sample ($N = 771$); excluded classrooms are observed classrooms not in the final analytic sample ($N = 84$).
\end{table}

\section{GALAMM Measurement Model and Design-Based Missingness (Infant/Toddler CLASS)}
\label{app:galamm}

This appendix provides the full technical specification of the three-level Gaussian generalized additive latent and mixed model (GALAMM) used for RQ1 in the main manuscript. The model serves two roles. First, it extracts classroom-level latent process-quality factors while separating out (a) item-level measurement error and (b) center-level heterogeneity. Second, it accommodates the planned (design-based) item missingness induced by administering two age-specific CLASS instruments---CLASS-Infant in infant classrooms and CLASS-Toddler in toddler classrooms---while QCIT is administered in all classrooms and therefore forms a common measurement block.

All results in this appendix use the final RQ1 classroom-observation frame: $N = 855$ classrooms nested within $K = 463$ centers (148 infant classrooms; 707 toddler classrooms; see \cref{app:sample} for derivation). The analytic frame excludes the small number of observed classrooms that contain no QCIT item data (i.e., that lack the common measurement block required to link infant and toddler classrooms in a single joint model).

We write the measurement model in the language of multilevel latent variable models \citep[generalized multilevel SEM;][]{SkrondalRabeHesketh2004,RabeHesketh2004} while also making explicit its computational implementation as a linear mixed model in \texttt{galamm} \citep{Sorensen2023}.

\subsection{Model Specification (Main-Text Eq.\ 1 Restated)}
\label{subsec:model-spec}

\subsubsection{Observed Data Structure, Indices, and Standardization}
\label{subsubsec:data-structure}

Let $k = 1,\ldots,K$ index Early Head Start (EHS) centers, and let $j = 1,\ldots,J_k$ index classrooms nested in center $k$. Let $i = 1,\ldots,I$ index observed items, where $I=25$ in the analytic model (14 QCIT items, 7 CLASS-Toddler items, and 4 CLASS-Infant items; \cref{tab:B1}). Denote the raw item score by $y^{\mathrm{raw}}_{i,j,k}$.

Items are treated as approximately continuous and modeled with Gaussian errors. This is common in CFA/SEM practice when item scales have several ordered categories (e.g., 5--7 points), and it allows the model to be cast as a linear mixed model for scalable maximum likelihood estimation \citep[e.g.,][]{rhemtulla2012categorical}.

To place items on a common metric and to improve numerical conditioning of the likelihood, we standardize each item across its observed values in the analytic sample:
\begin{align}
y_{i,j,k} &= \frac{y^{\mathrm{raw}}_{i,j,k} - \bar y_i}{s_i}, \notag\\[6pt]
\bar y_i &= \frac{1}{n_i}\sum_{(j,k):\,R_{i,j,k}=1} y^{\mathrm{raw}}_{i,j,k}, \qquad
s_i^2 = \frac{1}{n_i-1}\sum_{(j,k):\,R_{i,j,k}=1}\left(y^{\mathrm{raw}}_{i,j,k}-\bar y_i\right)^2,
\label{eq:B1}
\end{align}
where $R_{i,j,k}\in\{0,1\}$ is the item-observability indicator (\cref{subsec:design-missingness}) and $n_i=\sum_{j,k} R_{i,j,k}$ is the number of observed responses for item $i$. For notational simplicity, all subsequent expressions use the standardized response $y_{i,j,k}$.

\subsubsection{Three-Level Gaussian Measurement Model}
\label{subsubsec:threelevel-model}

Let $\boldsymbol{\eta}_j=(\eta_{1,j},\ldots,\eta_{6,j})^\top$ denote the six classroom-level latent process-quality factors for classroom $j$. These correspond to three QCIT domains, two CLASS-Toddler domains, and one CLASS-Infant domain (\cref{tab:B1}). Let $\alpha_k$ denote a center-level random intercept capturing a center-wide shift shared by all classrooms in center $k$.

The paper fits the following three-level Gaussian measurement model (restating the main-text Eq.\ 1):
\begin{equation}
y_{i,j,k} \mid \boldsymbol{\eta}_j,\alpha_k 
\sim 
\mathcal{N}\!\left(
\beta_i + \sum_{f=1}^{6}\lambda_{i,f}\,\eta_{f,j} + \alpha_k,\;\sigma_\varepsilon^2
\right),
\label{eq:B2}
\end{equation}
where $\beta_i$ is an item-specific intercept, $\lambda_{i,f}$ is the factor loading of item $i$ on factor $f$, and $\sigma_\varepsilon^2$ is the common Level-1 residual variance. In line with the long-format implementation (\cref{subsubsec:long-format}), $\sigma_\varepsilon^2$ is shared across standardized items; heterogeneity in item reliability is primarily expressed through differences in the loadings $\lambda_{i,f}$ and the factor variances in $\boldsymbol{\Psi}^{(2)}$ (Appendix~C).

At Level 2 (classrooms), the latent-factor vector is modeled as multivariate normal:
\begin{equation}
\boldsymbol{\eta}_j \sim \mathcal{N}(\mathbf{0},\boldsymbol{\Psi}^{(2)}),
\label{eq:B3}
\end{equation}
where $\boldsymbol{\Psi}^{(2)}$ is a $6\times 6$ symmetric positive-definite covariance matrix (unstructured). At Level 3 (centers), the random intercept satisfies
\begin{equation}
\alpha_k \sim \mathcal{N}(0,\sigma_\alpha^2),
\label{eq:B4}
\end{equation}
with $\sigma_\alpha^2>0$. We assume mutual independence among $\boldsymbol{\eta}_j$, $\alpha_k$, and $\varepsilon_{i,j,k}$, and conditional independence of item responses given $(\boldsymbol{\eta}_j,\alpha_k)$, as in standard multilevel SEM/random-effects formulations \citep{SkrondalRabeHesketh2004,RabeHesketh2004}.

\subsubsection{Vector Form and Model-Implied Covariance}
\label{subsubsec:vector-form}

Define the full $I$-vector of standardized item responses for classroom $j$ in center $k$ as $\mathbf{y}_{j,k}=(y_{1,j,k},\ldots,y_{I,j,k})^\top$, the intercept vector $\boldsymbol{\beta}=(\beta_1,\ldots,\beta_I)^\top$, the $I\times 6$ loading matrix $\boldsymbol{\Lambda}=[\lambda_{i,f}]$, and $\mathbf{1}\in\mathbb{R}^I$ as the all-ones vector. Then \cref{eq:B2} can be written as
\begin{equation}
\mathbf{y}_{j,k}=\boldsymbol{\beta}+\boldsymbol{\Lambda}\boldsymbol{\eta}_j+\mathbf{1}\alpha_k+\boldsymbol{\varepsilon}_{j,k},
\qquad 
\boldsymbol{\varepsilon}_{j,k}\sim \mathcal{N}(\mathbf{0},\sigma_\varepsilon^2\mathbf{I}_I).
\label{eq:B5}
\end{equation}

Marginalizing over $\boldsymbol{\eta}_j$ and $\alpha_k$ yields the model-implied marginal covariance of the full item vector:
\begin{equation}
\Var(\mathbf{y}_{j,k})=\boldsymbol{\Lambda}\boldsymbol{\Psi}^{(2)}\boldsymbol{\Lambda}^\top+\sigma_\alpha^2\,\mathbf{1}\mathbf{1}^\top+\sigma_\varepsilon^2\mathbf{I}_I.
\label{eq:B6}
\end{equation}
Appendix~C derives the ICC-style Level-1/Level-2/Level-3 variance decomposition implied by \cref{eq:B6}.

In practice, no classroom is observed on all $I=25$ items because the CLASS block depends on classroom age group (\cref{subsec:design-missingness}). Let $\mathcal{O}_{j,k}\subset\{1,\ldots,I\}$ denote the observed-item set for classroom $(j,k)$, and let $\mathbf{S}_{j,k}$ be the selection matrix that extracts indices in $\mathcal{O}_{j,k}$. Then the observed item vector and its model-implied covariance are
\begin{equation}
\mathbf{y}^{\mathrm{obs}}_{j,k}=\mathbf{S}_{j,k}\mathbf{y}_{j,k},
\qquad
\Var(\mathbf{y}^{\mathrm{obs}}_{j,k})=\mathbf{S}_{j,k}\,\Var(\mathbf{y}_{j,k})\,\mathbf{S}_{j,k}^\top,
\label{eq:B6prime}
\end{equation}
so planned missingness enters the Gaussian likelihood by restricting each classroom's contribution to the appropriate observed sub-vector/submatrix.

\subsection{Item Inventory and Factor Assignment}
\label{subsec:item-inventory}

The measurement model uses 25 observed items assigned a priori to six latent factors under a simple-structure CFA (no cross-loadings; \cref{subsec:identification}). The three QCIT factors---QCIT Social--Emotional (4 items), QCIT Cognitive (3 items), and QCIT Language--Literacy (7 items)---are observed in all classrooms and therefore constitute the common measurement block. The CLASS block differs by classroom age group: CLASS-Toddler contributes two factors---Emotional--Behavioral Support (4 items) and Engaged Support for Learning (3 items)---measured only in toddler classrooms, whereas CLASS-Infant contributes one factor---Responsive Caregiving (4 items)---measured only in infant classrooms.

Consistent with the estimation scripts used for the paper, Negative Climate is excluded from CLASS-Toddler prior to model fitting. Empirically, this indicator exhibited severe skewness and instability when combined with the positively keyed CLASS-Toddler items, and conceptually it is reverse-coded relative to the remaining indicator set; excluding it improves numerical stability and yields a coherent positive-quality factor block.

\subsection{Identification and Parameter Constraints}
\label{subsec:identification}

\subsubsection{Simple-Structure Loading Pattern}
\label{subsubsec:simple-structure}

Let $f(i)\in\{1,\ldots,6\}$ denote the designated factor for item $i$ (\cref{tab:B1}). The loading matrix $\boldsymbol{\Lambda}=[\lambda_{i,f}]$ follows a simple structure:
\begin{equation}
\lambda_{i,f}=0 \quad \text{for all } f\neq f(i), 
\qquad 
\lambda_{i,f(i)} \text{ free (except for marker constraints below)}.
\label{eq:B7}
\end{equation}
Thus, each item measures exactly one latent factor, and cross-loadings are fixed to zero. This structure removes the rotational indeterminacy of unrestricted factor models and aligns the measurement model with the substantive domain definitions of QCIT and CLASS \citep{bollen1989structural}.

\subsubsection{Factor Scaling and Invariances}
\label{subsubsec:factor-scaling}

Absent constraints, latent variable models are invariant to reparameterizations of the latent space. In general, for any invertible matrix $\mathbf{A}\in\mathbb{R}^{6\times 6}$, the transformation
\begin{equation}
\boldsymbol{\eta}_j^*=\mathbf{A}\boldsymbol{\eta}_j,
\qquad
\boldsymbol{\Lambda}^*=\boldsymbol{\Lambda}\mathbf{A}^{-1},
\qquad
\boldsymbol{\Psi}^{*(2)}=\mathbf{A}\boldsymbol{\Psi}^{(2)}\mathbf{A}^\top,
\label{eq:B8a}
\end{equation}
produces an observationally equivalent model because $\boldsymbol{\Lambda}^*\boldsymbol{\eta}_j^*=\boldsymbol{\Lambda}\boldsymbol{\eta}_j$ and $\boldsymbol{\Lambda}^*\boldsymbol{\Psi}^{*(2)}\boldsymbol{\Lambda}^{*\top}=\boldsymbol{\Lambda}\boldsymbol{\Psi}^{(2)}\boldsymbol{\Lambda}^\top$ \citep{bollen1989structural}.

Under the simple-structure pattern in \cref{eq:B7}, admissible transformations that preserve the enforced zero pattern reduce primarily to diagonal rescalings of each factor. For any diagonal matrix $\mathbf{D}=\mathrm{diag}(d_1,\ldots,d_6)$ with nonzero $d_f$,
\begin{equation}
\boldsymbol{\Lambda}\boldsymbol{\eta}_j=
(\boldsymbol{\Lambda}\mathbf{D})(\mathbf{D}^{-1}\boldsymbol{\eta}_j),
\qquad 
\mathbf{D}^{-1}\boldsymbol{\eta}_j \sim \mathcal{N}\!\left(\mathbf{0},\,\mathbf{D}^{-1}\boldsymbol{\Psi}^{(2)}\mathbf{D}^{-\top}\right),
\label{eq:B8}
\end{equation}
so only the product $\boldsymbol{\Lambda}\boldsymbol{\Psi}^{(2)}\boldsymbol{\Lambda}^\top$ is identified from second-order moments of the indicators \citep[cf.][]{bollen1989structural}. We fix factor scales via marker-variable (reference-loading) constraints: for each factor $f$, we choose a reference item $i_f$ and set
\begin{equation}
\lambda_{i_f,f}=1,
\qquad f=1,\ldots,6.
\label{eq:B9}
\end{equation}
The reference items are indicated in \cref{tab:B1} (``Anch.\ = Yes''). All other nonzero loadings are estimated freely.

The model fixes factor means to zero, $\E[\boldsymbol{\eta}_j]=\mathbf{0}$, consistent with a random-effects interpretation. Item means are captured by $\beta_i$. Because indicators are standardized per \cref{eq:B1}, the fitted $\beta_i$ are expected to be close to 0; they are nevertheless retained for notational and implementation completeness.

\begin{table}[htbp]
\centering
\caption{Item-to-factor assignment (25 items $\to$ 6 factors).}
\label{tab:B1}
\footnotesize
\begin{adjustbox}{max width=\textwidth}
\begin{tabular}{@{}lllllc@{}}
\toprule
Item & Variable & Instr. & Factor & Anch. & Range \\
\midrule
Responding to social cues & resp\_social\_cues & QCIT & QCIT Soc-Emot & Yes & 1.67--6.83 \\
Responding to emotional cues & resp\_emotional\_cues & QCIT & QCIT Soc-Emot & No & 1.33--6.83 \\
Builds positive relationship & builds\_pos\_relation & QCIT & QCIT Soc-Emot & No & 1.00--7.00 \\
Supporting peer interaction & sup\_peer\_interaction & QCIT & QCIT Soc-Emot & No & 1.00--6.17 \\
\addlinespace
Supporting object exploration & sup\_object\_explore & QCIT & QCIT Cognitive & Yes & 1.00--7.00 \\
Scaffolding problem solving & scaff\_problem\_solve & QCIT & QCIT Cognitive & No & 1.00--7.00 \\
Number of unique concepts & unique\_concepts\_7cat & QCIT & QCIT Cognitive & No & 1.00--7.00 \\
\addlinespace
Use of varied vocabulary & caregiver\_varied\_vocab & QCIT & QCIT Lang-Lit & Yes & 1.00--6.60 \\
Use of questions & caregiver\_questions & QCIT & QCIT Lang-Lit & No & 1.33--6.67 \\
Conversational turn-taking & conv\_turn\_taking & QCIT & QCIT Lang-Lit & No & 1.33--7.00 \\
Extending child language & extend\_child\_lang & QCIT & QCIT Lang-Lit & No & 1.00--7.00 \\
Engaging children in books & engage\_in\_books & QCIT & QCIT Lang-Lit & No & 1.00--7.00 \\
Variety of words & variety\_words & QCIT & QCIT Lang-Lit & No & 1.00--7.00 \\
Variety of sentence styles & variety\_sent\_types & QCIT & QCIT Lang-Lit & No & 1.00--7.00 \\
\addlinespace
Positive climate & toddler\_pos\_climate & CLASS-T & CLASS-T EmotBeh & Yes & 2.50--7.00 \\
Teacher sensitivity (Toddler) & toddler\_teacher\_sens & CLASS-T & CLASS-T EmotBeh & No & 2.50--7.00 \\
Regard for child perspectives & toddler\_child\_persp & CLASS-T & CLASS-T EmotBeh & No & 2.00--7.00 \\
Behavioral guidance & toddler\_behav\_guidance & CLASS-T & CLASS-T EmotBeh & No & 1.50--6.75 \\
\addlinespace
Facilitation of learning/dev. & toddler\_fac\_learning & CLASS-T & CLASS-T Learning & Yes & 1.00--7.00 \\
Quality of feedback & toddler\_quality\_feedback & CLASS-T & CLASS-T Learning & No & 1.00--6.75 \\
Language modeling & toddler\_lang\_model & CLASS-T & CLASS-T Learning & No & 1.00--6.50 \\
\addlinespace
Relational climate & infant\_rel\_climate & CLASS-I & CLASS-I Respons & Yes & 3.00--7.00 \\
Teacher sensitivity (Infant) & infant\_teacher\_sens & CLASS-I & CLASS-I Respons & No & 2.25--7.00 \\
Facilitated exploration & infant\_fac\_explore & CLASS-I & CLASS-I Respons & No & 1.75--6.50 \\
Early language support & infant\_early\_lang & CLASS-I & CLASS-I Respons & No & 1.50--7.00 \\
\bottomrule
\end{tabular}
\end{adjustbox}
\vspace{0.5em}

\noindent\footnotesize\textit{Note.} All items have expected direction $+$ (higher = higher quality). Instr.\ = Instrument; Anch.\ = Anchor; Soc-Emot = Social-Emotional; Lang-Lit = Language-Literacy; EmotBeh = Emotional-Behavioral; Respons = Responsive Caregiving.
\end{table}

\subsubsection{Residual Structure and Center Effects}
\label{subsubsec:residual-structure}

The Level-1 residuals are assumed conditionally independent with a common variance,
\begin{equation}
\varepsilon_{i,j,k}\overset{\mathrm{iid}}{\sim}\mathcal{N}(0,\sigma_\varepsilon^2),
\label{eq:B10}
\end{equation}
so that $\Var(\boldsymbol{\varepsilon}_{j,k})=\sigma_\varepsilon^2\mathbf{I}_I$ in \cref{eq:B5}. In a classical CFA, one often allows item-specific residual variances. Here we impose a common residual variance for two reasons: (i) the long-format mixed-model likelihood used by \texttt{galamm} in the Gaussian case naturally assumes a single residual variance parameter, and (ii) with standardized indicators, this constraint yields a parsimonious and numerically stable model while still allowing substantial between-item differences in marginal reliability through item-specific loadings and freely estimated factor variances (Appendix~C).

At Level 3, the center random intercept $\alpha_k\sim\mathcal{N}(0,\sigma_\alpha^2)$ captures center-wide shifts common to all item responses. This term ensures that between-center heterogeneity does not inflate classroom-level factor variances and covariances (Appendix~C), and it makes explicit that the classroom factors $\boldsymbol{\eta}_j$ represent within-center deviations from a center-wide baseline (conditional on $\alpha_k$).

\subsubsection{Identifiability under Age-Specific Item Blocks}
\label{subsubsec:identifiability-age}

Because CLASS-Infant and CLASS-Toddler are never observed in the same classroom, the design implies that not all elements of the unstructured classroom covariance matrix $\boldsymbol{\Psi}^{(2)}$ are identified from the observed data. In particular, covariances between the infant-only CLASS factor and the toddler-only CLASS factors are not informed by within-classroom co-measurement and therefore drop out of the observed-data likelihood under \cref{eq:B2,eq:B3,eq:B4}. \Cref{subsec:disjoint-factors} formalizes this identification logic and clarifies how QCIT provides the empirical overlap needed to identify QCIT--CLASS correlations despite disjoint CLASS blocks.

\subsection{Estimation and Implementation}
\label{subsec:estimation}

\subsubsection{Observed-Data Likelihood under Planned Missingness}
\label{subsubsec:obs-likelihood}

Let $\mathcal{O}_{j,k}\subset\{1,\ldots,I\}$ denote the set of items observed for classroom $j$ in center $k$. Under the design, $|\mathcal{O}_{j,k}|=18$ for infant classrooms (14 QCIT + 4 CLASS-Infant) and $|\mathcal{O}_{j,k}|=21$ for toddler classrooms (14 QCIT + 7 CLASS-Toddler), up to a small amount of incidental missingness (\cref{subsec:design-missingness}). Let $\mathbf{y}^{\mathrm{obs}}_{j,k}\in\mathbb{R}^{|\mathcal{O}_{j,k}|}$ be the observed sub-vector obtained by selecting indices in $\mathcal{O}_{j,k}$.

Denote the full parameter vector by
\begin{equation}
\theta=\{\boldsymbol{\beta},\boldsymbol{\Lambda},\boldsymbol{\Psi}^{(2)},\sigma_\alpha^2,\sigma_\varepsilon^2\}.
\label{eq:B11}
\end{equation}
Because the model is Gaussian and the missingness is ignorable (\cref{subsec:design-missingness}), the observed-data likelihood is obtained by integrating the joint density of the observed indicators over the random effects:
\begin{align}
L(\theta)&=
\prod_{k=1}^{K}
\int
\Bigg[
\prod_{j=1}^{J_k}
\int
\bigg\{
\prod_{i\in \mathcal{O}_{j,k}}
\phi\!\left(y_{i,j,k}\,;\,\beta_i+\sum_{f=1}^6\lambda_{i,f}\eta_{f,j}+\alpha_k,\sigma_\varepsilon^2\right)
\bigg\} \notag\\
&\qquad\qquad\qquad\qquad\times
\phi(\boldsymbol{\eta}_j;\mathbf{0},\boldsymbol{\Psi}^{(2)})
\,d\boldsymbol{\eta}_j
\Bigg]
\phi(\alpha_k;0,\sigma_\alpha^2)\,d\alpha_k,
\label{eq:B12}
\end{align}
where $\phi(\cdot;\mu,\sigma^2)$ is the univariate normal density and $\phi(\cdot;\mathbf{0},\boldsymbol{\Psi}^{(2)})$ is the multivariate normal density.

Because the integrand is Gaussian in $(\{\boldsymbol{\eta}_j\},\alpha_k)$ for fixed $\theta$, the marginal likelihood \cref{eq:B12} is equivalent to the multivariate normal likelihood of a linear mixed model \citep{laird1982random}. In \texttt{galamm}, this marginal likelihood is maximized by a scalable algorithm based on Laplace approximation and sparse matrix operations \citep{Sorensen2023}. In the present Gaussian setting, the Laplace approximation is exact.

In addition, letting $\mathbf{y}_k$ stack all observed responses in center $k$ (\cref{subsubsec:long-format}) and letting $\mathbf{V}_k$ denote its marginal covariance, the marginal Gaussian log-likelihood can be written explicitly as
\begin{align}
\ell(\theta)&=\sum_{k=1}^{K}\bigg[
-\frac{n_k}{2}\log(2\pi)
-\frac{1}{2}\log|\mathbf{V}_k| \notag\\
&\qquad\qquad -\frac{1}{2}(\mathbf{y}_k-\mathbf{X}_k\boldsymbol{\beta})^\top\mathbf{V}_k^{-1}(\mathbf{y}_k-\mathbf{X}_k\boldsymbol{\beta})
\bigg],
\label{eq:B12prime}
\end{align}
where $n_k=\dim(\mathbf{y}_k)$. This form makes explicit the two computational bottlenecks (log-determinant and solve), which are handled efficiently via sparse methods in \texttt{galamm} \citep{Sorensen2023}.

\subsubsection{Long-Format Mixed-Model Representation}
\label{subsubsec:long-format}

To connect \cref{eq:B2,eq:B3,eq:B4} to implementation, consider the long-format data representation where each row corresponds to one observed pair $(i,j,k)$ with $i\in\mathcal{O}_{j,k}$. Stack all observed item responses in center $k$ into a vector $\mathbf{y}_k\in\mathbb{R}^{n_k}$, where $n_k=\sum_{j=1}^{J_k}|\mathcal{O}_{j,k}|$. Then we can write
\begin{equation}
\mathbf{y}_k=
\mathbf{X}_k\boldsymbol{\beta}
+
\mathbf{Z}_k\mathbf{b}_k
+
\boldsymbol{\varepsilon}_k,
\qquad 
\boldsymbol{\varepsilon}_k\sim \mathcal{N}(\mathbf{0},\sigma_\varepsilon^2\mathbf{I}_{n_k}),
\label{eq:B13}
\end{equation}
where $\mathbf{X}_k$ is the fixed-effect design matrix for item intercepts, implemented as \texttt{(0+item)} in R so that each row selects the relevant $\beta_i$. The center-specific random-effects vector stacks all classroom factor vectors and the center intercept,
\begin{equation}
\mathbf{b}_k=\big(\boldsymbol{\eta}_{1},\ldots,\boldsymbol{\eta}_{J_k},\alpha_k\big)^\top,
\label{eq:B14}
\end{equation}
and $\mathbf{Z}_k$ is the corresponding sparse random-effects design matrix: it is block-diagonal across classrooms for the $\boldsymbol{\eta}_j$ components and includes an additional column of ones for $\alpha_k$. Crucially, rows of the classroom blocks implement the loading pattern: the design row for an observed response $y_{i,j,k}$ is the loading vector $\boldsymbol{\lambda}_i^\top=(\lambda_{i,1},\ldots,\lambda_{i,6})$, with zeros enforced for non-designated factors per \cref{eq:B7}.

Under \cref{eq:B3,eq:B4}, the random effects satisfy
\begin{equation}
\mathbf{b}_k \sim \mathcal{N}\!\left(\mathbf{0},\,\mathbf{G}_k\right),
\qquad 
\mathbf{G}_k=
\begin{pmatrix}
\mathbf{I}_{J_k}\otimes\boldsymbol{\Psi}^{(2)} & \mathbf{0}\\
\mathbf{0}^\top & \sigma_\alpha^2
\end{pmatrix},
\label{eq:B15}
\end{equation}
so the marginal model for $\mathbf{y}_k$ is multivariate normal:
\begin{equation}
\mathbf{y}_k \sim \mathcal{N}\!\left(\mathbf{X}_k\boldsymbol{\beta},\, \mathbf{Z}_k\mathbf{G}_k\mathbf{Z}_k^\top + \sigma_\varepsilon^2\mathbf{I}_{n_k}\right).
\label{eq:B16}
\end{equation}
\Cref{eq:B16} makes explicit how design-based missingness enters the likelihood: classrooms with fewer administered items simply contribute fewer rows to $\mathbf{X}_k$ and $\mathbf{Z}_k$, producing an ``unbalanced'' but fully likelihood-based multilevel model.

\subsubsection{Empirical Bayes (Posterior Mean) Factor Scores}
\label{subsubsec:eb-scores}

For classroom $j$ in center $k$, the empirical Bayes (EB) estimate $\hat{\boldsymbol{\eta}}_j$ is the posterior mean of $\boldsymbol{\eta}_j$ given all observed items in center $k$:
\begin{equation}
\hat{\boldsymbol{\eta}}_j=
\E_{\hat\theta}\!\left[\boldsymbol{\eta}_j \mid \{\mathbf{y}^{\mathrm{obs}}_{j',k}\}_{j'=1}^{J_k}\right],
\label{eq:B17}
\end{equation}
evaluated at the fitted parameter estimates $\hat\theta$. In the Gaussian case, EB estimates coincide with BLUPs/random-effect predictors and exhibit the usual shrinkage toward zero proportional to posterior uncertainty \citep{robinson1991blup}. These EB predictions are used as measurement-error-adjusted classroom ``dose'' variables in the dose--response analyses (Appendix~D). For age-specific factors, EB predictions are only straightforwardly interpretable for classrooms where the corresponding item block is administered. If a factor has no directly administered indicators in a classroom, its posterior can still update indirectly via correlations with factors that \textit{are} measured (e.g., via QCIT factors), but such a prediction is best viewed as a model-based imputation driven by cross-factor covariance assumptions rather than by direct measurement in that classroom. Accordingly, the paper's downstream uses of factor scores restrict attention to factors that are directly measured in the relevant classroom type (Appendices~D--E).

\subsection{Design-Based Missingness: Infant vs.\ Toddler CLASS Blocks}
\label{subsec:design-missingness}

\subsubsection{Deterministic Missingness Indicators and Observed-Item Sets}
\label{subsubsec:missingness-indicators}

Let $G_{j,k}\in\{\mathrm{Infant},\mathrm{Toddler}\}$ denote the observed classroom age group. Let $A_i\in\{\mathrm{QCIT},\mathrm{CLASS\text{-}T},\mathrm{CLASS\text{-}I}\}$ denote the instrument/block membership of item $i$ (\cref{tab:B1}). Define the design-based observability indicator
\begin{align}
R^{\mathrm{design}}_{i,j,k}&=
\mathbf{1}\{A_i=\mathrm{QCIT}\}
+
\mathbf{1}\{A_i=\mathrm{CLASS\text{-}T},\,G_{j,k}=\mathrm{Toddler}\} \notag\\
&\quad+
\mathbf{1}\{A_i=\mathrm{CLASS\text{-}I},\,G_{j,k}=\mathrm{Infant}\}.
\label{eq:B18}
\end{align}
Thus, QCIT items satisfy $R^{\mathrm{design}}_{i,j,k}=1$ for all classrooms, CLASS-Toddler items satisfy $R^{\mathrm{design}}_{i,j,k}=1$ only for toddler classrooms, and CLASS-Infant items satisfy $R^{\mathrm{design}}_{i,j,k}=1$ only for infant classrooms. The observed-item set in classroom $(j,k)$ is then
\begin{equation}
\mathcal{O}_{j,k}=\{i: R_{i,j,k}=1\},
\qquad 
R_{i,j,k}=R^{\mathrm{design}}_{i,j,k}\times R^{\mathrm{incidental}}_{i,j,k},
\label{eq:B19}
\end{equation}
where $R^{\mathrm{incidental}}_{i,j,k}$ captures the small amount of non-design missingness (e.g., unscored items or processing loss). \Cref{tab:B2} summarizes the resulting blockwise observation structure in the analytic sample.

\begin{table}[htbp]
\centering
\caption{Design-based missingness block structure (administered items by classroom type).}
\label{tab:B2}
\small
\begin{tabular}{@{}lcccccc@{}}
\toprule
 & & \multicolumn{3}{c}{Items Observed} & \\
\cmidrule(lr){3-5}
Type & $N$ & QCIT (14) & CLASS-T (7) & CLASS-I (4) & Not Admin. \\
\midrule
Infant & 148 & 13.78 (98.5\%) & --- & 4.00 (100\%) & 7 \\
Toddler & 707 & 13.91 (99.4\%) & 7.00 (100\%) & --- & 4 \\
\bottomrule
\end{tabular}

\vspace{0.5em}
\noindent\footnotesize\textit{Note.} Percentages indicate observed items relative to the maximum possible within each block. ``---'' indicates items not administered by design.
\end{table}

\subsubsection{MAR-by-Design and Ignorability}
\label{subsubsec:mar-ignorability}

Planned missingness designs are often described as ``missing by design'' or ``matrix sampling'' \citep{graham2006planned}. Here, the missingness mechanism is deterministically driven by the observed age-group label $G_{j,k}$: conditional on $G_{j,k}$, the instrument block (CLASS-Infant vs.\ CLASS-Toddler) administered in classroom $(j,k)$ is fixed. Formally, for the design component,
\begin{equation}
\mathbb{P}\!\left(R^{\mathrm{design}} \mid \mathbf{Y}^{\mathrm{full}},\,G\right)=
\mathbb{P}\!\left(R^{\mathrm{design}} \mid G\right),
\label{eq:B20}
\end{equation}
where $\mathbf{Y}^{\mathrm{full}}$ denotes the (conceptual) complete item-response array. Thus the design-based missingness satisfies Rubin's missing-at-random condition (MAR) given $G$ \citep{rubin1976inference}, and because the measurement-model parameters $\theta$ are distinct from the missingness mechanism, the missingness is ignorable for likelihood-based inference \citep{little2002statistical,enders2010applied}. Concretely, maximum likelihood estimation of \cref{eq:B12} is valid using only the observed rows in the long-format dataset; no explicit imputation is required.

The only substantive requirement for linking the two age groups in a single joint model is the presence of the common measurement block. In BabyFACES, QCIT provides that overlap: it is observed in (almost) all classrooms irrespective of age group. Classrooms with no QCIT observations provide no information about the QCIT factors and no bridge between infant and toddler measurement blocks; these are excluded from the analytic sample, yielding the $N$ and $K$ above.

\subsubsection{Schematic Representation of the Design}
\label{subsubsec:schematic}

\begin{figure}[htbp]
\centering
\includegraphics[width=0.95\textwidth]{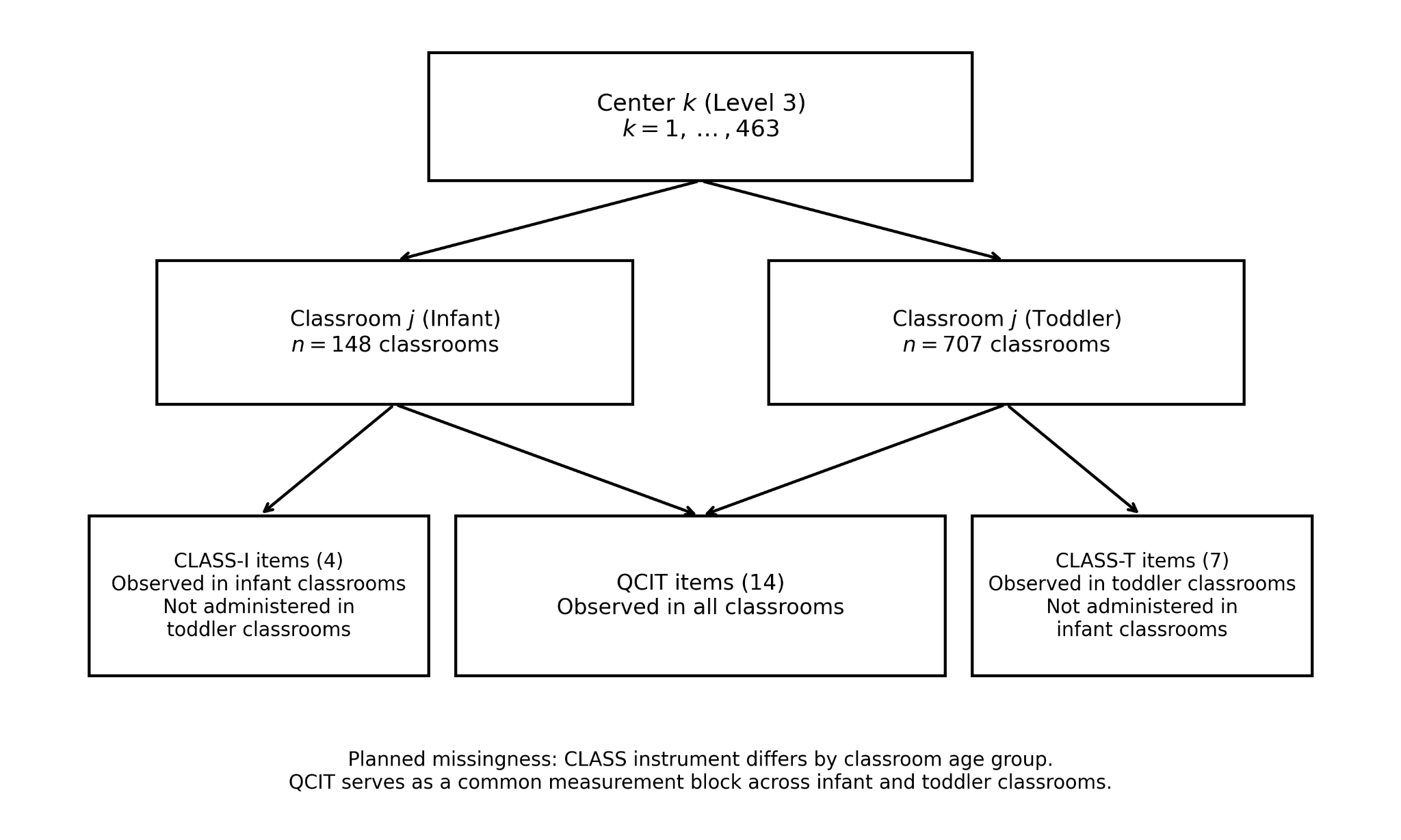}
\caption{Three-level hierarchical measurement structure with design-based missingness. Level~3 represents EHS centers ($k=1,\ldots,463$). Level~2 represents classrooms nested within centers and distinguished by age group: infant classrooms ($n=148$) and toddler classrooms ($n=707$). Level~1 represents observed items. QCIT items (14) are observed in all classrooms and therefore form the common measurement block, whereas CLASS-Infant items (4) are observed only in infant classrooms and CLASS-Toddler items (7) only in toddler classrooms.}
\label{fig:B1}
\end{figure}

\begin{figure}[htbp]
\centering
\includegraphics[width=0.85\textwidth]{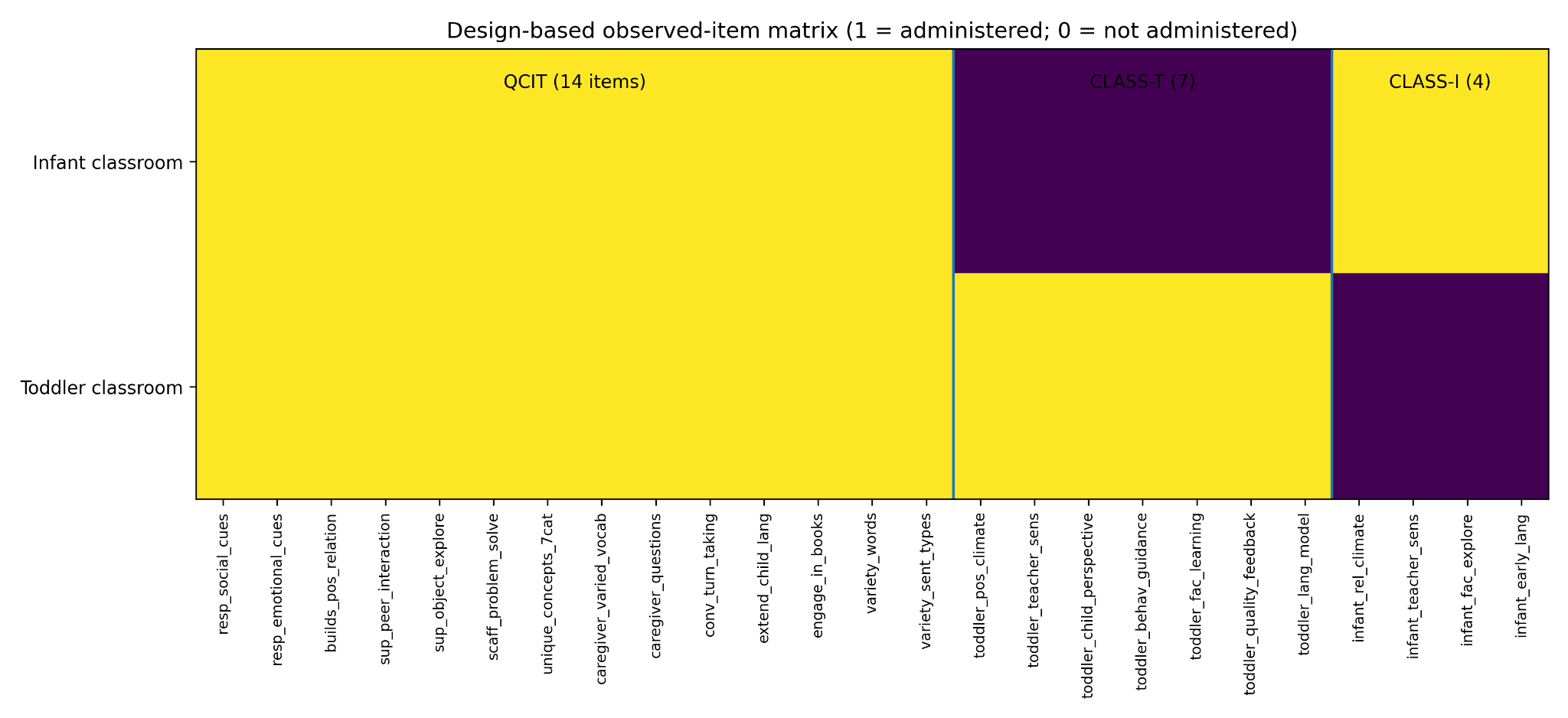}
\caption{Design-based observed-item matrix for QCIT and CLASS instruments. Yellow cells indicate administered items and purple cells indicate items not administered by design. Rows correspond to classroom age groups (infant vs.\ toddler), and columns correspond to individual items grouped by instrument.}
\label{fig:B2}
\end{figure}

\subsection{Disjoint Factor Correlations and the QCIT ``Anchor'' Logic}
\label{subsec:disjoint-factors}

This section formalizes which elements of the classroom-level covariance matrix $\boldsymbol{\Psi}^{(2)}$ are identified under the BabyFACES instrument-by-age design, and clarifies the precise sense in which QCIT provides an empirical ``anchor.''

\subsubsection{Block Partition of Factors and Covariance Matrix}
\label{subsubsec:block-partition}

Reorder the six classroom factors into three blocks:
\begin{equation}
\boldsymbol{\eta}_j=
\big(\boldsymbol{\eta}_{Q,j}^\top,\boldsymbol{\eta}_{T,j}^\top,\eta_{I,j}\big)^\top,
\label{eq:B21}
\end{equation}
where $\boldsymbol{\eta}_{Q,j}\in\mathbb{R}^3$ are the three QCIT factors, $\boldsymbol{\eta}_{T,j}\in\mathbb{R}^2$ are the two CLASS-Toddler factors, and $\eta_{I,j}\in\mathbb{R}$ is the CLASS-Infant factor. Conformably, partition the covariance matrix as
\begin{equation}
\boldsymbol{\Psi}^{(2)}=
\begin{pmatrix}
\boldsymbol{\Psi}_{QQ} & \boldsymbol{\Psi}_{QT} & \boldsymbol{\Psi}_{QI}\\
\boldsymbol{\Psi}_{TQ} & \boldsymbol{\Psi}_{TT} & \boldsymbol{\Psi}_{TI}\\
\boldsymbol{\Psi}_{IQ} & \boldsymbol{\Psi}_{IT} & \Psi_{II}
\end{pmatrix},
\label{eq:B22}
\end{equation}
with $\boldsymbol{\Psi}_{QQ}\in\mathbb{R}^{3\times 3}$, $\boldsymbol{\Psi}_{TT}\in\mathbb{R}^{2\times 2}$, $\Psi_{II}\in\mathbb{R}$, and $\boldsymbol{\Psi}_{TI}\in\mathbb{R}^{2\times 1}$. The key ``disjoint'' correlations correspond to $\boldsymbol{\Psi}_{TI}$ (or equivalently $\boldsymbol{\Psi}_{IT}$).

\subsubsection{Group-Specific Measurement Blocks and Which Parameters Enter the Likelihood}
\label{subsubsec:group-specific}

Let $g\in\{\mathrm{Infant},\mathrm{Toddler}\}$ index classroom type. For each type $g$, define the observed item vector $\mathbf{y}^{(g)}_{j,k}$ and the corresponding loading matrix $\boldsymbol{\Lambda}^{(g)}$ that selects the administered items. Because the design deterministically omits one CLASS block, the relevant loading matrices have structural zeros:

\textit{Infant classrooms} observe QCIT + CLASS-Infant items, hence their loading matrix has no columns for $\boldsymbol{\eta}_{T}$:
\begin{equation}
\boldsymbol{\Lambda}^{(\mathrm{Infant})}=
\big[\boldsymbol{\Lambda}^{(\mathrm{Infant})}_Q \;\; \mathbf{0} \;\; \boldsymbol{\Lambda}^{(\mathrm{Infant})}_I\big].
\label{eq:B23}
\end{equation}

\textit{Toddler classrooms} observe QCIT + CLASS-Toddler items, hence their loading matrix has no column for $\eta_I$:
\begin{equation}
\boldsymbol{\Lambda}^{(\mathrm{Toddler})}=
\big[\boldsymbol{\Lambda}^{(\mathrm{Toddler})}_Q \;\; \boldsymbol{\Lambda}^{(\mathrm{Toddler})}_T \;\; \mathbf{0}\big].
\label{eq:B24}
\end{equation}

Using \cref{eq:B6}, the marginal covariance matrices for the observed item vectors in each group are
\begin{equation}
\Var(\mathbf{y}^{(\mathrm{Infant})}_{j,k})=
\boldsymbol{\Lambda}^{(\mathrm{Infant})}
\boldsymbol{\Psi}^{(2)}
\boldsymbol{\Lambda}^{(\mathrm{Infant})\top}
+
\sigma_\alpha^2\mathbf{1}\mathbf{1}^\top
+
\sigma_\varepsilon^2\mathbf{I},
\label{eq:B25}
\end{equation}
\begin{equation}
\Var(\mathbf{y}^{(\mathrm{Toddler})}_{j,k})=
\boldsymbol{\Lambda}^{(\mathrm{Toddler})}
\boldsymbol{\Psi}^{(2)}
\boldsymbol{\Lambda}^{(\mathrm{Toddler})\top}
+
\sigma_\alpha^2\mathbf{1}\mathbf{1}^\top
+
\sigma_\varepsilon^2\mathbf{I}.
\label{eq:B26}
\end{equation}

Substituting \cref{eq:B23,eq:B24} into \cref{eq:B25,eq:B26} yields the key identification fact. Because the infant loading matrix has no columns for $\boldsymbol{\eta}_T$, the infant-group covariance $\boldsymbol{\Lambda}^{(\mathrm{Infant})}\boldsymbol{\Psi}^{(2)}\boldsymbol{\Lambda}^{(\mathrm{Infant})\top}$ depends only on $\{\boldsymbol{\Psi}_{QQ},\boldsymbol{\Psi}_{QI},\Psi_{II}\}$; all terms involving $\boldsymbol{\Psi}_{QT},\boldsymbol{\Psi}_{TT},\boldsymbol{\Psi}_{TI}$ vanish. Likewise, because the toddler loading matrix has no column for $\eta_I$, the toddler-group covariance depends only on $\{\boldsymbol{\Psi}_{QQ},\boldsymbol{\Psi}_{QT},\boldsymbol{\Psi}_{TT}\}$; all terms involving $\boldsymbol{\Psi}_{QI},\Psi_{II},\boldsymbol{\Psi}_{TI}$ vanish.

\begin{proposition}[Nonidentification of the cross-CLASS block]
Under model \cref{eq:B2,eq:B3,eq:B4} and design \cref{eq:B23,eq:B24}, the observed-data likelihood depends on the cross-CLASS covariance block $\boldsymbol{\Psi}_{TI}$ only through the feasibility constraint $\boldsymbol{\Psi}^{(2)}\succ 0$. In particular, $\boldsymbol{\Psi}_{TI}$ does not enter the model-implied second moments of any observed classroom type and is therefore not point-identified from the observed data.
\end{proposition}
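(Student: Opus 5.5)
The plan is to write the observed-data likelihood as a product of Gaussian densities and then show that $\boldsymbol{\Psi}_{TI}$ never appears in any of their means or covariances. By \cref{eq:B12prime,eq:B16}, $L(\theta)$ factors over centers. Each center contributes a multivariate normal density for $\mathbf{y}_k$ with mean $\mathbf{X}_k\boldsymbol{\beta}$ and covariance $\mathbf{V}_k=\mathbf{Z}_k\mathbf{G}_k\mathbf{Z}_k^\top+\sigma_\varepsilon^2\mathbf{I}_{n_k}$. A Gaussian density is determined by its mean and covariance, and the mean does not involve $\boldsymbol{\Psi}^{(2)}$. So the whole task reduces to showing that every entry of every $\mathbf{V}_k$ is free of $\boldsymbol{\Psi}_{TI}$.

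I will compute $\mathbf{V}_k$ entrywise. Take two observed responses $y_{i,j,k}$ and $y_{i',j',k}$ in the same center. Expanding \cref{eq:B13,eq:B15}, and using that the $\boldsymbol{\eta}_j$ are independent across classrooms and independent of $\alpha_k$ and the residuals, gives
\[
\operatorname{Cov}(y_{i,j,k},y_{i',j',k})=\mathbf{1}\{j=j'\}\,\boldsymbol{\lambda}_i^\top\boldsymbol{\Psi}^{(2)}\boldsymbol{\lambda}_{i'}+\sigma_\alpha^2+\mathbf{1}\{i=i',j=j'\}\,\sigma_\varepsilon^2 .
\]
Responses from different centers are independent, so these are the only nonzero covariances.

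\begin{itemize}
\item \emph{Different classrooms} ($j\neq j'$). The $\boldsymbol{\Psi}^{(2)}$ term is absent, so the entry involves only $\sigma_\alpha^2$. This case matters because a center may contain one infant and one toddler classroom: CLASS-I and CLASS-T items can then co-occur within a center, but they are linked only through $\alpha_k$, never through $\boldsymbol{\Psi}_{TI}$.
\item \emph{Same classroom} ($j=j'$). Both items lie in $\mathcal{O}_{j,k}$. By the simple structure \cref{eq:B7} and the design \cref{eq:B18,eq:B19} (equivalently \cref{eq:B23,eq:B24}), the product $\boldsymbol{\lambda}_i^\top\boldsymbol{\Psi}^{(2)}\boldsymbol{\lambda}_{i'}$ equals $\lambda_{i,f(i)}\lambda_{i',f(i')}\Psi_{f(i)f(i')}$. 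The pair $\{f(i),f(i')\}$ has one factor in $T$ and one in $I$ only if the classroom contains both a CLASS-T and a CLASS-I item. The design indicator rules this out, and incidental missingness only removes items.
\end{itemize}

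Hence every entry of $\mathbf{V}_k$ depends only on $\boldsymbol{\Psi}_{QQ}$, $\boldsymbol{\Psi}_{QT}$, $\boldsymbol{\Psi}_{TT}$, $\boldsymbol{\Psi}_{QI}$, $\Psi_{II}$, $\boldsymbol{\Lambda}$, $\sigma_\alpha^2$ and $\sigma_\varepsilon^2$. The same holds for the group-wise second moments \cref{eq:B25,eq:B26}.

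It follows that $L(\theta)$ is constant in $\boldsymbol{\Psi}_{TI}$ over the parameter space. The only remaining dependence is through the requirement that $\boldsymbol{\Psi}^{(2)}\succ0$, which is needed for the model in \cref{eq:B3} to be defined. To conclude non-identification, I need two distinct admissible values of $\boldsymbol{\Psi}_{TI}$, holding everything else fixed. If $\boldsymbol{\Psi}^{(2)}\succ0$, then by the Schur-complement characterization the set of $\boldsymbol{\Psi}_{TI}$ that keep positive definiteness is open in $\mathbb{R}^{2}$. It is also nonempty, since it contains the current value. Any open neighborhood contains points other than the current one, and these give different parameters $\theta\neq\theta'$ with $L(\theta)=L(\theta')$ for every dataset.

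The main obstacle is the within-center cross-classroom case, which the group-wise formulas \cref{eq:B25,eq:B26} alone do not address. The explicit $\mathbf{V}_k$ entry formula above handles it. A minor additional point is stating the openness of the feasible set carefully, so that the result is genuine non-identification and not just an artifact of a boundary value.
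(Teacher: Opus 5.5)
Your proof is correct and uses the same mechanism as the paper's sketch: because no classroom observes both CLASS blocks, $\boldsymbol{\Psi}_{TI}$ is always pre- and post-multiplied by a zero loading block, so it never enters the Gaussian second moments. You go somewhat further than the paper in two ways. First, you work with the center-stacked covariance $\mathbf{V}_k$ rather than only the per-classroom matrices in \cref{eq:B25,eq:B26}; the paper's sketch treats the likelihood as determined by classroom-level marginals and so glosses over the within-center cross-classroom covariances, which you show equal $\sigma_\alpha^2$ and are free of $\boldsymbol{\Psi}^{(2)}$. Second, your openness argument explicitly shows that more than one admissible $\boldsymbol{\Psi}_{TI}$ exists, which the paper establishes only afterwards through the nondegenerate ellipsoid in \cref{eq:B28}.
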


\begin{proof}[Proof sketch]
The observed-data likelihood is determined by the collection of marginal distributions of observed item vectors $\mathbf{y}^{(\mathrm{Infant})}_{j,k}$ and $\mathbf{y}^{(\mathrm{Toddler})}_{j,k}$. Under Gaussianity, these are fully characterized by their means and covariances. \Cref{eq:B25,eq:B26} show that the covariances are $\boldsymbol{\Lambda}^{(g)}\boldsymbol{\Psi}^{(2)}\boldsymbol{\Lambda}^{(g)\top}$ plus level-3 and residual terms. Inserting \cref{eq:B23} and \cref{eq:B24} makes $\boldsymbol{\Psi}_{TI}$ drop out algebraically because it is always premultiplied and postmultiplied by a zero block. Hence the likelihood is constant in $\boldsymbol{\Psi}_{TI}$ over all values that keep $\boldsymbol{\Psi}^{(2)}$ positive definite.
\end{proof}

This is the multilevel latent-variable analogue of the basic factor-model identification principle: a covariance between two latent variables cannot be learned when those latent variables are never jointly measured by observed indicators within any unit, absent additional restrictions. In this sense, $\boldsymbol{\Psi}_{TI}$ is at best \textit{set-identified}, with the identified set determined by positive-definiteness (\cref{subsubsec:identified-set}; cf.\ \citealt{manski2003partial}).

\subsubsection{What QCIT Identifies: The ``Anchor'' Logic}
\label{subsubsec:anchor-logic}

Although $\boldsymbol{\Psi}_{TI}$ is nonidentified, QCIT plays an essential identification role for the rest of the covariance structure. Because QCIT items are observed in \textit{both} infant and toddler classrooms, the QCIT block identifies $\boldsymbol{\Psi}_{QQ}$ in the full sample, and each age group identifies the cross-covariances between QCIT and its own CLASS block:

\begin{itemize}[nosep]
\item In infant classrooms, the joint observation of QCIT + CLASS-Infant identifies $\boldsymbol{\Psi}_{QI}$ (QCIT--Infant) as part of the covariance structure in \cref{eq:B25}.
\item In toddler classrooms, the joint observation of QCIT + CLASS-Toddler identifies $\boldsymbol{\Psi}_{QT}$ (QCIT--Toddler) as part of the covariance structure in \cref{eq:B26}.
\end{itemize}

Thus QCIT provides an \textit{empirical overlap} that links the infant and toddler classrooms in a single joint model and makes QCIT--CLASS associations estimable despite the disjoint CLASS blocks. Without QCIT, the measurement model would decompose into two disconnected submodels (one for infant classrooms with CLASS-Infant, one for toddler classrooms with CLASS-Toddler), and no cross-instrument covariance statements would be available within a single likelihood.

\subsubsection{The Identified Set for $\boldsymbol{\Psi}_{TI}$ under Positive-Definiteness}
\label{subsubsec:identified-set}

Even though $\boldsymbol{\Psi}_{TI}$ does not enter the likelihood, it cannot take arbitrary values because the full covariance matrix must remain positive definite. Conditioning on the identified blocks $\boldsymbol{\Psi}_{QQ}$, $\boldsymbol{\Psi}_{QT}$, $\boldsymbol{\Psi}_{QI}$, $\boldsymbol{\Psi}_{TT}$, and $\Psi_{II}$, the feasible values of $\boldsymbol{\Psi}_{TI}$ are characterized by a Schur-complement constraint.

Let $\mathbf{A}=\boldsymbol{\Psi}_{QQ}$, $\mathbf{B}=\boldsymbol{\Psi}_{QT}$, $\mathbf{C}=\boldsymbol{\Psi}_{QI}$, $\mathbf{D}=\boldsymbol{\Psi}_{TT}$, $e=\Psi_{II}$, and let $\mathbf{f}=\boldsymbol{\Psi}_{TI}\in\mathbb{R}^{2\times 1}$. Assume $\mathbf{A}\succ 0$. Define the conditional (Schur-complement) blocks
\begin{equation}
\mathbf{M}=\mathbf{D}-\mathbf{B}^\top\mathbf{A}^{-1}\mathbf{B},
\qquad
v=e-\mathbf{C}^\top\mathbf{A}^{-1}\mathbf{C}.
\label{eq:B27}
\end{equation}
Then $\boldsymbol{\Psi}^{(2)}\succ 0$ holds if and only if $\mathbf{M}\succ 0$, $v>0$, and
\begin{equation}
(\mathbf{f}-\mathbf{B}^\top\mathbf{A}^{-1}\mathbf{C})^\top \mathbf{M}^{-1}(\mathbf{f}-\mathbf{B}^\top\mathbf{A}^{-1}\mathbf{C}) \;<\; v.
\label{eq:B28}
\end{equation}
\Cref{eq:B28} describes an ellipsoid in $\mathbb{R}^2$: the likelihood does not pick out a unique $\mathbf{f}$, but positive definiteness restricts $\mathbf{f}$ to a bounded region. Any point estimate of $\boldsymbol{\Psi}_{TI}$ produced by an ML optimizer in this setting should therefore be understood as one feasible completion of $\boldsymbol{\Psi}^{(2)}$ rather than a data-identified parameter.

A natural identifying restriction (not imposed in the paper) would be to assume conditional independence of infant and toddler CLASS factors given QCIT, which in the Gaussian case corresponds to setting the conditional covariance $\Cov(\boldsymbol{\eta}_{T},\eta_I \mid \boldsymbol{\eta}_Q)=0$. This restriction would fix $\mathbf{f}=\mathbf{B}^\top\mathbf{A}^{-1}\mathbf{C}$ (the center of the ellipsoid in \cref{eq:B28}) and thereby fully identify the cross-CLASS covariances. The paper instead reports $\boldsymbol{\Psi}_{TI}$ only as a supplementary, design-weakly-determined quantity (main text Table~2 note) and does not base substantive conclusions on these elements.

\subsubsection{Reporting and Interpretation of Cross-CLASS Correlations}
\label{subsubsec:reporting-cross-class}

Because $\boldsymbol{\Psi}_{TI}$ is not point-identified, cross-CLASS correlations (CLASS-Infant vs.\ CLASS-Toddler) should be treated as supplementary and reported, if at all, with explicit identification caveats. By contrast, correlations within QCIT, within CLASS-Toddler, and between QCIT and each age-appropriate CLASS block are directly supported by overlapping measurement blocks and admit the usual interpretation. The QCIT ``anchor'' logic should therefore be understood narrowly: QCIT anchors $\boldsymbol{\Psi}_{QQ}$, $\boldsymbol{\Psi}_{QI}$, and $\boldsymbol{\Psi}_{QT}$, but it does not identify $\boldsymbol{\Psi}_{TI}$ without further restrictions.

\subsubsection{Implications for Factor-Score Prediction}
\label{subsubsec:factor-score-implications}

Because $\boldsymbol{\Psi}_{TI}$ is nonidentified, any posterior prediction of an age-specific CLASS factor in classrooms where that factor's item block is not administered is inherently assumption-sensitive. Even when indirect prediction is possible through correlations with QCIT factors, such predictions amount to a model-imposed extrapolation across age-specific domains. Accordingly, all factor-score uses in the paper restrict attention to factors that are directly measured in the relevant classrooms (notably the QCIT factors, which are observed for all classrooms).

\section{ICC-like Variance Decomposition Under the Fitted GALAMM}
\label{app:icc-decomposition}

This appendix provides the mathematical derivation underlying the item-level ICC-like variance decomposition reported for RQ1-A and displayed in Figure~1 of the main manuscript (``see Appendix C''). The purpose is not to provide additional results, figures, or tables (Figure~1 already reports the decomposition), but to make explicit the \textit{model-implied variance algebra} that maps the fitted three-level GALAMM in \cref{app:galamm} to the Level-1 / Level-2 / Level-3 variance shares shown in Figure~1.

Throughout, notation follows \cref{app:galamm}. We use the term ``ICC-like'' because the reported quantities are best viewed as \textit{variance partition coefficients} (VPCs) derived from a multilevel latent-variable measurement model; they generalize (but are not identical to) the classical ICC from a single random-intercept model.

\subsection{Model Recap and Notation}
\label{subsec:C1-model-recap}

Items are indexed by $i=1,\dots,I$ (here $I=25$), classrooms by $j$, and centers by $k$. The measurement model is fit to standardized item scores (\cref{app:galamm}), which we denote $y_{ijk}$. The three-level Gaussian measurement model (\cref{subsubsec:threelevel-model}) is
\begin{equation}
y_{ijk}\mid \boldsymbol{\eta}_j,\alpha_k
\sim \mathcal{N}\!\left(
\beta_i + \sum_{f=1}^6 \lambda_{if}\eta_{fj} + \alpha_k,\;\sigma_\varepsilon^2
\right),
\label{eq:C1}
\end{equation}
where $\boldsymbol{\eta}_j=(\eta_{1j},\ldots,\eta_{6j})^\top$ are classroom-level latent factors with covariance matrix $\boldsymbol{\Psi}^{(2)}=\Var(\boldsymbol{\eta}_j)$, and $\alpha_k$ is a center-level random intercept with variance $\sigma_\alpha^2$:
\begin{equation}
\boldsymbol{\eta}_j \sim \mathcal{N}(\mathbf{0},\boldsymbol{\Psi}^{(2)}),\qquad
\alpha_k \sim \mathcal{N}(0,\sigma_\alpha^2).
\label{eq:C2}
\end{equation}
The Level-1 residuals $\varepsilon_{ijk}$ are independent of $\boldsymbol{\eta}_j$ and $\alpha_k$, and are homoscedastic across standardized items (a single $\sigma_\varepsilon^2$; \cref{app:galamm}).

Let $\boldsymbol{\Lambda}=[\lambda_{if}]\in\mathbb{R}^{I\times 6}$ be the loading matrix, and let $L_i^\top$ denote its $i$-th row so that $\sum_f \lambda_{if}\eta_{fj}=L_i^\top \boldsymbol{\eta}_j$.

\subsection{Model-Implied Marginal Covariance Decomposition}
\label{subsec:C2-marginal-cov}

For a fixed classroom $j$ in center $k$, define the $I$-vector $\mathbf{y}_{jk}=(y_{1jk},\ldots,y_{Ijk})^\top$, the intercept vector $\boldsymbol{\beta}=(\beta_1,\ldots,\beta_I)^\top$, and the all-ones vector $\mathbf{1}\in\mathbb{R}^I$. Then \cref{eq:C1} is equivalently
\begin{equation}
\mathbf{y}_{jk} = \boldsymbol{\beta} + \boldsymbol{\Lambda}\boldsymbol{\eta}_j + \mathbf{1}\alpha_k + \boldsymbol{\varepsilon}_{jk},
\qquad
\boldsymbol{\varepsilon}_{jk}\sim \mathcal{N}(\mathbf{0},\sigma_\varepsilon^2 \mathbf{I}_I).
\label{eq:C3}
\end{equation}
Taking marginal variances under \cref{eq:C2} yields the model-implied (unconditional) covariance matrix
\begin{equation}
\Var(\mathbf{y}_{jk})=
\underbrace{\boldsymbol{\Lambda}\boldsymbol{\Psi}^{(2)}\boldsymbol{\Lambda}^\top}_{\text{Level 2 (classroom factors)}}
+
\underbrace{\sigma_\alpha^2\,\mathbf{1}\mathbf{1}^\top}_{\text{Level 3 (center intercept)}}
+
\underbrace{\sigma_\varepsilon^2\,\mathbf{I}_I}_{\text{Level 1 (residual)}}.
\label{eq:C4}
\end{equation}
\Cref{eq:C4} is the ``one-line'' object behind Figure~1. The stacked bars in Figure~1 are constructed from the diagonal elements of the three addends in \cref{eq:C4}, normalized to sum to one item-by-item.

\subsection{Single-Item Marginal Variance Decomposition and ICC-like (VPC) Shares}
\label{subsec:C3-single-item}

\subsubsection{Marginal Variance for Item $i$}
\label{subsubsec:C3-1-marginal-var}

Applying the law of total variance to \cref{eq:C1} with $U=(\boldsymbol{\eta}_j,\alpha_k)$,
\begin{equation}
\Var(y_{ijk})=
\E\!\left[\Var(y_{ijk}\mid U)\right]
+
\Var\!\left(\E[y_{ijk}\mid U]\right).
\label{eq:C5}
\end{equation}
The conditional variance is $\sigma_\varepsilon^2$. The conditional mean is $\beta_i + L_i^\top\boldsymbol{\eta}_j + \alpha_k$. Using independence of $\boldsymbol{\eta}_j$ and $\alpha_k$,
\begin{equation}
\Var(y_{ijk})=
\underbrace{L_i^\top \boldsymbol{\Psi}^{(2)} L_i}_{V_{2i}}
+
\underbrace{\sigma_\alpha^2}_{V_3}
+
\underbrace{\sigma_\varepsilon^2}_{V_1}.
\label{eq:C6}
\end{equation}
Under the paper's simple-structure specification (\cref{subsubsec:simple-structure})---each item loads on a single factor $f(i)$---the Level-2 term simplifies to
\begin{equation}
V_{2i}=L_i^\top \boldsymbol{\Psi}^{(2)} L_i=\lambda_{if(i)}^2\,\psi_{f(i),f(i)}.
\label{eq:C7}
\end{equation}
(Under simple structure, the off-diagonals of $\boldsymbol{\Psi}^{(2)}$ affect cross-item covariances, not the marginal variance of a single indicator.)

\subsubsection{ICC-like (VPC) Shares for Item $i$}
\label{subsubsec:C3-2-vpc-shares}

Define the total marginal variance $V_{i,\text{tot}}=V_{2i}+V_3+V_1$. The ICC-like variance shares (i.e., VPCs) reported in Figure~1 are
\begin{equation}
\pi_2(i)=\frac{V_{2i}}{V_{i,\text{tot}}},\qquad
\pi_3(i)=\frac{V_3}{V_{i,\text{tot}}},\qquad
\pi_1(i)=\frac{V_1}{V_{i,\text{tot}}},
\label{eq:C8}
\end{equation}
with $\pi_1(i)+\pi_2(i)+\pi_3(i)=1$ for each item.

\subsubsection{Why ``ICC-like''? A Correlation Interpretation}
\label{subsubsec:C3-3-correlation-interp}

A classical ICC in a random-intercept model is both a variance fraction and a within-cluster correlation. The present model admits analogous correlation interpretations, but with two higher levels and an item-dependent Level-2 contribution.

Let two hypothetical replicates of the same item in the same classroom and center be $\{y^{(a)}_{ijk},y^{(b)}_{ijk}\}$, conditionally independent given $(\boldsymbol{\eta}_j,\alpha_k)$. Then $\Cov(y^{(a)}_{ijk},y^{(b)}_{ijk})=V_{2i}+V_3$, and $\Var(y^{(a)}_{ijk})=\Var(y^{(b)}_{ijk})=V_{2i}+V_3+V_1$, yielding
\begin{equation}
\Corr(y^{(a)}_{ijk},y^{(b)}_{ijk})=\frac{V_{2i}+V_3}{V_{2i}+V_3+V_1}=1-\pi_1(i).
\label{eq:C9}
\end{equation}
Thus $\pi_1(i)$ plays the role of ``one minus an ICC'' for item $i$ in the sense of the proportion of variance not reproducible across independent replicates under the measurement model.

The three-level structure further separates two distinct higher-level correlations. For two different classrooms $j\neq j'$ in the same center $k$, the shared component is only $\alpha_k$, so
\begin{equation}
\Corr(y_{ijk},y_{ij'k})=\frac{\sigma_\alpha^2}{V_{2i}+V_3+V_1}=\pi_3(i).
\label{eq:C10}
\end{equation}
In this sense, $\pi_3(i)$ is an ``ICC across classrooms within the same center,'' and $\pi_2(i)$ captures the incremental within-classroom clustering beyond the center shift. Unlike in random-intercept models, $\pi_2(i)$ is item-specific because $V_{2i}$ depends on the loading pattern $L_i$.

\subsection{``Overall'' Shares Reported in Figure~1 and Their Scope}
\label{subsec:C4-overall-shares}

Figure~1 reports item-specific triples $(\pi_1(i),\pi_2(i),\pi_3(i))$ and a single ``overall'' triplet in the figure note (e.g., 51.2\% / 14.5\% / 34.3\%). In the analysis code, these ``overall'' values summarize a typical item's variance composition under a chosen weighting of items---not the ICC of any particular composite score.

Let $w_i\ge 0$ be item weights with $\sum_i w_i=1$ (equal weights in the paper). Define the weighted average of the Level-2 marginal variances $\bar V_2=\sum_i w_i V_{2i}$. Because $V_3=\sigma_\alpha^2$ and $V_1=\sigma_\varepsilon^2$ are common across items in \cref{eq:C1}, their weighted averages are $\bar V_3=\sigma_\alpha^2$ and $\bar V_1=\sigma_\varepsilon^2$. The ``overall'' shares are therefore
\begin{align}
\bar\pi_2 &= \frac{\bar V_2}{\bar V_2+\sigma_\alpha^2+\sigma_\varepsilon^2}, \notag\\[0.5ex]
\bar\pi_3 &= \frac{\sigma_\alpha^2}{\bar V_2+\sigma_\alpha^2+\sigma_\varepsilon^2}, \label{eq:C11}\\[0.5ex]
\bar\pi_1 &= \frac{\sigma_\varepsilon^2}{\bar V_2+\sigma_\alpha^2+\sigma_\varepsilon^2}. \notag
\end{align}

\subsubsection{Distinction from a Composite-Score Variance Decomposition}
\label{subsubsec:C4-1-composite-distinction}

If one instead defines a composite score $\tilde y_{jk}=\mathbf{w}^\top \mathbf{y}_{jk}$ with $\mathbf{w}=(w_1,\ldots,w_I)^\top$, then by \cref{eq:C4},
\begin{align}
\Var(\tilde y_{jk}) &=
\underbrace{\mathbf{w}^\top \boldsymbol{\Lambda}\boldsymbol{\Psi}^{(2)}\boldsymbol{\Lambda}^\top \mathbf{w}}_{\text{Level 2}}
+
\underbrace{\sigma_\alpha^2(\mathbf{w}^\top\mathbf{1})^2}_{\text{Level 3}}
+
\underbrace{\sigma_\varepsilon^2\|\mathbf{w}\|_2^2}_{\text{Level 1}},
\label{eq:C12}
\end{align}
which generally differs from \cref{eq:C11} because the Level-2 term involves cross-item covariance. Figure~1 is explicitly an item-level decomposition; \cref{eq:C11} is therefore an appropriate descriptive summary of the item-level bars, not a claim about the reliability of a specific composite scoring rule.

\subsection{Identification, Scaling, and Invariance of the Decomposition}
\label{subsec:C5-identification}

\Cref{subsec:identification} describes the identification constraints used to scale latent factors (e.g., marker loadings vs variance constraints). While $\boldsymbol{\Lambda}$ and $\boldsymbol{\Psi}^{(2)}$ individually depend on this scaling choice, the marginal covariance contribution $\boldsymbol{\Lambda}\boldsymbol{\Psi}^{(2)}\boldsymbol{\Lambda}^\top$---and hence each $V_{2i}=L_i^\top\boldsymbol{\Psi}^{(2)}L_i$---is invariant to factor rescaling.

Formally, let $D$ be an invertible diagonal matrix, define a rescaled factor $\boldsymbol{\eta}_j^\star=D\boldsymbol{\eta}_j$, and set $\boldsymbol{\Lambda}^\star=\boldsymbol{\Lambda}D^{-1}$ so that $\boldsymbol{\Lambda}^\star\boldsymbol{\eta}_j^\star=\boldsymbol{\Lambda}\boldsymbol{\eta}_j$. Then $\boldsymbol{\Psi}^{(2)\star}=\Var(\boldsymbol{\eta}_j^\star)=D\boldsymbol{\Psi}^{(2)}D$, and
\begin{align}
\boldsymbol{\Lambda}^\star\boldsymbol{\Psi}^{(2)\star}\boldsymbol{\Lambda}^{\star\top}
&= \boldsymbol{\Lambda}D^{-1}\,(D\boldsymbol{\Psi}^{(2)}D)\,D^{-1}\boldsymbol{\Lambda}^\top \notag\\
&= \boldsymbol{\Lambda}\boldsymbol{\Psi}^{(2)}\boldsymbol{\Lambda}^\top.
\label{eq:C13}
\end{align}
Therefore, the variance shares $\pi_\ell(i)$ in \cref{eq:C8} are uniquely determined by the fitted marginal covariance structure, not by arbitrary factor scaling conventions.

\subsection{Practical Computation from Fitted GALAMM Parameters}
\label{subsec:C6-practical-computation}

Given the fitted GALAMM measurement model (\cref{app:galamm}), the Figure~1 decomposition is computed directly from $\widehat{\boldsymbol{\Psi}}^{(2)}$, $\widehat{\sigma}_\alpha^2$, $\widehat{\sigma}_\varepsilon^2$, and $\widehat{\boldsymbol{\Lambda}}$. In the implementation used for this paper \citep[R/\texttt{galamm};][]{Sorensen2023}, these quantities correspond to:

\begin{enumerate}[nosep]
\item \textit{Level-2 covariance} $\widehat{\boldsymbol{\Psi}}^{(2)}$: the $6\times 6$ random-effects covariance matrix at the classroom level (e.g., \texttt{VarCorr(mod)[["class\_id"]]}).
\item \textit{Level-3 variance} $\widehat{\sigma}_\alpha^2$: the center random-intercept variance (e.g., \texttt{VarCorr(mod)[["center\_id"]][1,1]}).
\item \textit{Level-1 variance} $\widehat{\sigma}_\varepsilon^2$: the squared residual standard deviation (e.g., \texttt{attr(VarCorr(mod), "sc")\^{}2}).
\item \textit{Loadings} $\widehat{\boldsymbol{\Lambda}}$: the item-by-factor loadings matrix (e.g., \texttt{factor\_loadings(mod)}, dropping associated standard-error columns).
\end{enumerate}

Then, for each item $i$, $\widehat V_{2i}=\widehat L_i^\top \widehat{\boldsymbol{\Psi}}^{(2)} \widehat L_i$, and the variance shares follow by normalizing as in \cref{eq:C8}. The ``overall'' shares in the Figure~1 note follow from \cref{eq:C11} with the chosen weights $w_i$ (equal weights in the paper).

\subsection{Interpreting a Large Level-1 Share}
\label{subsec:C7-interpreting-level1}

Finally, it is useful to be explicit about what ``Level 1'' represents here. In the Gaussian measurement model, $\sigma_\varepsilon^2$ is the conditional variance of $y_{ijk}$ given classroom and center effects. It therefore aggregates classical measurement error (rater noise, item idiosyncrasy) and any within-classroom deviations not captured by the prespecified factor structure. In this analysis, $\sigma_\varepsilon^2$ is common across standardized items (\cref{subsubsec:residual-structure}), so item-to-item differences in $\pi_1(i)$ are driven primarily by differences in the implied Level-2 signal $V_{2i}$ (and hence by loadings under simple structure). If one were to generalize the model to allow item-specific residual variances $\sigma_{\varepsilon,i}^2$, the decomposition in \cref{eq:C6,eq:C7,eq:C8} remains valid with $V_1=\sigma_{\varepsilon,i}^2$ varying by item; Figure~1 would then reflect heterogeneity in both loadings and residual variances.

\section{Empirical Bayes Predictions for Classroom-Level Process Quality (``Dose'')}
\label{app:eb-predictions}

This appendix provides a rigorous definition and interpretation of the empirical Bayes (EB) predictions referenced in the main manuscript (``see Appendix D''). In the paper, EB predictions of the classroom-level latent process-quality factors from the fitted GALAMM measurement model (\cref{app:galamm}) are used to operationalize each classroom's process-quality dose in the dose--response analyses (RQ2; \cref{app:dose-response}). The goal here is theoretical: to (i) express the EB predictor in standard mixed-model form and (ii) make precise---in a model-based sense---why the level-2 EB factor score is a classroom-specific latent quality signal that is \textit{net of} (a) Level-1 item noise and (b) center-wide shifts captured by the Level-3 random intercept.

Throughout, notation is aligned with \cref{app:galamm,app:icc-decomposition}.

\subsection{Dose Definition Used in RQ2}
\label{subsec:D0-dose-definition}

The fitted measurement model defines, for each classroom $j$, a six-dimensional vector of latent classroom factors
\begin{equation}
\boldsymbol{\eta}_j=(\eta_{1,j},\ldots,\eta_{6,j})^\top,
\label{eq:D0a}
\end{equation}
corresponding to the six domain factors described in \cref{subsec:item-inventory} (three QCIT domains plus age-appropriate CLASS domains). In RQ2, the ``dose'' is not a single scalar. Rather, for each domain $f\in\{1,\ldots,6\}$, the classroom's dose is the domain-specific EB factor score
\begin{equation}
\text{dose}_{f,j}
\;\equiv\;
\widehat{\eta}^{\,\mathrm{EB}}_{f,j}
\;=\;
\E\!\left[\eta_{f,j}\mid \mathbf{y},\widehat{\boldsymbol{\theta}}\right],
\label{eq:D1}
\end{equation}
where $\mathbf{y}$ denotes the full collection of standardized observed item scores across all classrooms and centers, and $\widehat{\boldsymbol{\theta}}$ denotes the fitted parameter estimates from the measurement model (\cref{app:galamm}). The adjective ``empirical'' reflects that $\boldsymbol{\theta}$ is plug-in estimated rather than integrated out.

For expositional simplicity the main text sometimes writes $\widehat{\eta}_j^{\mathrm{EB}}$ without the domain index; in that notation, $\widehat{\eta}_j^{\mathrm{EB}}$ should be read as ``the EB prediction for the focal domain,'' i.e., $\widehat{\eta}^{\,\mathrm{EB}}_{f,j}$ for the $f$ used in that particular dose--response regression.

\subsection{Measurement Model as a Gaussian Linear Mixed Model}
\label{subsec:D1-gaussian-lmm}

\subsubsection{Observation-Level Model}
\label{subsubsec:D1-1-obs-level}

Let $y_{i,j,k}$ be the standardized score for item $i\in\{1,\ldots,I\}$ in classroom $j$ nested within center $k$. The three-level Gaussian measurement model (\cref{app:galamm}) is
\begin{equation}
y_{i,j,k}\mid \boldsymbol{\eta}_j,\alpha_k
\sim \mathcal{N}\!\left(
\beta_i + L_i^\top\boldsymbol{\eta}_j + \alpha_k,\;\sigma_\varepsilon^2
\right),
\label{eq:D2}
\end{equation}
where $\beta_i$ are item intercepts; $L_i^\top$ is the $i$-th row of the loading matrix $\boldsymbol{\Lambda}\in\R^{I\times 6}$; $\boldsymbol{\eta}_j$ is the classroom-level factor vector; $\alpha_k$ is a center-level random intercept capturing center-wide shifts common to classrooms in center $k$; and $\sigma_\varepsilon^2$ is the (homoscedastic) Level-1 residual variance on the standardized scale.

At the higher levels,
\begin{equation}
\boldsymbol{\eta}_j \sim \mathcal{N}(\mathbf{0},\boldsymbol{\Psi}^{(2)}),
\qquad
\alpha_k \sim \mathcal{N}(0,\sigma_\alpha^2),
\qquad
\boldsymbol{\eta}_j\perp\alpha_k\perp\varepsilon_{i,j,k},
\label{eq:D3}
\end{equation}
consistent with \cref{subsubsec:threelevel-model}.

\subsubsection{Classroom Blocks and Center-Stacked Notation}
\label{subsubsec:D1-2-classroom-blocks}

Because planned missingness implies that not every classroom is observed on every item (\cref{app:galamm}), write the model for the observed item block for each classroom. Let $\mathbf{y}_{j,k}\in\R^{n_{j,k}}$ be the vector of observed item scores for classroom $j$ in center $k$, where $n_{j,k}$ is the number of observed items for that classroom. Let $\boldsymbol{\beta}_{j,k}\in\R^{n_{j,k}}$ be the corresponding intercept vector and let $\boldsymbol{\Lambda}_{j,k}\in\R^{n_{j,k}\times 6}$ be the corresponding submatrix of $\boldsymbol{\Lambda}$. Then \cref{eq:D2} is equivalently
\begin{equation}
\mathbf{y}_{j,k}=
\boldsymbol{\beta}_{j,k}
+
\boldsymbol{\Lambda}_{j,k}\boldsymbol{\eta}_j
+
\mathbf{1}_{n_{j,k}}\alpha_k
+
\boldsymbol{\varepsilon}_{j,k},
\qquad
\boldsymbol{\varepsilon}_{j,k}\sim \mathcal{N}(\mathbf{0},\sigma_\varepsilon^2\mathbf{I}_{n_{j,k}}).
\label{eq:D4}
\end{equation}

Now stack all classrooms within center $k$. Let $J_k$ be the number of classrooms in center $k$ and define $N_k=\sum_{j=1}^{J_k} n_{j,k}$. Let
\begin{equation}
\mathbf{y}_k=(\mathbf{y}_{1,k}^\top,\ldots,\mathbf{y}_{J_k,k}^\top)^\top\in\R^{N_k}
\label{eq:D4a}
\end{equation}
and define the center-specific random-effects vector
\begin{equation}
\mathbf{b}_k=\big(\boldsymbol{\eta}_{1}^\top,\ldots,\boldsymbol{\eta}_{J_k}^\top,\alpha_k\big)^\top\in\R^{6J_k+1}.
\label{eq:D5}
\end{equation}
Then $\mathbf{y}_k$ can be written in standard Gaussian linear mixed model (LMM) form
\begin{equation}
\mathbf{y}_k = \mathbf{X}_k\boldsymbol{\beta} + \mathbf{Z}_k\mathbf{b}_k + \boldsymbol{\varepsilon}_k,
\qquad
\boldsymbol{\varepsilon}_k\sim \mathcal{N}(\mathbf{0},\mathbf{R}_k),
\qquad
\mathbf{R}_k=\sigma_\varepsilon^2\mathbf{I}_{N_k},
\label{eq:D6}
\end{equation}
where $\mathbf{X}_k$ is the item-intercept design matrix and $\mathbf{Z}_k$ collects (i) the classroom-level factor-loading blocks and (ii) the center-intercept column of ones.

Under \cref{eq:D3}, $\mathbf{b}_k$ has a block-diagonal prior covariance
\begin{equation}
\mathbf{b}_k \sim \mathcal{N}\!\left(\mathbf{0},\mathbf{G}_k\right),
\qquad
\mathbf{G}_k=\operatorname{blockdiag}\!\left(\mathbf{I}_{J_k}\otimes \boldsymbol{\Psi}^{(2)},\;\sigma_\alpha^2\right).
\label{eq:D7}
\end{equation}

\subsection{EB Predictions as Posterior Means (BLUPs) of Random Effects}
\label{subsec:D2-eb-blup}

\subsubsection{Conditional Distribution Under Known Parameters}
\label{subsubsec:D2-1-conditional}

Fix a center $k$ and suppose the model parameters
\begin{equation}
\boldsymbol{\theta}=(\boldsymbol{\beta},\boldsymbol{\Psi}^{(2)},\sigma_\alpha^2,\sigma_\varepsilon^2,\boldsymbol{\Lambda})
\label{eq:D7a}
\end{equation}
are known. Under the Gaussian LMM (\cref{eq:D6,eq:D7}), the marginal distribution of $\mathbf{y}_k$ is
\begin{equation}
\mathbf{y}_k \sim \mathcal{N}\!\left(\mathbf{X}_k\boldsymbol{\beta},\;\mathbf{V}_k\right),
\qquad
\mathbf{V}_k=\mathbf{Z}_k\mathbf{G}_k\mathbf{Z}_k^\top+\mathbf{R}_k.
\label{eq:D8}
\end{equation}
Because $(\mathbf{b}_k,\mathbf{y}_k)$ is jointly Gaussian, the conditional distribution of the random effects is exactly multivariate normal:
\begin{align}
\mathbf{b}_k\mid \mathbf{y}_k,\boldsymbol{\theta}
&\sim
\mathcal{N}\Bigg(
\underbrace{\mathbf{G}_k\mathbf{Z}_k^\top\mathbf{V}_k^{-1}\big(\mathbf{y}_k-\mathbf{X}_k\boldsymbol{\beta}\big)}_{\text{posterior mean}},\;
\underbrace{\mathbf{G}_k-\mathbf{G}_k\mathbf{Z}_k^\top\mathbf{V}_k^{-1}\mathbf{Z}_k\mathbf{G}_k}_{\text{posterior variance}}
\Bigg).
\label{eq:D9}
\end{align}
The conditional mean in \cref{eq:D9} is the posterior mean under the Gaussian random-effects prior and, equivalently, the best linear unbiased predictor (BLUP) under known variance components \citep{henderson1950estimation,robinson1991blup}. Since the posterior is normal, the posterior mean equals the posterior mode.

\subsubsection{Empirical Bayes Implementation (Plug-in Hyperparameters)}
\label{subsubsec:D2-2-eb-implementation}

In the fitted measurement model, $\boldsymbol{\theta}$ is estimated from the full dataset (\cref{app:galamm}). The empirical Bayes prediction replaces $\boldsymbol{\theta}$ by its fitted value $\widehat{\boldsymbol{\theta}}$:
\begin{equation}
\widehat{\mathbf{b}}_k^{\,\mathrm{EB}}
\;\equiv\;
\E\!\left[\mathbf{b}_k\mid \mathbf{y},\widehat{\boldsymbol{\theta}}\right]=
\widehat{\mathbf{G}}_k\mathbf{Z}_k^\top\widehat{\mathbf{V}}_k^{-1}\big(\mathbf{y}_k-\mathbf{X}_k\widehat{\boldsymbol{\beta}}\big).
\label{eq:D10}
\end{equation}

Let $\mathbf{S}_{j}$ be the selection matrix that extracts the $6$-vector corresponding to classroom $j$ from $\mathbf{b}_k$, and let $\mathbf{s}_\alpha^\top$ extract the last coordinate (the center intercept). Then the EB predictions used in the paper are
\begin{equation}
\widehat{\boldsymbol{\eta}}_{j}^{\,\mathrm{EB}}=\mathbf{S}_{j}\widehat{\mathbf{b}}_k^{\,\mathrm{EB}},
\qquad
\widehat{\alpha}_k^{\,\mathrm{EB}}=\mathbf{s}_\alpha^\top \widehat{\mathbf{b}}_k^{\,\mathrm{EB}}.
\label{eq:D11}
\end{equation}
The ``dose'' variables in RQ2 are the components $\widehat{\eta}^{\,\mathrm{EB}}_{f,j}$ of $\widehat{\boldsymbol{\eta}}_{j}^{\,\mathrm{EB}}$ (\cref{eq:D1}).

\subsection{In What Sense Does \texorpdfstring{$\widehat{\boldsymbol{\eta}}_{j}^{\,\mathrm{EB}}$}{the EB Factor Score} ``Control for'' Level 1 and Center Effects?}
\label{subsec:D3-control}

The main manuscript describes $\widehat{\eta}^{\,\mathrm{EB}}_{f,j}$ as a measurement-error-adjusted classroom quality signal that partitions out item-level noise and center-wide differences. In the Gaussian setting, this statement can be made precise in two complementary, rigorous ways.

First, as a conditional-expectation object, $\widehat{\boldsymbol{\eta}}_{j}^{\,\mathrm{EB}}$ is the posterior mean of $\boldsymbol{\eta}_j$ under a model that explicitly contains Level-1 residual noise and a Level-3 center intercept. Conditioning on $\mathbf{y}$ therefore updates $\boldsymbol{\eta}_j$ using only the information in $\mathbf{y}$ that is predictive of classroom effects under the full hierarchical covariance $\mathbf{V}_k$.

Second, as a regularized partial coefficient, $\widehat{\boldsymbol{\eta}}_{j}^{\,\mathrm{EB}}$ is the minimizer of a quadratic objective in which classroom effects and center effects are estimated jointly. This implies that center-shared components are attributed to $\alpha_k$ rather than to $\boldsymbol{\eta}_j$.

\subsubsection{Level 1: Accounting for Item-Level Residual Variance via Shrinkage/Weighting}
\label{subsubsec:D3-1-level1}

From \cref{eq:D9}, the posterior mean can be written in the regression form
\begin{equation}
\E\!\left[\mathbf{b}_k\mid \mathbf{y}_k,\boldsymbol{\theta}\right]=
\Cov(\mathbf{b}_k,\mathbf{y}_k)\,\Var(\mathbf{y}_k)^{-1}\,\big(\mathbf{y}_k-\E[\mathbf{y}_k]\big),
\label{eq:D12}
\end{equation}
with $\Var(\mathbf{y}_k)=\mathbf{V}_k$. The Level-1 residual variance $\sigma_\varepsilon^2$ enters as the additive term $\mathbf{R}_k=\sigma_\varepsilon^2\mathbf{I}$ in $\mathbf{V}_k$. Holding fixed the higher-level variance components and loadings, larger $\sigma_\varepsilon^2$ increases $\mathbf{V}_k$ and therefore reduces the magnitude of $\mathbf{V}_k^{-1}(\mathbf{y}_k-\mathbf{X}_k\boldsymbol{\beta})$, shrinking the posterior mean toward zero. In the extreme cases (with other components fixed),
\begin{align}
\sigma_\varepsilon^2\to 0 &\;\Rightarrow\; \E[\mathbf{b}_k\mid \mathbf{y}_k] \text{ approaches an (essentially) least-squares fit}, \notag \\
\sigma_\varepsilon^2\to\infty &\;\Rightarrow\; \E[\mathbf{b}_k\mid \mathbf{y}_k]\to \mathbf{0}.
\label{eq:D13}
\end{align}
Thus Level-1 noise is not ``removed by subtraction''; it is down-weighted by the model-implied precision.

A more local expression is obtained by conditioning on the center intercept. For a given classroom $j$ in center $k$, rewrite \cref{eq:D4} as
\begin{equation}
\mathbf{y}_{j,k}-\mathbf{1}_{n_{j,k}}\alpha_k=
\boldsymbol{\beta}_{j,k}+ \boldsymbol{\Lambda}_{j,k}\boldsymbol{\eta}_j+\boldsymbol{\varepsilon}_{j,k}.
\label{eq:D14}
\end{equation}
Given $\alpha_k$, this is a Bayesian multivariate regression of $\mathbf{y}_{j,k}-\mathbf{1}\alpha_k$ on $\boldsymbol{\eta}_j$ with prior $\boldsymbol{\eta}_j\sim\mathcal{N}(\mathbf{0},\boldsymbol{\Psi}^{(2)})$. Conjugate normal theory yields
\begin{align}
\boldsymbol{\eta}_j \mid \mathbf{y}_{j,k},\alpha_k,\boldsymbol{\theta}
&\sim
\mathcal{N}\Bigg(
\mathbf{B}_{j,k}\big(\mathbf{y}_{j,k}-\boldsymbol{\beta}_{j,k}-\mathbf{1}_{n_{j,k}}\alpha_k\big), \notag \\
&\qquad\quad\;
\left((\boldsymbol{\Psi}^{(2)})^{-1}+\tfrac{1}{\sigma_\varepsilon^2}\boldsymbol{\Lambda}_{j,k}^\top\boldsymbol{\Lambda}_{j,k}\right)^{-1}
\Bigg),
\label{eq:D15}
\end{align}
where
\begin{equation}
\mathbf{B}_{j,k}=
\boldsymbol{\Psi}^{(2)}\boldsymbol{\Lambda}_{j,k}^\top
\big(\boldsymbol{\Lambda}_{j,k}\boldsymbol{\Psi}^{(2)}\boldsymbol{\Lambda}_{j,k}^\top+\sigma_\varepsilon^2\mathbf{I}_{n_{j,k}}\big)^{-1}
\in\R^{6\times n_{j,k}}
\label{eq:D16}
\end{equation}
is the regression-factor-score matrix (equivalently derived via the Woodbury identity). \Cref{eq:D15} shows precisely how Level-1 noise enters both the posterior mean and posterior variance: when the effective information $\boldsymbol{\Lambda}_{j,k}^\top\boldsymbol{\Lambda}_{j,k}/\sigma_\varepsilon^2$ is small (few observed items $n_{j,k}$, weak loadings, or large residual variance), the posterior mean in \cref{eq:D15} shrinks strongly toward the prior mean $\mathbf{0}$.

\subsubsection{Level 3: Separating Center-Wide Shifts from Within-Center Classroom Deviations}
\label{subsubsec:D3-2-level3}

The ``center control'' property follows directly from the fact that $\alpha_k$ enters \cref{eq:D4} as a shared component across all classrooms in center $k$ and is predicted jointly with $\boldsymbol{\eta}_j$. A transparent identity uses the law of iterated expectations:
\begin{equation}
\E[\boldsymbol{\eta}_j\mid \mathbf{y},\boldsymbol{\theta}]=
\E\!\left[
\E[\boldsymbol{\eta}_j\mid \mathbf{y},\alpha_k,\boldsymbol{\theta}]
\;\middle|\;
\mathbf{y},\boldsymbol{\theta}
\right].
\label{eq:D17}
\end{equation}
Because the inner conditional mean in \cref{eq:D15} is affine in $\alpha_k$, and $\mathbf{B}_{j,k}$ does not depend on $\alpha_k$, the outer expectation replaces $\alpha_k$ by its posterior mean:
\begin{equation}
\E[\boldsymbol{\eta}_j\mid \mathbf{y},\boldsymbol{\theta}]=
\mathbf{B}_{j,k}\big(\mathbf{y}_{j,k}-\boldsymbol{\beta}_{j,k}-\mathbf{1}_{n_{j,k}}\E[\alpha_k\mid \mathbf{y},\boldsymbol{\theta}]\big).
\label{eq:D18}
\end{equation}
Under empirical Bayes, \cref{eq:D18} becomes the plug-in factor score
\begin{equation}
\widehat{\boldsymbol{\eta}}_{j}^{\,\mathrm{EB}}=
\widehat{\mathbf{B}}_{j,k}\big(\mathbf{y}_{j,k}-\widehat{\boldsymbol{\beta}}_{j,k}-\mathbf{1}_{n_{j,k}}\widehat{\alpha}_k^{\,\mathrm{EB}}\big).
\label{eq:D19}
\end{equation}

\Cref{eq:D19} is the rigorous sense in which the level-2 EB factor score is ``net of center effects'':
\begin{itemize}[nosep]
\item $\widehat{\alpha}_k^{\,\mathrm{EB}}=\E[\alpha_k\mid \mathbf{y},\widehat{\boldsymbol{\theta}}]$ is learned from all classrooms in center $k$ through the joint posterior (\cref{eq:D9,eq:D10,eq:D11}), and captures the center-wide shift common to classrooms within that center.
\item $\widehat{\boldsymbol{\eta}}_{j}^{\,\mathrm{EB}}$ is computed from within-center residualized item scores $\mathbf{y}_{j,k}-\widehat{\boldsymbol{\beta}}_{j,k}-\mathbf{1}\widehat{\alpha}_k^{\,\mathrm{EB}}$, i.e., after removing the estimated center-wide component.
\item $\widehat{\mathbf{B}}_{j,k}$ then performs Level-1-aware weighting/shrinkage of these residualized item scores into the latent factor space.
\end{itemize}

As a useful limiting case, if $\sigma_\alpha^2\to\infty$ (so that $\alpha_k$ behaves like an unpenalized fixed intercept for each center), the ``residualization'' in \cref{eq:D19} approaches exact within-center demeaning of the item residuals. For finite $\sigma_\alpha^2$, the center intercept itself is partially pooled toward 0, but the decomposition (center-shared component into $\alpha_k$, within-center deviations into $\boldsymbol{\eta}_j$) remains the operative model-based partition.

\subsubsection{Equivalent Characterization as a Regularized Partial Coefficient}
\label{subsubsec:D3-3-regularized}

The EB posterior mean is also the minimizer of a penalized least-squares objective. Fix $\boldsymbol{\theta}$ and define the center-specific residual vector $\mathbf{r}_k=\mathbf{y}_k-\mathbf{X}_k\boldsymbol{\beta}$. Up to additive constants, the negative log posterior of $\mathbf{b}_k$ given $\mathbf{y}_k$ is
\begin{equation}
Q_k(\mathbf{b}_k)=
\frac{1}{2}\big\|\mathbf{r}_k-\mathbf{Z}_k\mathbf{b}_k\big\|_{\mathbf{R}_k^{-1}}^2
+
\frac{1}{2}\mathbf{b}_k^\top \mathbf{G}_k^{-1}\mathbf{b}_k,
\qquad
\| \mathbf{v}\|_{\mathbf{R}_k^{-1}}^2\equiv \mathbf{v}^\top\mathbf{R}_k^{-1}\mathbf{v}.
\label{eq:D20}
\end{equation}
Because $Q_k$ is quadratic, the posterior mean equals the posterior mode and hence the unique minimizer:
\begin{equation}
\E[\mathbf{b}_k\mid \mathbf{y}_k,\boldsymbol{\theta}]=
\arg\min_{\mathbf{b}_k} Q_k(\mathbf{b}_k).
\label{eq:D21}
\end{equation}

Partition $\mathbf{Z}_k=[\mathbf{Z}_{\eta,k}\ \mathbf{Z}_{\alpha,k}]$ and $\mathbf{b}_k=(\boldsymbol{\eta}_{1:J_k}^\top,\alpha_k)^\top$, where $\mathbf{Z}_{\alpha,k}=\mathbf{1}_{N_k}$. Let $\mathbf{G}_{\eta,k}=\mathbf{I}_{J_k}\otimes\boldsymbol{\Psi}^{(2)}$ and $G_\alpha=\sigma_\alpha^2$. The first-order conditions for \cref{eq:D21} are the mixed-model equations
\begin{equation}
\left(
\mathbf{Z}_k^\top\mathbf{R}_k^{-1}\mathbf{Z}_k+\mathbf{G}_k^{-1}
\right)\widehat{\mathbf{b}}_k=
\mathbf{Z}_k^\top\mathbf{R}_k^{-1}\mathbf{r}_k.
\label{eq:D22}
\end{equation}
Solving by block elimination yields an explicit expression for the classroom effects:
\begin{equation}
\widehat{\boldsymbol{\eta}}_{1:J_k}=
\left(
\mathbf{Z}_{\eta,k}^\top\mathbf{R}_k^{-1}\mathbf{M}_{\alpha,k}\mathbf{Z}_{\eta,k}
+\mathbf{G}_{\eta,k}^{-1}
\right)^{-1}
\mathbf{Z}_{\eta,k}^\top\mathbf{R}_k^{-1}\mathbf{M}_{\alpha,k}\mathbf{r}_k,
\label{eq:D23}
\end{equation}
where
\begin{equation}
\mathbf{M}_{\alpha,k}=
\mathbf{I}_{N_k}-
\mathbf{Z}_{\alpha,k}\left(\mathbf{Z}_{\alpha,k}^\top\mathbf{R}_k^{-1}\mathbf{Z}_{\alpha,k}+G_\alpha^{-1}\right)^{-1}\mathbf{Z}_{\alpha,k}^\top\mathbf{R}_k^{-1}
\label{eq:D24}
\end{equation}
is a ridge-type residual-maker that partials out the center intercept. \Cref{eq:D23} makes the ``control'' logic explicit: the EB/BLUP estimate of classroom latent quality is computed from $\mathbf{r}_k$ after removing the component attributable to the center intercept (via $\mathbf{M}_{\alpha,k}$), and with additional shrinkage toward zero induced by $\mathbf{G}_{\eta,k}^{-1}$. Thus $\widehat{\boldsymbol{\eta}}_j^{\,\mathrm{EB}}$ is a regularized partial coefficient for the classroom-level random effects in a model that simultaneously includes the center-level random intercept.

\subsection{Intuition Box: Why Shrinkage Improves on Raw Composites}
\label{subsec:D4-intuition}

\begin{quote}
\textit{Scalar shrinkage and MSE in one line.}
Suppose a single latent classroom trait $\theta_j$ (mean 0, variance $\tau^2$) is observed through an unbiased but noisy classroom signal $\tilde\theta_j=\theta_j+u_j$, with $u_j\sim\mathcal{N}(0,s_j^2)$ independent of $\theta_j$. Then the posterior mean is
\begin{equation}
\E[\theta_j\mid \tilde\theta_j]=
\underbrace{\frac{\tau^2}{\tau^2+s_j^2}}_{\text{shrinkage factor }w_j\in(0,1)}
\tilde\theta_j,
\label{eq:D25}
\end{equation}
and its Bayes risk under squared-error loss is
\begin{equation}
\E\!\left[(\E[\theta_j\mid \tilde\theta_j]-\theta_j)^2\right]=
\frac{\tau^2 s_j^2}{\tau^2+s_j^2}
< s_j^2=
\E\!\left[(\tilde\theta_j-\theta_j)^2\right].
\label{eq:D26}
\end{equation}
Thus shrinkage trades a small amount of bias for a variance reduction that lowers MSE on average---precisely the motivation for EB/BLUP estimators in multilevel settings \citep{Walters2024}.
\end{quote}

The classroom factor-score setting is a multivariate generalization of \cref{eq:D25,eq:D26}. In \cref{eq:D19}, the ``raw signal'' is the vector of residualized item scores $\mathbf{y}_{j,k}-\widehat{\boldsymbol{\beta}}_{j,k}-\mathbf{1}\widehat{\alpha}_k^{\,\mathrm{EB}}$. The scalar shrinkage factor $w_j$ is replaced by the matrix $\widehat{\mathbf{B}}_{j,k}$, which depends on (i) the number of observed items $n_{j,k}$, (ii) the loading pattern $\boldsymbol{\Lambda}_{j,k}$, and (iii) the residual variance $\sigma_\varepsilon^2$. Classrooms with fewer observed items or weaker loadings carry less information about $\boldsymbol{\eta}_j$ and therefore exhibit stronger shrinkage toward the population mean (0 under the factor-mean constraint).

This is also why EB factor scores are generally preferable to raw domain averages when the estimand is latent classroom quality rather than an ad hoc composite. A raw composite implicitly assigns fixed weights, treats all item noise as signal, and (unless explicitly residualized) conflates within-center classroom differences with center-wide shifts. The EB factor score uses the fitted hierarchical measurement model to (i) weight items according to their estimated relation to the latent factor (via loadings), (ii) shrink noisy classroom-level estimates toward the population mean, and (iii) attribute center-shared variation to $\alpha_k$ rather than to classroom-specific $\boldsymbol{\eta}_j$.

\subsection{Posterior Variance and the ``Plug-in Dose'' Caveat in Downstream Regressions}
\label{subsec:D5-posterior-variance}

The EB dose $\widehat{\eta}^{\,\mathrm{EB}}_{f,j}$ is a posterior mean and therefore comes with a posterior (conditional) variance that quantifies remaining uncertainty about $\eta_{f,j}$ after observing that classroom's items and borrowing strength from the fitted model.

From \cref{eq:D9}, the conditional variance of the full random-effects vector is
\begin{equation}
\Var(\mathbf{b}_k\mid \mathbf{y}_k,\boldsymbol{\theta})=
\mathbf{G}_k-\mathbf{G}_k\mathbf{Z}_k^\top\mathbf{V}_k^{-1}\mathbf{Z}_k\mathbf{G}_k,
\label{eq:D27}
\end{equation}
and the conditional variance of the classroom factor vector is obtained by selection:
\begin{equation}
\Var(\boldsymbol{\eta}_j\mid \mathbf{y}_k,\boldsymbol{\theta})=
\mathbf{S}_{j}
\Var(\mathbf{b}_k\mid \mathbf{y}_k,\boldsymbol{\theta})
\mathbf{S}_{j}^\top.
\label{eq:D28}
\end{equation}
Under empirical Bayes, \cref{eq:D27,eq:D28} are evaluated at $\widehat{\boldsymbol{\theta}}$ to yield plug-in posterior variances (often called prediction-error variances in the mixed-model literature).

In the paper's downstream dose--response regressions (\cref{app:dose-response}), we treat $\widehat{\eta}^{\,\mathrm{EB}}_{f,j}$ as an observed exposure variable. This is common in applied work, but it effectively ignores the uncertainty in \cref{eq:D28}. In principle, one could propagate factor-score uncertainty by (i) fitting a joint Bayesian model of measurement and outcomes, (ii) drawing multiple imputations of $\boldsymbol{\eta}_j$ from its posterior and combining downstream estimates, or (iii) applying explicit measurement-error corrections. We do not pursue these extensions here; \cref{app:dose-response} reports inference conditional on the plug-in EB doses.

Finally, note that $\widehat{\eta}^{\,\mathrm{EB}}_{f,j}$ remains an error-prone proxy for $\eta_{f,j}$. To the extent the remaining factor-score error behaves approximately like classical measurement error in the second-stage regression, dose--response slopes are biased toward zero (attenuation). A practical implication is that using EB/BLUP factor scores typically reduces attenuation relative to raw composites, but it does not eliminate it \citep{Walters2024}.

\subsection{Terminology and QC Alignment}
\label{subsec:D6-terminology}

The paper uses ``EB predictions'' to refer to the quantities in \cref{eq:D1,eq:D11}. In the Gaussian LMM framework used here,
\begin{equation}
\text{EB prediction}=\text{posterior mean}=\text{BLUP}=\text{conditional mode}
\label{eq:D29}
\end{equation}
for the random effects, conditional on $\widehat{\boldsymbol{\theta}}$. Thus, any of these equivalent terms refers to the same numerical quantity in this appendix and in \cref{app:dose-response}.

\subsection{Summary}
\label{subsec:D7-summary}

The paper's classroom ``dose'' variables are the domain-specific empirical Bayes factor scores $\widehat{\eta}^{\,\mathrm{EB}}_{f,j}$, defined as posterior means of the level-2 latent factors under the fitted three-level Gaussian measurement model. In the mixed-model representation, these EB scores coincide with BLUPs (and conditional modes) and are computed jointly with the level-3 center intercept $\alpha_k$. Consequently, the EB factor score (i) attenuates the influence of Level-1 item noise through precision-weighting and shrinkage and (ii) attributes center-shared shifts to $\alpha_k$ rather than to classroom-specific $\boldsymbol{\eta}_j$, a fact that is explicit in the residualized form (\cref{eq:D19}) and in the partialing-out operator $\mathbf{M}_{\alpha,k}$ in \cref{eq:D24}. Posterior variances quantify remaining factor-score uncertainty, but the downstream dose--response regressions treat the plug-in EB scores as observed exposures (\cref{app:dose-response}).

\section{Covariate Balancing and Dose--Response Estimation}
\label{app:dose-response}

This appendix provides the formal identification and estimation details underlying the paper's dose--response analyses (RQ2). The main text summarizes the substantive findings and refers readers here for (i) the within-center estimand induced by center-mean centering, (ii) the construction and diagnostics of covariate-balancing weights for continuous treatments, and (iii) the weighted linear and weighted generalized additive model (GAM) estimators used to recover domain-specific dose--response functions.

Throughout, the unit of analysis is the classroom. Notation is aligned with \cref{app:galamm,app:eb-predictions} (measurement model and empirical Bayes doses) and \cref{app:sample} (analytic sample).

\subsection{Notation and Objects Carried from Appendices B--D}
\label{subsec:E0-notation}

Let $k\in\{1,\dots,K\}$ index centers and $j\in\{1,\dots,J_k\}$ index classrooms within center $k$. We write $c(j)=k$ for the center of classroom $j$. For each classroom, let
\begin{itemize}[nosep]
    \item $Y_{p,jk}$ denote the classroom-level average of child outcome $p$ (teacher- or parent-reported), constructed by aggregating child-level scores within classroom $j$ in center $k$.
    \item $\widehat\eta^{\,\mathrm{EB}}_{f,jk}$ denote the empirical Bayes (EB) factor score for latent classroom process quality in domain $f\in\{1,\dots,6\}$, defined and interpreted in \cref{app:eb-predictions}.
    \item $X_{jk}\in\mathbb{R}^{q}$ denote the vector of observed teacher/classroom covariates used for confounding adjustment (in our application $q=26$).
\end{itemize}
RQ2 estimates domain-specific dose--response relationships between $Y_{p,jk}$ and $\widehat\eta^{\,\mathrm{EB}}_{f,jk}$ after removing center-level heterogeneity and balancing observed covariates.

\subsection{Variables Used in RQ2}
\label{subsec:E1-variables}

\subsubsection{Outcomes and Center-Mean Centering}
\label{subsubsec:E1-1-outcomes}

Let $Y_{p,jk}$ be the classroom mean of outcome $p$ in classroom $j$ at center $k$. For teacher-reported outcomes, $p$ includes (as in the main text) the English CDI IRT score and the BITSEA competence and problem scores \citep{Fenson2000,BriggsGowan2006}. Parent-reported outcomes are used for robustness (\cref{app:robustness}) and are treated identically in the estimation pipeline described here.

A key feature of the design is that centers create clustering and center-level common shocks (resources, leadership, staffing practices, neighborhood selection, etc.) that can jointly influence both classroom process quality and child outcomes. To isolate comparisons within centers, we remove center means from classroom outcomes:
\begin{equation}
Z_{p,jk}\;=\;Y_{p,jk}-\overline Y_{p,\cdot k},
\qquad 
\overline Y_{p,\cdot k}:=\frac{1}{J_k}\sum_{j=1}^{J_k}Y_{p,jk}.
\label{eq:E1}
\end{equation}
In words, $Z_{p,jk}$ is the deviation of classroom $j$'s mean outcome from the mean outcome of other classrooms in the same center. In a linear fixed-effects regression, the corresponding ``within'' transformation amounts to regressing the demeaned outcome $Y_{p,jk}-\overline Y_{p,\cdot k}$ on the demeaned dose $D_{jk}-\overline D_{\cdot k}$ (equivalently, including a full set of center indicators in a regression of $Y_{p,jk}$ on $D_{jk}$). Outcome demeaning alone removes additive center-level shifts in $Y$; exact fixed-effects implementation additionally residualizes the dose, as detailed in \cref{subsubsec:E2-2-centering}.

For readability we suppress $p$ in the remainder and write $Z_{jk}$ for the centered classroom outcome.

\subsubsection{Dose Variables: EB Factor Scores as Continuous Treatments}
\label{subsubsec:E1-2-dose}

The paper's ``dose'' variables are the domain-specific empirical Bayes factor scores from the three-level measurement model (\cref{app:galamm,app:eb-predictions}). For each domain $f$,
\begin{equation}
D_{f,jk}
\;\equiv\;
\widehat\eta^{\,\mathrm{EB}}_{f,jk}
=\mathbb{E}\!\left[\eta_{f,jk}\mid \mathbf{y},\widehat{\boldsymbol\theta}\right],
\label{eq:E2}
\end{equation}
where $\eta_{f,jk}$ is the latent classroom process-quality factor and $\mathbf{y}$ is the full set of observed items. The six domains are: QCIT Social-Emotional, QCIT Cognitive, QCIT Language-Literacy, CLASS-T Emotional-Behavioral, CLASS-T Learning, and CLASS-I Responsive (\cref{subsec:item-inventory}). RQ2 fits separate dose--response models for each $(p,f)$ pair (Table~3 in the main text).

Two features of $D_{f,jk}$ matter for the causal analysis:
\begin{enumerate}[nosep]
    \item \textit{Measurement-error reduction.} $D_{f,jk}$ is a precision-weighted, partially pooled signal that down-weights Level-1 item noise (\cref{app:eb-predictions}). Relative to raw composites, this reduces regression dilution (attenuation) when the dose is used as a regressor (\cref{subsec:E3-shrinkage}).
    \item \textit{Center-wide shifts are modeled explicitly.} The measurement model estimates center-level random intercepts jointly with classroom factors. Conceptually, $D_{f,jk}$ represents classroom-specific deviations in latent process quality after accounting for center-wide shifts in item ratings (\cref{app:eb-predictions}). This strengthens the intended ``within-center'' interpretation when combined with outcome demeaning.
\end{enumerate}

\subsubsection{Observed Covariates Used for Balancing}
\label{subsubsec:E1-3-covariates}

Let $X_{jk}$ denote the $q=26$ observed teacher/classroom characteristics used to address classroom-level confounding in RQ2. The covariates are measured at the teacher/classroom level and span (i) teacher background and human capital (e.g., race/ethnicity, experience, credentials), (ii) professional supports (e.g., coaching and training), (iii) classroom structural features (e.g., child--adult ratio, enrollment), and (iv) teacher psychosocial measures (e.g., depressive symptoms, caregiving beliefs about adult roles in learning, and job satisfaction; \citealp{Eaton2004}). Categorical measures are represented via indicator expansions, so $q$ counts the full set of numerical regressors entering the balancing step.

The balancing step requires complete covariate data for $X_{jk}$ (\cref{app:sample}). Unless otherwise noted, all weighting and outcome models are estimated on the corresponding complete-case classroom sample for each outcome $p$.

\subsection{Target Estimand and the Role of Within-Center Centering}
\label{subsec:E2-estimand}

\subsubsection{Potential-Outcomes Estimand for a Continuous ``Dose''}
\label{subsubsec:E2-1-estimand}

Fix a domain $f$ and suppress the $f$ index for readability. Let $D_{jk}\in\mathbb{R}$ denote the realized dose in classroom $j$ at center $k$. For any dose level $d$ in the support of $D$, let $Y_{jk}(d)$ denote the potential outcome that would be observed for classroom $j$ in center $k$ if (counterfactually) the classroom's latent process quality were set to $d$. The population dose--response function (DRF) is
\begin{equation}
\mu(d) \;:=\; \mathbb{E}\!\left[\,Y_{jk}(d)\,\right].
\label{eq:E3}
\end{equation}
In the main text we focus on within-center comparisons to reduce sensitivity to between-center selection. An equivalent way to express the estimand is through a center-fixed-effects marginal structural model (MSM),
\begin{equation}
\mathbb{E}\!\left[Y_{jk}(d)\mid k\right] = \alpha_k + m(d),
\label{eq:E4}
\end{equation}
where $\alpha_k$ captures center-specific baseline differences and $m(\cdot)$ is the (common) causal DRF shape. If effect heterogeneity across centers is present (i.e., $m$ depends on $k$), the estimators below recover an average of center-specific DRFs weighted by the distribution of doses and the weighting scheme; we do not attempt to identify center-by-dose interactions in the current application.

\subsubsection{Center Demeaning as a Fixed-Effects Transformation}
\label{subsubsec:E2-2-centering}

Consider the working model
\begin{equation}
Y_{jk} = \alpha_k + m(D_{jk}) + \varepsilon_{jk},
\qquad 
\mathbb{E}[\varepsilon_{jk}\mid D_{jk},k]=0.
\label{eq:E5}
\end{equation}
Let $\overline Y_{\cdot k}$ and $\overline D_{\cdot k}$ denote within-center means. Subtracting within-center means yields the ``within'' transformation
\begin{equation}
Y_{jk}-\overline Y_{\cdot k}=
\bigl(m(D_{jk})-\overline m_{\cdot k}\bigr)
+
(\varepsilon_{jk}-\overline\varepsilon_{\cdot k}).
\label{eq:E6}
\end{equation}
If $m(d)=\beta_1 d$ is linear, then $\overline m_{\cdot k}=\beta_1\overline D_{\cdot k}$ and
\begin{equation}
Z_{jk}=
\beta_1\,(D_{jk}-\overline D_{\cdot k})
+(\varepsilon_{jk}-\overline\varepsilon_{\cdot k}),
\label{eq:E7}
\end{equation}
so the fixed-effects estimator of $\beta_1$ is identical to the OLS regression of demeaned outcomes on demeaned doses. In our application, the EB dose is estimated from a measurement model that includes a center-level random intercept, which helps separate center-wide rating shifts from classroom-specific factors (\cref{app:eb-predictions}). Nevertheless, the linear fixed-effects MSM is obtained by residualizing both the outcome and the dose within centers (or equivalently, by including a full set of center indicators in the outcome model). Accordingly, the analysis implements the within-center logic either by working with within-center residuals $Z_{jk}=Y_{jk}-\overline Y_{\cdot k}$ and $\widetilde D_{jk}=D_{jk}-\overline D_{\cdot k}$, or by including center fixed effects directly; both routes yield numerically identical estimates of $\beta_1$ in linear MSMs.

\subsubsection{What Is (and Is Not) Claimed Causally}
\label{subsubsec:E2-3-causal}

The causal interpretation of the estimated DRFs requires assumptions that cannot be verified from observed data alone. We therefore make two distinctions explicit.
\begin{enumerate}[nosep]
    \item \textit{Within-center causal interpretation (conditional).} Under the identification assumptions in \cref{subsec:E4-identification}---most importantly, within-center unconfoundedness given observed covariates---our weighted estimators recover the causal effect of increasing latent classroom process quality within a center, holding fixed the center's stable characteristics.
    \item \textit{No causal interpretation for between-center differences.} Any association between center-average process quality and center-average child outcomes is not interpreted causally here; it is precisely the kind of association most vulnerable to unobserved center-level confounding, which motivates the within-center design.
\end{enumerate}

\subsection{Empirical Bayes Doses, Measurement Error, and ``Shrinkage on the Right''}
\label{subsec:E3-shrinkage}

RQ2 treats $D_{jk}=\widehat\eta^{\,\mathrm{EB}}_{jk}$ as a continuous exposure. Because $D_{jk}$ is estimated from finite item information, it is an error-prone proxy for latent classroom quality $\eta_{jk}$. This section clarifies how EB shrinkage relates to classical attenuation bias and why we do not apply shrinkage to the outcome.

\subsubsection{Classical Attenuation with a Noisy Regressor}
\label{subsubsec:E3-1-attenuation}

Suppose the latent data-generating model for a centered outcome is
\begin{equation}
Z_{jk} = \beta_0 + \beta_1 \eta_{jk} + \varepsilon_{jk},
\qquad 
\mathbb{E}[\varepsilon_{jk}\mid \eta_{jk}]=0,
\label{eq:E8}
\end{equation}
but we observe a noisy proxy $\widehat\eta_{jk}=\eta_{jk}+u_{jk}$ with $\mathbb{E}[u_{jk}]=0$ and $\mathbb{E}[u_{jk}\eta_{jk}]=0$. Then the naive regression of $Z_{jk}$ on $\widehat\eta_{jk}$ yields attenuation:
\begin{equation}
\text{plim}\,\widehat\beta_1^{\,\text{naive}}=
\beta_1\cdot \frac{\Var(\eta_{jk})}{\Var(\eta_{jk})+\Var(u_{jk})}
\;\in\;(0,\beta_1),
\label{eq:E9}
\end{equation}
the standard errors-in-variables result \citep[e.g.,][]{wooldridge2010econometric,Walters2024}.

\subsubsection{EB Posterior Means as Regression-Calibration Regressors}
\label{subsubsec:E3-2-regression-calibration}

Under a Gaussian signal-plus-noise model, the oracle posterior mean of $\eta_{jk}$ given $\widehat\eta_{jk}$ is
\begin{equation}
\eta^*_{jk}
:=\mathbb{E}[\eta_{jk}\mid \widehat\eta_{jk}]=
\lambda\,\widehat\eta_{jk} + (1-\lambda)\mu_\eta,
\qquad 
\lambda=\frac{\sigma_\eta^2}{\sigma_\eta^2+s^2},
\label{eq:E10}
\end{equation}
where $\sigma_\eta^2=\Var(\eta_{jk})$ and $s^2=\Var(u_{jk})$. \citet{Walters2024} shows that regressing $Z_{jk}$ on the oracle posterior mean $\eta^*_{jk}$ recovers the same slope as regressing on the true latent regressor $\eta_{jk}$ in the linear model:
\begin{equation}
\frac{\Cov(Z_{jk},\eta^*_{jk})}{\Var(\eta^*_{jk})}=
\beta_1.
\label{eq:E11}
\end{equation}
This is the formal sense in which ``shrinkage on the right fixes attenuation'' in linear regressions: using the conditional expectation of the latent regressor given its noisy estimate implements an errors-in-variables correction.

In our application, $D_{jk}=\widehat\eta^{\,\mathrm{EB}}_{jk}$ is an empirical (plug-in) version of the posterior mean computed from the fitted multilevel measurement model (\cref{app:eb-predictions}). This EB/BLUP construction is a direct multilevel generalization of classic James--Stein and parametric empirical Bayes shrinkage ideas \citep{james1961estimation,morris1983parametric}. The multivariate mixed-model structure differs from the scalar toy model above, but the core logic carries over: BLUP/EB factor scores are conditional expectations under the fitted Gaussian model and therefore behave like regression-calibration regressors. Importantly, \cref{app:eb-predictions} emphasizes a conservative practical point: because $D_{jk}$ is still an estimated proxy (and we condition on $\widehat{\boldsymbol\theta}$), some attenuation may remain in finite samples, and we interpret dose--response slopes as conservative when residual measurement error is present \citep{Walters2024}.

\subsubsection{Why We Do Not Shrink the Outcome}
\label{subsubsec:E3-3-no-shrink-outcome}

Shrinking the dependent variable is not an analog of measurement-error correction. If $Z_{jk}$ is observed with noise that is independent of the regressor, classical measurement error in the outcome inflates variance but does not bias the slope. By contrast, replacing $Z_{jk}$ with an empirical Bayes posterior mean $Z^*_{jk}$ partially pools outcomes toward a prior mean, reducing variance at the cost of systematic distortion in cross-classroom variation. \citet{Walters2024} shows that this ``shrinkage on the left'' attenuates regression coefficients toward zero even in simple settings. Accordingly, RQ2 applies EB shrinkage only to the dose (through the measurement model) and uses the centered classroom outcome $Z_{jk}$ without further pooling.

\subsection{Identification Assumptions for Continuous Treatments and Balancing Weights}
\label{subsec:E4-identification}

\subsubsection{Identification Assumptions}
\label{subsubsec:E4-1-assumptions}

For each domain $f$, identification of the DRF $\mu_f(d)$ from observational data relies on standard assumptions adapted to a continuous treatment setting \citep{robins2000marginal,imbens2000role,hirano2004propensity,imai2004causal}. Let $D$ denote the dose and $X$ denote observed covariates.
\begin{enumerate}[nosep]
    \item \textit{Consistency.} If $D_{jk}=d$, then $Y_{jk}=Y_{jk}(d)$.
    \item \textit{No interference (SUTVA at the classroom level).} $Y_{jk}(d)$ depends only on classroom $j$'s own dose, not on doses in other classrooms.
    \item \textit{Within-center weak unconfoundedness.} For all $d$ in the dose support,
    \begin{equation}
    Y_{jk}(d)\;\perp\!\!\!\perp\; D_{jk}\;\bigm|\;X_{jk},k.
    \label{eq:E12}
    \end{equation}
    This is a selection-on-observables assumption within centers: after conditioning on the observed teacher/classroom covariates and center membership, remaining variation in $D_{jk}$ is as-if random.
    \item \textit{Overlap / positivity (within centers).} The conditional density $f_{D\mid X,k}(d\mid x,k)$ is bounded away from zero on the relevant support: within strata defined by $(X,k)$, each dose level of interest has positive probability.
\end{enumerate}

Assumption \eqref{eq:E12} is substantively demanding. Our design mitigates its plausibility burden by (i) eliminating center-wide confounding via centering/fixed effects and (ii) using a rich set of $q=26$ classroom covariates that cover teacher background, classroom structure, and teacher psychosocial characteristics.

\subsubsection{Two Families of Estimators for Continuous Treatments}
\label{subsubsec:E4-2-estimators}

Under the identification assumptions in \cref{subsubsec:E4-1-assumptions}, a common identification route uses the generalized propensity score (GPS), defined as the conditional density $r(d,x,k)=f_{D\mid X,k}(d\mid x,k)$ \citep{imbens2000role,hirano2004propensity}. One can then estimate the DRF using outcome modeling, stratification, matching, or weighting. For weighting, a stabilized inverse-probability weight takes the form
\begin{equation}
w^{\text{IPW}}_{jk}=
\frac{f_D(D_{jk})}{f_{D\mid X,k}(D_{jk}\mid X_{jk},k)},
\label{eq:E13}
\end{equation}
which reweights the sample to a pseudo-population in which (ideally) $D$ is independent of $X$ within centers \citep{robins2000marginal}. In practice, however, estimating $f_{D\mid X,k}$ well is difficult, especially with many covariates and non-Gaussian dose distributions, motivating methods that target balance directly.

We therefore view our approach through the lens of covariate balancing weights for continuous treatments. We compared two broad families of methods (\cref{tab:E1}; \cref{fig:E1,fig:E2}):
\begin{enumerate}[nosep]
    \item GPS-based weighting with flexible nuisance models, including generalized linear models \citep{robins2000marginal,naimi2014constructing}, generalized boosted models \citep{McCaffrey2004,zhu2015boosting}, covariate-balancing GPS \citep{fong2018cbgps}, and BART-based GPS approaches \citep{chipman2010bart,hill2011challenges}.
    \item Convex optimization weights that enforce balance by construction, including entropy balancing for continuous treatments \citep{Tubbicke2022,vegetabile2021nonparametric}, generic constrained optimization-based balancing \citep{greifer2020estimating}, and energy/independence balancing weights \citep{huling2023independence}.
\end{enumerate}

Empirically, the convex optimization methods delivered the strongest reduction in dose--covariate associations in our data (\cref{fig:E1,fig:E2}), and we selected entropy balancing as the primary weighting method for all reported RQ2 estimates.

\begin{table}[htbp]
\centering
\caption{Overview of candidate weighting methods for continuous treatments.}
\label{tab:E1}
\footnotesize
\begin{adjustbox}{max width=\textwidth}
\begin{tabular}{@{}>{\raggedright}p{2.2cm}>{\raggedright}p{2.8cm}>{\raggedright}p{5.5cm}>{\raggedright\arraybackslash}p{4cm}@{}}
\toprule
Method Family & Method & Key Principle & Core References \\
\midrule
GPS-based & GLM-based GPS weighting & Specify a parametric model for $f_{D\mid X,k}(d\mid x,k)$ and weight by a stabilized density ratio $f_D(d)/f_{D\mid X,k}(d\mid x,k)$. & \citet{robins2000marginal}; \citet{naimi2014constructing} \\
\addlinespace
GPS-based & GBM weighting & Estimate the GPS flexibly via generalized boosted models and form stabilized weights; tuning can prioritize balance over prediction. & \citet{McCaffrey2004}; \citet{zhu2015boosting} \\
\addlinespace
GPS-based & Covariate-balancing GPS (CBGPS) & Use moment conditions (GMM) to estimate the GPS while directly targeting covariate balance for continuous treatments. & \citet{fong2018cbgps} \\
\addlinespace
GPS-based & BART-based GPS weighting & Flexibly model the GPS using Bayesian additive regression trees and form stabilized weights. & \citet{chipman2010bart}; \citet{hill2011challenges} \\
\addlinespace
Convex optimization & Entropy balancing (EBCT) & Solve a convex entropy-minimization problem subject to balance constraints that render the dose orthogonal to covariates (and possibly higher-order dose terms). & \citet{hainmueller2012entropy}; \citet{Tubbicke2022}; \citet{vegetabile2021nonparametric} \\
\addlinespace
Convex optimization & General constrained optimization weights & Solve a constrained optimization problem under user-specified balance constraints (exact or approximate). & \citet{greifer2020estimating} \\
\addlinespace
Convex optimization & Energy/independence balancing & Choose weights that minimize a dependence measure between $(D,X)$ and a target distribution, encouraging near-independence under weighting. & \citet{huling2023independence} \\
\bottomrule
\end{tabular}
\end{adjustbox}
\vspace{0.5em}

\noindent\footnotesize\textit{Note.} GPS = generalized propensity score; GLM = generalized linear model; GBM = generalized boosted model; CBGPS = covariate-balancing generalized propensity score; BART = Bayesian additive regression trees; EBCT = entropy balancing for continuous treatments; KL = Kullback--Leibler. This table summarizes the method inventory used for the weighting-method comparison in \cref{fig:E1,fig:E2}.
\end{table}

\subsection{Entropy Balancing for Continuous Treatments: Construction and Interpretation}
\label{subsec:E5-ebct}

\subsubsection{Optimization Problem (EBCT)}
\label{subsubsec:E5-1-optimization}

Entropy balancing for continuous treatments (EBCT) extends \citeauthor{hainmueller2012entropy}'s (\citeyear{hainmueller2012entropy}) entropy balancing from binary exposures to continuous doses \citep{Tubbicke2022}. Fix a domain $f$ and let $D_{jk}$ be the corresponding dose.

Let $b(X_{jk})$ denote a (possibly expanded) covariate vector that can include main effects and, if desired, higher-order terms and interactions among covariates. Let $\widetilde b(X_{jk}) := b(X_{jk})-\overline{b(X)}$ be the centered covariate expansion, where $\overline{b(X)}$ is a chosen target (typically the sample mean under base weights). Let $\widetilde D_{jk}^r:=D_{jk}^r-\overline{D^r}$ denote centered polynomial terms of the dose for orders $r=1,\dots,p$. Define the EBCT balancing-function vector
\begin{equation}
g_{jk}(p)=
\Bigl[
\widetilde b(X_{jk})^\top,\;
\widetilde D_{jk},\dots,\widetilde D_{jk}^p,\;
\widetilde b(X_{jk})^\top\widetilde D_{jk},\dots,\widetilde b(X_{jk})^\top\widetilde D_{jk}^p
\Bigr]^\top.
\label{eq:E14}
\end{equation}

EBCT chooses positive weights $w_{jk}$ by solving the convex program
\begin{equation}
\begin{aligned}
\min_{\{w_{jk}\}} \quad 
& \sum_{j,k} w_{jk}\log\!\left(\frac{w_{jk}}{q_{jk}}\right)\\
\text{s.t.}\quad 
& \sum_{j,k} w_{jk}\,g_{jk}(p)=0,\\
& \sum_{j,k} w_{jk}=1,\qquad w_{jk}>0,
\end{aligned}
\label{eq:E15}
\end{equation}
where $q_{jk}$ are analyst-chosen base weights (typically uniform weights $q_{jk}\propto 1$, but they can incorporate design weights if desired). The constraints enforce (weighted) mean preservation for $b(X)$ and $D^r$ and enforce zero weighted correlation between covariates and dose-polynomial terms through the cross-moment constraints $\widetilde b(X)\widetilde D^r$ \citep{Tubbicke2022}.

\subsubsection{Exponential Tilting Form and Interpretation}
\label{subsubsec:E5-2-exponential-tilting}

Because \eqref{eq:E15} is a strictly convex entropy minimization problem with linear constraints, any feasible solution is unique. The Lagrangian yields an exponential-tilting representation:
\begin{equation}
w_{jk}=
\frac{q_{jk}\exp\!\bigl(\lambda^\top g_{jk}(p)\bigr)}
{\sum_{j',k'} q_{j'k'}\exp\!\bigl(\lambda^\top g_{j'k'}(p)\bigr)},
\label{eq:E16}
\end{equation}
for a vector of dual parameters $\lambda$ chosen so that the balance constraints hold exactly. This representation clarifies EBCT as the ``closest'' (in KL divergence) reweighting of the base weights that satisfies the imposed moment restrictions---equivalently, the maximum-entropy distribution subject to balance.

\begin{figure}[htbp]
\centering
\includegraphics[width=0.95\textwidth]{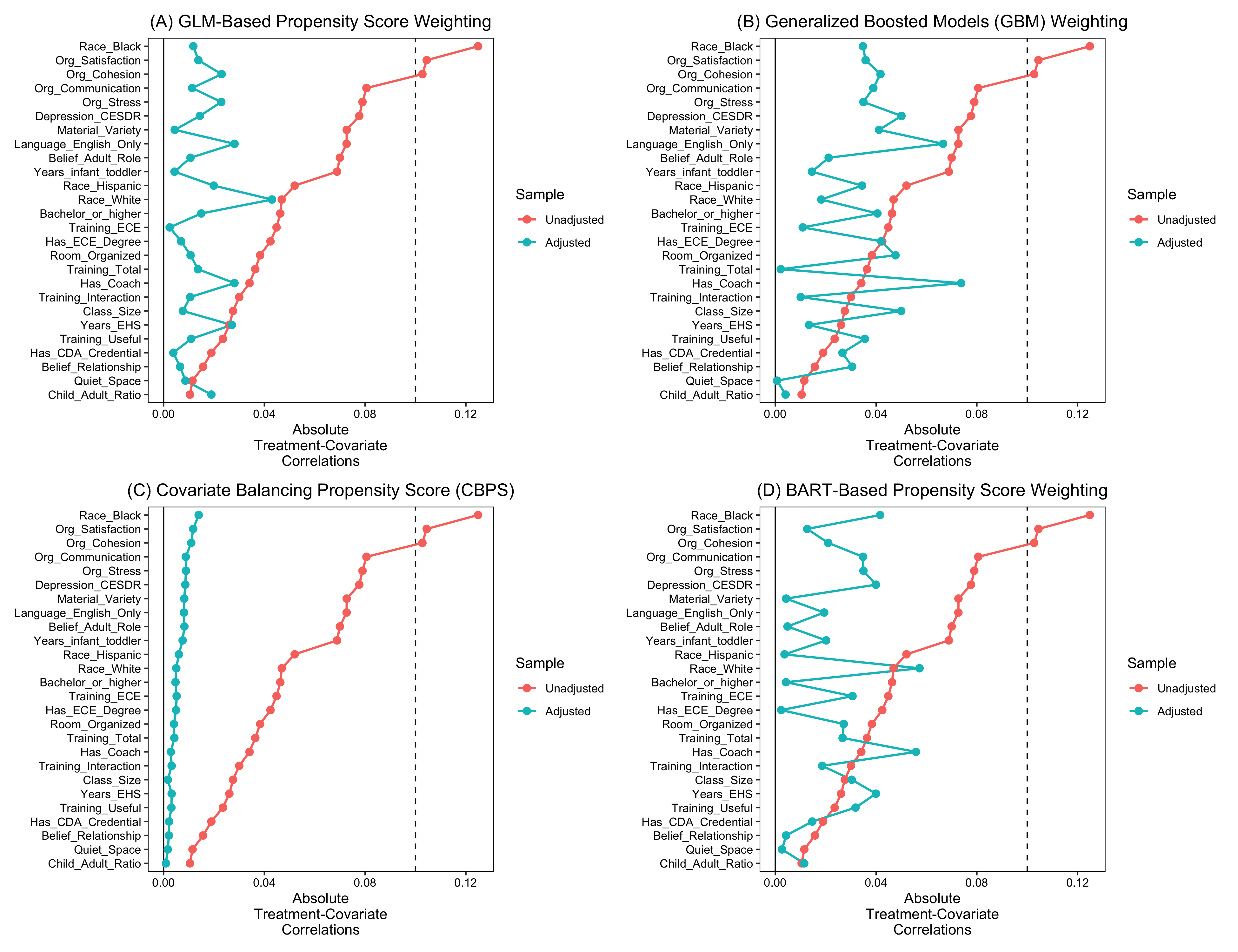}
\caption{Covariate balance before and after weighting for four propensity score-based methods.}
\label{fig:E1}
\vspace{0.5em}

\noindent\footnotesize\textit{Note.} Each panel displays the absolute weighted correlation between individual covariates (rows) and the continuous dose (QCIT Cognitive domain EB factor score) for one GPS-based weighting method. The left edge of each panel shows the unadjusted (unweighted) correlation; the right edge shows the weighted correlation after applying the method. Correlations closer to zero indicate better balance. The four methods shown are GLM-based GPS weighting, GBM weighting, covariate-balancing GPS (CBGPS), and BART-based GPS weighting. The same diagnostic workflow is applied to each domain-specific dose; the QCIT Cognitive dose is shown as an illustrative example.
\end{figure}

\subsubsection{Why Balancing Moments Can Identify Causal Parameters}
\label{subsubsec:E5-3-balance-causal}

EBCT is designed to make the dose orthogonal to observed covariates in a way tailored to continuous-treatment outcome models.

A particularly transparent link to causal identification arises in linear MSMs. Suppose the true conditional mean is additive in $X$,
\begin{equation}
\mathbb{E}[Z_{jk}\mid D_{jk},X_{jk},k]=
\beta_0 + \beta_1 D_{jk} + \beta_X^\top b(X_{jk}),
\label{eq:E17}
\end{equation}
and consider the weighted regression of $Z_{jk}$ on $D_{jk}$ alone. The usual omitted-variable-bias formula (in the weighted population) implies
\begin{equation}
\beta_1^{\,\text{(omit }X\text{)}}=
\beta_1
+
\beta_X^\top
\frac{\Cov_w\!\bigl(b(X_{jk}),D_{jk}\bigr)}
{\Var_w(D_{jk})}.
\label{eq:E18}
\end{equation}
If the weights satisfy $\Cov_w(b(X),D)=0$ (as enforced by EBCT when $p\ge 1$), then the bias term vanishes and the weighted regression of $Z$ on $D$ recovers $\beta_1$ even without including $X$ in the outcome model. More generally, if one fits an outcome model involving basis functions $\phi(D)$ (e.g., polynomials, spline bases), then balancing the corresponding cross-moments between $b(X)$ and $\phi(D)$ eliminates the leading bias term for estimators in that model class. This is the guiding principle of EBCT: it enforces a set of orthogonality conditions between covariates and dose functions rich enough to support flexible DRF estimation, without explicitly estimating the GPS density.

\subsubsection{Implementation in This Study}
\label{subsubsec:E5-4-implementation}

We estimate a separate EBCT weight vector $w^{(f)}$ for each domain-specific dose $D_{f,jk}$. Because weighting targets the treatment assignment mechanism, weights are domain-specific but not outcome-specific: for a fixed domain $f$, the same weights $w^{(f)}$ are applied to all teacher outcomes and to all parent outcomes (\cref{app:robustness}). This ensures methodological consistency across outcomes and avoids outcome-adaptive weighting.

Balancing diagnostics and weight-distribution diagnostics are evaluated for each domain; examples are shown in \cref{fig:E1,fig:E2} and Figure~2 of the main text.

\begin{figure}[htbp]
\centering
\includegraphics[width=0.95\textwidth]{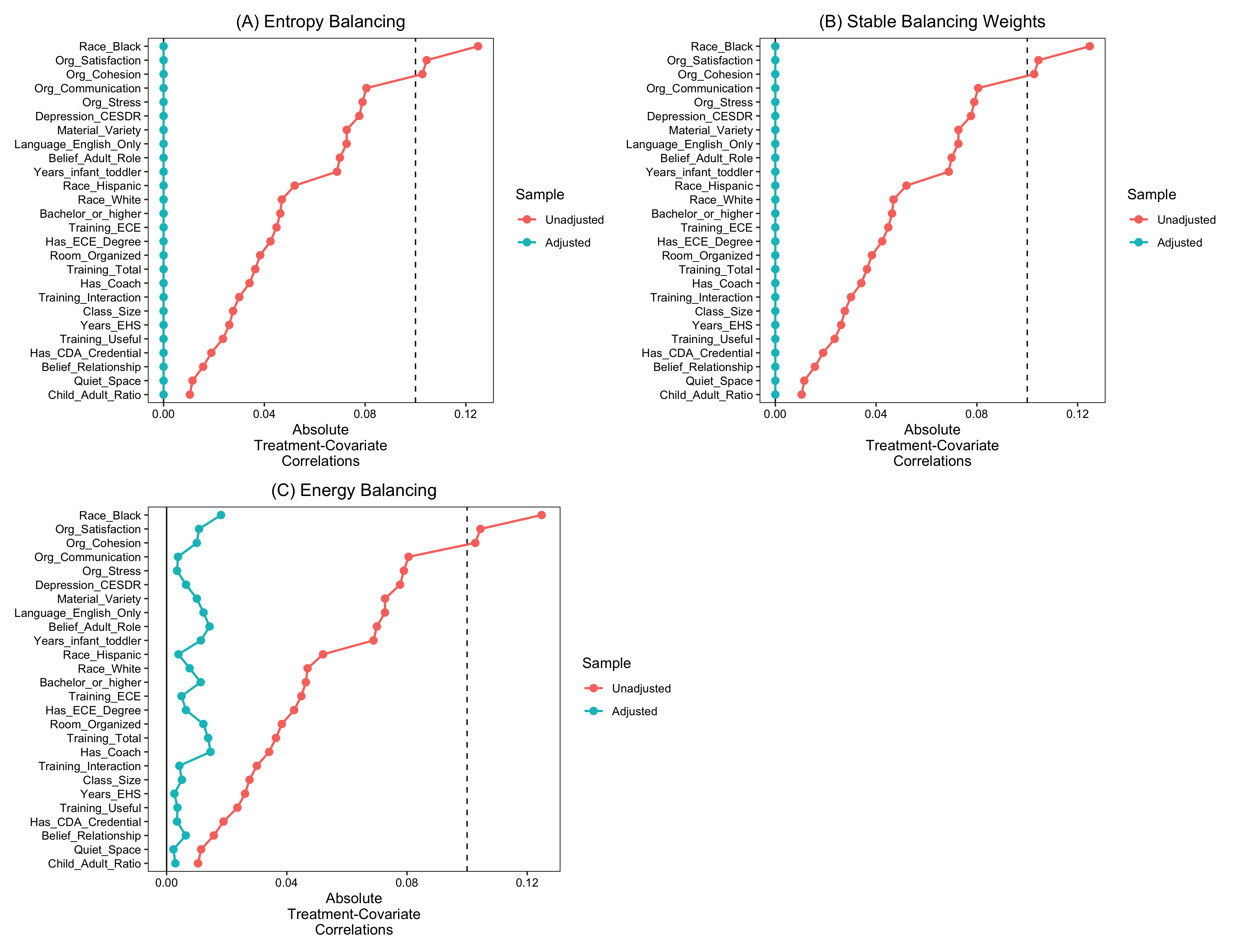}
\caption{Covariate balance before and after weighting for three convex optimization-based methods.}
\label{fig:E2}
\vspace{0.5em}

\noindent\footnotesize\textit{Note.} Each panel displays the absolute weighted correlation between individual covariates (rows) and the continuous dose (QCIT Cognitive domain EB factor score) for one convex optimization-based weighting method. The left edge of each panel shows the unadjusted (unweighted) correlation; the right edge shows the weighted correlation after applying the method. Correlations closer to zero indicate better balance. The three methods shown are entropy balancing for continuous treatments (EBCT), general constrained optimization weights, and energy/independence balancing. EBCT achieves near-zero correlations for essentially all covariates, motivating its selection as the primary weighting method for RQ2.
\end{figure}

\subsection{Diagnostics: Balance, Weight Behavior, and Effective Sample Size}
\label{subsec:E6-diagnostics}

Covariate balancing methods are only as credible as their diagnostics. We emphasize three checks.

\subsubsection{Dose--Covariate Balance}
\label{subsubsec:E6-1-balance}

For continuous treatments, a natural balance diagnostic is the (weighted) correlation between each covariate and the dose (or, more generally, between each element of $b(X)$ and each basis function of the dose). \Cref{fig:E1,fig:E2} plot absolute dose--covariate correlations before and after weighting across candidate methods, and Figure~2 in the main text provides an illustrative comparison (GBM vs EBCT) for one domain. In the selected EBCT specification, these correlations are pushed close to zero for essentially all covariates, indicating that the observed covariate distribution is nearly orthogonal to the dose in the weighted pseudo-population.

As recommended by \citet{Tubbicke2022}, an additional diagnostic (not shown) is to estimate pseudo-DRFs that replace the outcome with each covariate; if balancing is successful, the resulting pseudo-DRFs should be flat over the dose range.

\subsubsection{Weight Distribution and Practical Positivity}
\label{subsubsec:E6-2-weights}

Even when balance is excellent, extreme weights can produce unstable finite-sample estimates and indicate near-violations of overlap/positivity. We therefore inspect the weight distribution (minimum, median, maximum, dispersion) and assess whether a small number of classrooms receive disproportionate influence.

When extreme weights occur, one can apply trimming or truncation (e.g., capping weights at a high percentile) at the cost of changing the estimand to one defined on the trimmed support. In our application, EBCT delivered comparatively stable weights relative to GPS-based methods in the candidate set (\cref{fig:E1,fig:E2}), and we therefore report untrimmed EBCT results as primary.

\subsubsection{Effective Sample Size (ESS)}
\label{subsubsec:E6-3-ess}

A summary measure of weight concentration is the effective sample size:
\begin{equation}
\mathrm{ESS}=
\frac{\bigl(\sum_{j,k} w_{jk}\bigr)^2}{\sum_{j,k} w_{jk}^2}.
\label{eq:E19}
\end{equation}
With weights normalized to sum to 1, this reduces to $\mathrm{ESS}=1/\sum w_{jk}^2$. ESS is reported and compared across weighting methods during method selection; it quantifies how much information remains after reweighting and complements balance diagnostics.

\subsection{Outcome Models: Weighted Linear MSM and Weighted GAM DRFs}
\label{subsec:E7-outcome-models}

Fix a domain $f$ and suppress the domain index. Let $w_{jk}$ be the selected EBCT weight and let $Z_{jk}$ be the center-mean-centered classroom outcome.

\subsubsection{Weighted Linear Regression}
\label{subsubsec:E7-1-linear}

The linear MSM is
\begin{equation}
\mathbb{E}_w[Z_{jk}\mid D_{jk}]=
\gamma_0+\gamma_1 D_{jk},
\label{eq:E20}
\end{equation}
estimated by weighted least squares (WLS):
\begin{equation}
(\widehat\gamma_0,\widehat\gamma_1)=
\arg\min_{\gamma_0,\gamma_1}
\sum_{j,k} w_{jk}\,\bigl(Z_{jk}-\gamma_0-\gamma_1 D_{jk}\bigr)^2.
\label{eq:E21}
\end{equation}
Under successful balancing and the identification assumptions in \cref{subsec:E4-identification}, $\widehat\gamma_1$ estimates the within-center causal linear effect of increasing the latent process-quality dose by one unit (in the scale of the EB factor score).

\subsubsection{Weighted Generalized Additive Models (GAMs)}
\label{subsubsec:E7-2-gam}

To allow a flexible DRF, we estimate the GAM-MSM
\begin{equation}
\mathbb{E}_w[Z_{jk}\mid D_{jk}]=
\gamma_0 + f(D_{jk}),
\label{eq:E22}
\end{equation}
where $f(\cdot)$ is an unknown smooth function. In spline form $f(d)=\sum_{m=1}^M \theta_m B_m(d)$, estimation solves a penalized weighted least-squares problem,
\begin{align}
(\widehat\gamma_0,\widehat{\boldsymbol\theta})=
\arg\min_{\gamma_0,\boldsymbol\theta}
\Biggl\{
\sum_{j,k} w_{jk}\,\Bigl(Z_{jk}-\gamma_0-\sum_{m=1}^M \theta_m B_m(D_{jk})\Bigr)^2
\;+\;
\lambda\,\boldsymbol\theta^\top S\,\boldsymbol\theta
\Biggr\},
\label{eq:E23}
\end{align}
where $S$ is the spline penalty matrix. We implement $f(\cdot)$ using penalized regression splines (thin-plate regression spline basis in the default implementation of \texttt{mgcv}) and select the smoothing parameter $\lambda$ by restricted maximum likelihood (REML), following standard recommendations for stable smoothing-parameter estimation \citep{Wood2017}. The fitted curve $\widehat f(\cdot)$ is interpreted as the estimated causal DRF in the weighted pseudo-population.

A key diagnostic for nonlinearity is the effective degrees of freedom (edf) of the smooth term. In penalized spline GAMs, $\mathrm{edf}\approx 1$ indicates that the penalty has shrunk the smooth close to a linear function, whereas $\mathrm{edf}>1$ indicates curvature beyond a straight line. We report edf and the smooth-term significance test in Table~3 of the main text to summarize the degree of nonlinearity.

\subsubsection{Inference}
\label{subsubsec:E7-3-inference}

All inference is conditional on the estimated weights and on the plug-in EB doses. This is standard in two-stage MSM implementations. Because weights can induce heteroskedasticity and because residual dependence may remain within centers, we use robust variance estimation in the implementation (details in the analysis scripts). The GAM uncertainty bands are computed from the model's approximate covariance for the spline coefficients \citep{Wood2017}, yielding pointwise confidence intervals for $\widehat f(d)$.

\subsection{Sensitivity Analyses and Robustness}
\label{subsec:E8-sensitivity}

Two sensitivity checks are embedded in the RQ2 workflow.
\begin{enumerate}[nosep]
    \item \textit{Weighting-method comparison.} \Cref{fig:E1,fig:E2} compare balance across seven continuous-treatment weighting methods, spanning GPS-based and convex-optimization approaches (\cref{tab:E1}). The main text Figure~2 provides a focused comparison between GBM-based GPS weighting and EBCT. The qualitative conclusion---near-zero covariate--dose associations after EBCT and stable weights---motivates choosing EBCT as the primary method.
    \item \textit{Alternative outcomes (\cref{app:robustness}).} Parent-reported outcomes serve as a robustness check against potential bias from shared teacher-reporting methods. \Cref{app:robustness} applies the same domain-specific EBCT weights and the same linear/GAM MSM specifications described in this appendix, ensuring methodological comparability.
\end{enumerate}

\section{Supplemental Dose--Response Analyses: Teacher-Reported BITSEA Problem and Parent-Reported Outcomes}
\label{app:robustness}

This appendix consolidates two sets of supplemental results referenced in the main text but omitted for brevity: (i) dose--response curves for teacher-reported BITSEA Problem scores, which complement the main-text figures for CDI IRT and BITSEA Competence, and (ii) complete regression results for parent-reported child outcomes, which serve as a robustness check against potential common-source bias. Throughout, the analytic approach follows the specifications documented in \cref{app:eb-predictions,app:dose-response}, including center-mean centering, entropy balancing weights, and both linear and generalized additive model (GAM) specifications. Analytic sample sizes by outcome and reporter are detailed in \cref{app:sample} (\cref{tab:A2}).

\subsection{Dose--Response Curves for Teacher-Reported BITSEA Problem Scores}
\label{subsec:F1-bitsea-problem}

The main text reports that BITSEA Problem scores show no significant associations with any classroom process quality domain in either linear or GAM models (Table~3, all $p > 0.10$). \Cref{fig:F1} presents the dose--response curves underlying these null findings, allowing readers to visually verify that the relationships are essentially flat across domains.

\begin{figure}[htbp]
\centering
\includegraphics[width=0.95\textwidth]{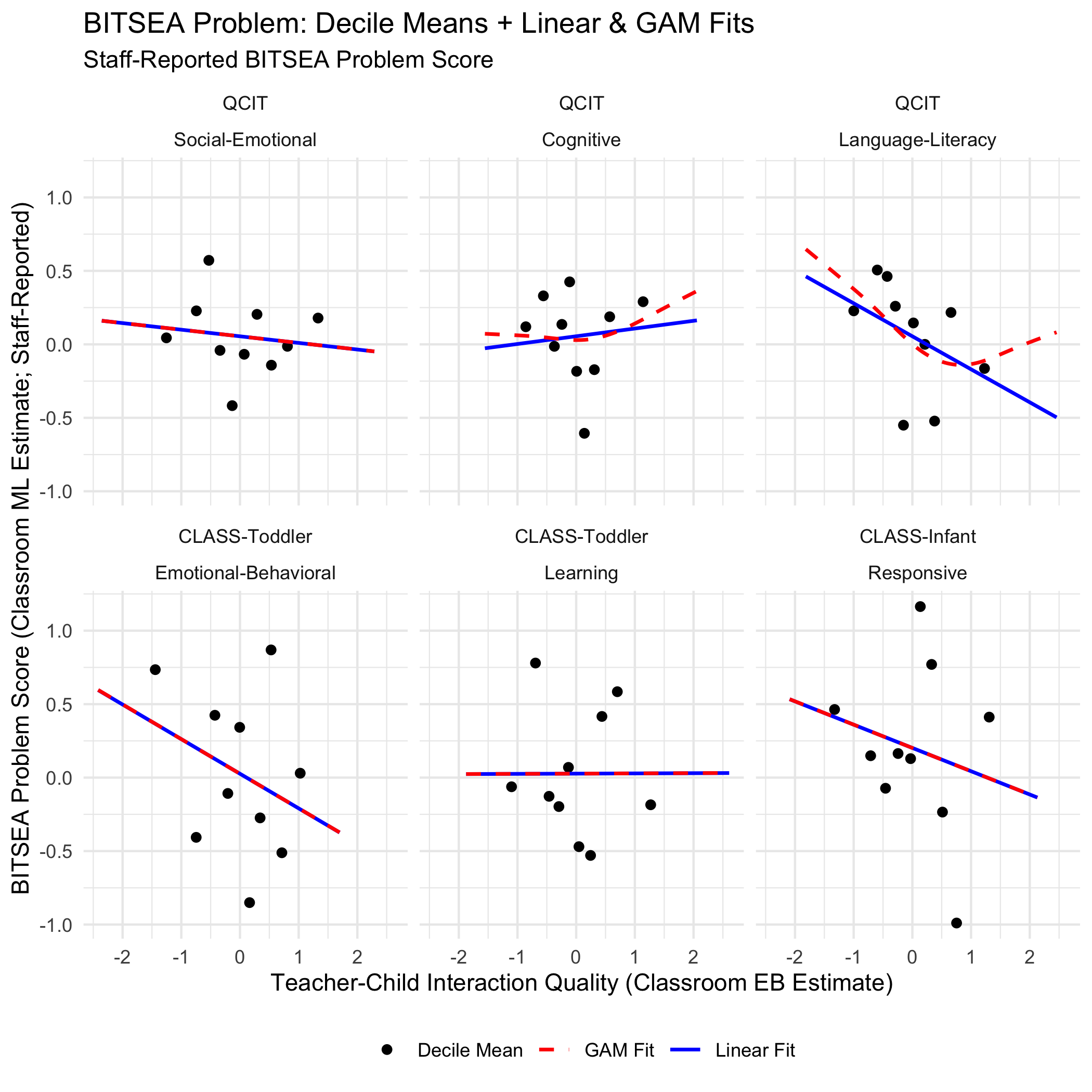}
\caption{Estimated dose--response curves for teacher-reported BITSEA Problem scores across QCIT and CLASS domains.}
\label{fig:F1}
\vspace{0.5em}

\noindent\footnotesize\textit{Note.} The $x$-axis represents teacher--child interaction quality measured as empirical Bayes (EB) estimates from the three-level GALAMM model. The $y$-axis represents center-mean-centered classroom outcomes. Black dots indicate decile means of the dose variable. Blue solid lines represent entropy-balanced weighted linear regression fits. Red dashed lines represent entropy-balanced weighted GAM fits.
\end{figure}

As shown in \cref{fig:F1}, the GAM curves (red dashed lines) closely track the linear fits (blue solid lines), with effective degrees of freedom (edf) near 1.0 for most domains, indicating minimal nonlinearity. Some domains exhibit slight negative slopes---particularly CLASS-T Emotional-Behavioral (coef $= -0.24$, $p = 0.17$) and QCIT Language-Literacy (coef $= -0.22$, $p = 0.22$)---suggesting a possible protective effect whereby higher quality is associated with fewer problem behaviors. However, these estimates remain statistically nonsignificant with wide confidence bands, precluding strong conclusions.

Several non-exclusive mechanisms may explain these null patterns. First, problem behaviors in infant--toddler samples can exhibit restricted range and low base rates, limiting power to detect modest associations at the classroom level. Second, after centering outcomes within centers, remaining variation in behavioral problems may reflect child- and family-level contexts only weakly coupled with observed classroom interaction processes. Third, as discussed in \cref{app:eb-predictions}, measurement error in the EB ``dose'' (factor-score uncertainty) can attenuate estimated dose--response slopes. Fourth, the theoretical framework (Table~1 in the main text) posits that emotional-behavioral support primarily promotes social-emotional competence; the pathway to \textit{reduced} problem behaviors may require a longer time horizon or involve indirect mechanisms that cross-sectional data cannot capture.

These considerations suggest that the absence of significant associations should not be interpreted as evidence that classroom process quality is unrelated to behavioral outcomes. Rather, the null findings may reflect measurement limitations, developmental timing, or the need for longitudinal designs tracing cumulative exposure effects.

\subsection{Robustness Check: Parent-Reported Child Outcomes}
\label{subsec:F2-parent-reported}

A potential threat to internal validity in the primary analyses is common-source bias: teachers both deliver classroom interactions (the ``dose'') and assess child outcomes, potentially inflating observed associations through shared-method variance. To probe this possibility, we replicated the dose--response analyses using parent-reported child outcomes, which are assessed independently by parents who observe children in home contexts.

\subsubsection{Analytic Sample for Parent-Reported Outcomes}
\label{subsubsec:F2-1-sample}

The parent-reported sample follows the same exclusion criteria as the teacher-reported sample (\cref{app:sample}), with the additional requirement that parents completed the relevant child assessment modules. Due to higher nonresponse rates for parent interviews, the analytic samples are smaller: 1,505 children (720 classrooms; 428 centers) for parent-reported CDI IRT, 1,696 children (751 classrooms; 437 centers) for parent-reported BITSEA Competence, and 1,704 children (751 classrooms; 437 centers) for parent-reported BITSEA Problem. The same entropy balancing weights---constructed using the 26 classroom-level covariates described in \cref{app:dose-response}---are applied to ensure methodological comparability with the primary results.

\subsubsection{Full Regression Results for Parent-Reported Outcomes}
\label{subsubsec:F2-2-results}

\Cref{tab:F1} presents the complete weighted linear and GAM estimates for parent-reported CDI IRT, BITSEA Competence, and BITSEA Problem scores across the six process-quality domains.

\begin{table}[htbp]
\centering
\caption{Weighted linear and GAM dose--response estimates for parent-reported child outcomes across QCIT and CLASS domains}
\label{tab:F1}
\small
\begin{adjustbox}{max width=\textwidth}
\begin{tabular}{@{}llrrrrc@{}}
\toprule
Response & Dose & Est. & SE & $p$-value & edf & GAM $p$ \\
\midrule
\multicolumn{7}{@{}l}{\textit{English CDI IRT Score (Parent-reported)}} \\
 & QCIT Social-Emotional       &  0.42 & 0.29 & 0.16    & 1.67 & 0.23 \\
 & QCIT Cognitive              &  1.11 & 0.39 & $<$0.01 & 1.00 & $<$0.01 \\
 & QCIT Language-Literacy      &  1.27 & 0.34 & $<$0.01 & 1.00 & $<$0.01 \\
 & CLASS-T Emotional-Behavioral&  0.59 & 0.30 & 0.05    & 1.00 & 0.05 \\
 & CLASS-T Learning            &  1.04 & 0.31 & $<$0.01 & 1.05 & $<$0.01 \\
 & CLASS-I Responsive          &  0.51 & 0.76 & 0.50    & 1.96 & 0.52 \\
\addlinespace
\multicolumn{7}{@{}l}{\textit{BITSEA Competence Score (Parent-reported)}} \\
 & QCIT Social-Emotional       &  0.03 & 0.07 & 0.68    & 1.00 & 0.68 \\
 & QCIT Cognitive              &  0.14 & 0.10 & 0.15    & 1.00 & 0.15 \\
 & QCIT Language-Literacy      &  0.14 & 0.09 & 0.12    & 1.59 & 0.20 \\
 & CLASS-T Emotional-Behavioral&  0.16 & 0.08 & 0.04    & 1.30 & 0.07 \\
 & CLASS-T Learning            &  0.14 & 0.08 & 0.11    & 1.00 & 0.11 \\
 & CLASS-I Responsive          & $-$0.15 & 0.22 & 0.49  & 1.00 & 0.49 \\
\addlinespace
\multicolumn{7}{@{}l}{\textit{BITSEA Problem Score (Parent-reported)}} \\
 & QCIT Social-Emotional       &  0.13 & 0.16 & 0.41    & 1.58 & 0.46 \\
 & QCIT Cognitive              &  0.39 & 0.21 & 0.06    & 2.30 & 0.05 \\
 & QCIT Language-Literacy      &  0.27 & 0.18 & 0.14    & 2.42 & 0.12 \\
 & CLASS-T Emotional-Behavioral& $-$0.22 & 0.18 & 0.22  & 1.00 & 0.22 \\
 & CLASS-T Learning            & $-$0.25 & 0.19 & 0.19  & 1.00 & 0.19 \\
 & CLASS-I Responsive          &  0.29 & 0.39 & 0.45    & 1.31 & 0.60 \\
\bottomrule
\end{tabular}
\end{adjustbox}
\vspace{0.5em}

\noindent\footnotesize\textit{Note.} Estimates from entropy-balanced weighted regressions adjusting for 26 classroom-level covariates (\cref{app:dose-response}). ``Dose'' = empirical Bayes estimates of latent process quality from the three-level GALAMM (\cref{app:galamm,app:eb-predictions}). Outcomes are center-mean-centered. edf = effective degrees of freedom (edf $\approx 1$ indicates linearity; edf $> 1$ indicates nonlinearity).
\end{table}

\subsubsection{Interpretation of Parent-Reported Results}
\label{subsubsec:F2-3-interpretation}

The parent-reported results largely reproduce the domain-matching patterns observed in the teacher-reported analyses, strengthening confidence that the findings reflect genuine associations rather than common-source artifacts.

For parent-reported CDI IRT scores, the key language/cognitive domains remain significant predictors: QCIT Cognitive (coef $= 1.11$, $p < 0.01$), QCIT Language-Literacy (coef $= 1.27$, $p < 0.01$), and CLASS-T Learning (coef $= 1.04$, $p < 0.01$). These effect sizes are comparable to those from teacher reports, and the GAM diagnostics indicate predominantly linear relationships (edf $\approx 1.00$). CLASS-T Emotional-Behavioral also shows a marginally significant positive association (coef $= 0.59$, $p = 0.05$), which was nonsignificant in the teacher-reported models---possibly reflecting that parents observe children in home contexts where the benefits of classroom emotional support manifest in more generalized language use.

For parent-reported BITSEA Competence, CLASS-T Emotional-Behavioral maintains its significant positive association (coef $= 0.16$, $p = 0.04$), consistent with the teacher-reported finding that emotional-behavioral support promotes social-emotional competence. The coefficient is slightly attenuated relative to the teacher-reported estimate (0.21), which is expected given that parent assessments capture child behaviors in a different context. Other domains show positive but nonsignificant associations, mirroring the teacher-reported pattern.

For parent-reported BITSEA Problem, the results remain largely nonsignificant, paralleling the teacher-reported null findings. QCIT Cognitive shows borderline evidence of nonlinearity (edf $= 2.30$, $p = 0.05$) with a positive linear slope (coef $= 0.39$, $p = 0.06$). This unexpected pattern should be interpreted cautiously given the marginal significance, multiple comparisons, and absence of a similar effect in teacher reports.

\subsubsection{Summary: Evidence Against Common-Source Bias}
\label{subsubsec:F2-4-summary}

Convergence between teacher- and parent-reported findings---two reporters who observe children in different contexts---reduces the likelihood that the primary domain-specific associations are artifacts of shared reporting methods. Parents and teachers have different relationships with children and complete assessments independently. Research consistently shows that parent--teacher concordance on child behavioral assessments is typically low to moderate, making the replication of domain-specific patterns across reporters particularly compelling.

The convergence of key findings---language/cognitive supports predicting communicative outcomes and emotional-behavioral support predicting social-emotional competence---across two independent informants strengthens the causal interpretation of the dose--response relationships. At the same time, the continued absence of significant effects for BITSEA Problem scores across both reporters reinforces the conclusion that classroom process quality, as measured by CLASS and QCIT, does not show short-term associations with reduced behavioral problems in this infant--toddler sample.


\end{document}